\setlist{nosep}
\newcommand{\superhard}{\textit{Super~Hard}}
\newcommand{\ultrahard}{\textit{Ultra~Hard}}
\newcommand{\statespace}{\mathbb{S}}
\newcommand{\state}{s}
\newcommand{\timespace}{\mathbb{T}}
\newcommand{\horizon}{T}
\newcommand{\timestep}{t}
\newcommand{\jointactionspace}{\mathbb{U}}
\newcommand{\jointaction}{\mathbf{u}}
\newcommand{\action}{u}
\newcommand{\statetransitionfunction}{P}
\newcommand{\rewardfunction}{R}
\newcommand{\reward}{r}
\newcommand{\jointobservationspace}{\mathbb{O}}
\newcommand{\jointobservation}{\mathbf{o}}
\newcommand{\observation}{o}
\newcommand{\jointobservationhistoryspace}{\mathbb{H}}
\newcommand{\jointobservationhistory}{\mathbf{h}}
\newcommand{\observationhistory}{h}
\newcommand{\observationfunction}{O}
\newcommand{\agentspace}{\mathbb{K}}
\newcommand{\agentcounter}{k}
\newcommand{\numberofagents}{K}
\newcommand{\discountfactor}{\gamma}
\newcommand{\policy}{\pi}
\newcommand{\tderror}{\delta}
\newcommand{\huberloss}{\mathcal{L}_\kappa}
\newcommand{\asymmetrichuberloss}{\rho}
\newcommand{\loss}{\mathcal{L}}
\newcommand{\utilityexp}{Q}
\newcommand{\utilityadv}{A}
\newcommand{\utilitystate}{V}
\newcommand{\utility}{Z}
\newcommand{\joint}{\mathrm{jt}}
\newcommand{\quantilefunction}{F^{-1}}
\newcommand{\quantile}{\omega}
\newcommand{\pdf}{f}
\newcommand{\cdf}{F}
\newcommand{\pdfmodelparameter}{\alpha}
\newcommand{\modelparameter}{\beta}
\newcommand{\inversecdf}{F^{-1}}
\newcommand{\numberofquantiles}{N}
\newcommand{\numberofquantilesamples}{N'}
\newcommand{\numberofquantiletestsamples}{\hat{N}}
\newcommand{\implicitquantilefunction}{g}
\newcommand{\stateembeddingfunction}{\psi}
\newcommand{\cosineembeddingfunction}{\phi}
\newcommand{\cosinedimension}{m}
\newcommand{\monotonicfunction}{M}
\newcommand{\meandecompositionfunction}{\Psi}
\newcommand{\shapedecompositionfunction}{\Phi}
\newcommand{\additivity}{\textit{Additivity}}
\newcommand{\monotonicity}{\textit{Monotonicity}}
\newcommand{\diql}{DIQL}
\newcommand{\ddn}{DDN}
\newcommand{\dmix}{DMIX}
\newcommand{\dplex}{DPLEX}
\newcommand{\digm}{DIGM}
\newcommand{\eqd}{\overset{D}{=}}
\newcommand{\iter}{i}
\definecolor{myblue}{HTML}{0000B5}
\definecolor{crimson}{HTML}{B30000}
\definecolor{indigo}{HTML}{4B0082}
\newcommand{\bb}[1]{\textcolor{myblue}{#1}}
\newcommand{\cc}[1]{\textcolor{crimson}{#1}}
\newcounter{theorem0}
\newcounter{proposition0}
\newcounter{definition0}
\newtheorem{theorem}[theorem0]{Theorem}
\newtheorem{proposition}[proposition0]{Proposition}
\newtheorem{definition}[definition0]{Definition}
\begin{document}

\title{A Unified Framework for Factorizing Distributional Value Functions for Multi-Agent Reinforcement Learning}

\author{\name Wei-Fang Sun\thanks{Wei-Fang Sun contributed to the work during his NVIDIA internship.} \email j3soon@gapp.nthu.edu.tw \\
       \addr Elsa Lab, Department of Computer Science\\
       National Tsing Hua University\\
       No. 101, Section 2, Kuang-Fu Road, Hsinchu City 30013, Taiwan
       \AND
       \name Cheng-Kuang Lee \email ckl@nvidia.com \\
       \addr NVIDIA AI Technology Center\\
       NVIDIA Corporation\\
       Santa Clara, CA 94305 USA
       \AND
       \name Simon See \email ssee@nvidia.com \\
       \addr NVIDIA AI Technology Center\\
       NVIDIA Corporation\\
       Santa Clara, CA 94305 USA
       \AND
       \name Chun-Yi Lee\thanks{Corresponding Author.} \email cylee@cs.nthu.edu.tw \\
       \addr Elsa Lab, Department of Computer Science\\
       National Tsing Hua University\\
       No. 101, Section 2, Kuang-Fu Road, Hsinchu City 30013, Taiwan
}

\editor{Shipra Agrawal}

\maketitle

\begin{abstract}
In fully cooperative multi-agent reinforcement learning (MARL) settings, environments are highly stochastic due to the partial observability of each agent and the continuously changing policies of other agents. 
To address the above issues, we proposed a unified framework, called DFAC, for integrating distributional RL with value function factorization methods. This framework generalizes expected value function factorization methods to enable the factorization of return distributions. To validate DFAC, we first demonstrate its ability to factorize the value functions of a simple matrix game with stochastic rewards. Then, we perform experiments on all \superhard{} maps of the StarCraft Multi-Agent Challenge and six self-designed \ultrahard{} maps, showing that DFAC is able to outperform a number of baselines.
\end{abstract}

\begin{keywords}
  Reinforcement Learning, Multi-Agent RL, Distributional RL, Value Function Factorization
\end{keywords}

\section{Introduction}
\label{sec:introduction}

In multi-agent reinforcement learning (MARL), one of the popular research directions is to enhance the training procedure of fully cooperative and decentralized agents. Examples of such agents include a fleet of unmanned aerial vehicles (UAVs), a group of autonomous cars, etc. This research direction aims to develop a decentralized and cooperative behavior policy for each agent, and is especially difficult for MARL settings without an explicit communication channel. The most straightforward approach is independent Q-learning (IQL)~\citep{Tan1993IQL}, where each agent is trained independently with its own behavior policy aimed to optimize the total return. Nevertheless, each agent's policy may not converge owing to two main difficulties: (1) non-stationary environments caused by the changing behaviors of other agents, and (2) spurious reward signals originated from the actions of other agents. In addition, an agent’s partial observability of an environment further exacerbates the above issues.
On the other hand, another line of research focuses on fully utilizing all agents' information to learn a joint cooperative policy with a single monolithic network. Such methods, named fully-centralized learners, eliminate the policy convergence issue. Unfortunately, they are not scalable and are unable to be executed in a decentralized fashion. In order to strike a balance between these two types of approaches, a number of MARL researchers turned their attention to centralized training with decentralized execution (CTDE) methods, with an objective to stabilize the training procedure while maintaining the agents' abilities for decentralized execution~\citep{Oliehoek2016CTDE}.
Among these CTDE approaches, value function factorization methods~\citep{Sunehag2018VDN,Rashid2018QMIX,Son2019QTRAN} are especially promising in terms of their superior performances and data efficiency~\citep{Samvelyan2019SMAC}.


Value function factorization methods aim to learn a factorization of a certain joint value function based on all agents' information during centralized training, so as to enable decentralized execution based on the learned factorization.
These methods introduce the assumption of individual-global-max (IGM)~\citep{Son2019QTRAN}, which assumes that each agent's optimal actions result in the optimal joint actions of the entire group.
Based on IGM, the total return of a group of agents can be factorized into separate utility functions~\citep{Guestrin2001Utility} (or simply `\textit{utility}' hereafter) for each agent. The utilities allow the agents to independently derive their own optimal actions during execution, and deliver promising performance in StarCraft Multi-Agent Challenge (SMAC)~\citep{Samvelyan2019SMAC}. Unfortunately, current value function factorization methods only concentrate on estimating the expectations of the utilities, overlooking the additional information contained in the full return distributions. Such information, nevertheless, has been demonstrated beneficial for policy learning in the recent literature~\citep{Lyle2019Comparative}.


In the past few years, distributional RL has been empirically shown to enhance value function estimation in various single-agent RL (SARL) domains~\citep{Bellemare2017C51,Dabney2018QR-DQN,Dabney2018IQN,Rowland2019ER-DQN,Yang2019FQF}. Instead of estimating a single scalar Q-value, it approximates the probability distribution of a total return by either a categorical distribution~\citep{Bellemare2017C51} or a quantile function~\citep{Dabney2018QR-DQN,Dabney2018IQN}. Even though the above methods may be beneficial to the MARL domain due to the ability to capture uncertainty, it is inherently incompatible to expected value function factorization methods (e.g., value decomposition network (VDN)~\citep{Sunehag2018VDN}, monotonic mixing network (QMIX)~\citep{Rashid2018QMIX}, and duplex dueling mixing network (QPLEX)~\citep{Wang2020QPLEX}). The incompatibility arises from two aspects: (1) maintaining IGM in a distributional form, and (2) factorizing the probability distribution of a total return into individual utilities. As a result, an effective and efficient approach that is able to solve the incompatibility is crucial and necessary for bridging the gap between value function factorization methods and distributional RL.


In this paper, we propose a \textbf{D}istributional Value Function \textbf{Fac}torization (DFAC) framework, to efficiently integrate value function factorization methods with distributional RL.
DFAC solves the incompatibility by two techniques: (1) Mean-Shape Decomposition and (2) Quantile Mixture. The former allows the generalization of expected value function factorization methods (e.g., VDN, QMIX, and QPLEX) to their DFAC variants without violating IGM. The latter allows a total return distribution to be factorized into individual utility distributions in a computationally efficient manner.  We slightly abuse the term `factorizing return distributions' here, which refer to factorizing the value function into individual utilities, and should not be confused with factorizing a joint distribution into a product of marginal distributions.
To validate the effectiveness of DFAC, we first demonstrate the ability of factorizing the return distribution of a matrix game with stochastic rewards.
Then, we perform experiments on all \superhard{} maps in SMAC, and show that DFAC offers beneficial impacts on the baseline methods.
Furthermore, we designed six \ultrahard{} maps to further validate the effectiveness of DFAC. In summary, the primary contribution is the introduction of DFAC for bridging the gap between distributional RL and value function factorization methods efficiently by mean-shape decomposition and quantile mixture. Please note that this work is an extended version of~\cite{sun21dfac}.

\section{Related Works}
\label{sec:sup:related_work}

The related works of this paper can be grouped into four categories as follows.

\textbf{Distributional Independent Learners.} Independent learners allow agents to learn a cooperative policy based on their own action-observation histories, without assuming additional information from other agents.
The major drawback of this approach is the miscoordination among agents due to the lack of information exchange during training. Several works incorporate distrbutional RL methods to mitigate this issue.
\cite{Da2019IQLC51} extended the simplest IQL~\citep{Tan1993IQL}, which simply performs expected Q-learning for each agent, to multiple independent C51 learners~\citep{Bellemare2017C51}. This approach performs distributional Q-learning for each agent, and is able to deliver superior results.
\cite{Lyu2020LikelihoodQuantile} combined Hysteretic Q-Learning (HQL)~\citep{Matignon2007Hysteretic,omidshafiei2017deep} with implicit quantile networks (IQN)~\citep{Dabney2018IQN} to separate the impacts of miscoordination among agents and the environmental stochasticity. \cite{Rowland2021TemporalDA} proposed a framework to investigate the difference between asymmetric update methods such as HQL, and distributional RL methods such as  expectile distributional reinforcement learning (EDRL)~\citep{Rowland2019ER-DQN}.

\textbf{CTDE Value Function Factorization.} Value function factorization methods aim to learn a factorization of the joint value function for a group of agents during centralized training. The learned factorization allows the policy of each agent to be executed in a decentralized fashion.
The first proposed method in this category is VDN~\citep{Sunehag2018VDN}. VDN assumes the \additivity{} property, which constrains a joint action-value function to be the sum of multiple individual utilities. As a result, VDN employs a joint value network to accomplish such an objective. 
QMIX~\citep{Rashid2018QMIX} extends VDN to a larger set of tasks by assuming the \monotonicity{} property, which requires a joint action-value function to be equivalent to at least one of the monotonic transformations of individual utilities. Based on this relaxed assumption, QMIX utilizes a monotonic mixing network to combine individual utilities in a monotonic fashion. QTRAN~\citep{Son2019QTRAN} further transforms the joint action-value function into another easily factorizable one by introducing two additional regularization losses, and allows an even larger set of tasks to be factorized correctly. However, QTRAN is unstable during training~\citep{Rashid2018QMIX}, which motivates several subsequent works dedicated to extending QMIX to solve the same set of tasks as QTRAN while maintaining the training stability. For example, W-QMIX~\citep{Rashid2020WQMiX} treats QMIX as an operator, and projects non-monotonic tasks to QMIX's problem domain. Qatten~\citep{Yang2020Qatten} introduces a multi-head attention mechanism to approximate a joint action-value function. QPLEX~\citep{Wang2020QPLEX} utilizes a duplex dueling mixing network for stabilizing QTRAN's training process.

\textbf{Other CTDE Methods.} Aside from the general CTDE value function factorization methods, a number of methods have been proposed to introduce additional assumptions to further enhance the general CTDE methods. They include the use of intrinsic reward signals~\citep{Du2019LIIR}, explicit communication channels~\citep{Zhang2019VBN,Wang2019NDQ}, common knowledge shared among agents~\citep{De2019MACKRL,Wang2020ROMA}, prior knowledge about environments~\citep{Wang2020ASN}, and so on. In addition, there has been a line of research that focuses on an actor-critic style training. For example, Multi-Agent Deep Deterministic Policy Gradient (MADDPG)~\citep{Lowe2017MADDPG} uses a centralized critic to improve the value function approximation ability. The critic is augmented with extra information about all agents' policies, while the actor only has access to its local information. Counterfactual Multi-Agent Policy Gradients (COMA)~\citep{Foerster2018COMA} leverages a centralized critic that approximates a baseline for each agent. The baseline enables an agent to marginalize out its action while keeping the other agents' actions fixed, allowing it to reason about counterfactual scenarios to facilitate its learning. Multi-Actor-Attention-Critic (MAAC)~\citep{Iqbal2019MAAC} constructs a critic for each agent with an attention mechanism, such that the critics are able to selectively pay attention to the information from the other agents. Moreover, Factored Multi-Agent Centralized Policy Gradients (FACMAC)~\citep{Peng2021FACMAC} incorporates value function factorization methods into the centralized critic in MADDPG to further improve its performance. Please note that in this paper, we focus on improving value-based estimation for CTDE training. These methods are orthogonal to our work.

\textbf{Distributional Value Function Factorization.} Similar to our work, QR-MIX~\citep{Hu2020QRMIX} and RMIX~\citep{Anonymous2020RMIX} also utilized distributional RL to increase their performance in SMAC. QR-MIX is a QMIX variant which models the mixing network as an IQN. Instead of decomposing the joint distribution into individual utility distributions, it treats the utilities as Dirac distributions. On the other hand, RMIX is also a QMIX variant, but replaces the inputs and outputs of the mixing network with the Conditional Value at Risk (CVaR) measures. The major difference between our work and the above two methods is the ability to factorize the joint value distribution into individual utility distributions, while these two methods are not capable of factorizing distributions.

\section{Background}
\label{sec:background}

In this section, we introduce the essential background material for understanding the contents of this paper. We first define the problem formulation of cooperative MARL and CTDE. Next, we describe the conventional formulation of IGM and the value function factorization methods. Then, we walk through the concepts of distributional RL, quantile function, as well as quantile regression, which are the fundamental concepts frequently mentioned in this paper. After that, we explain IQN, a key approach adopted in this paper for approximating quantiles.  Finally, we bring out the concept of quantile mixture, which is leveraged by DFAC for factorizing return distributions.

\subsection{Cooperative MARL and CTDE}
\label{subsec:background_cooperative_marl_and_ctde}

In this work, we consider a fully cooperative MARL environment modeled as a decentralized and partially observable Markov Decision Process (Dec-POMDP)~\citep{Oliehoek2016CTDE} with stocastic rewards, which is described as a tuple $\langle\statespace{},\agentspace{},\jointobservationspace_{\joint},\jointactionspace_{\joint},\statetransitionfunction{},\observationfunction{},\rewardfunction{},\discountfactor{}\rangle$ and is defined as follows:

\begin{itemize}
\setlength\itemsep{0em}

\item $\statespace{}$ is the finite set of global states in the environment, where $\state{}'\in \statespace{}$ denotes the next state of the current state $\state{}\in \statespace{}$. The state information is optionally available during training, but is not available to the agents during execution.

\item $\agentspace{}=\{1,...,\numberofagents{}\}$ is the set of $\numberofagents{}$ agents. We use $\agentcounter{}\in\agentspace{}$ to denote the index of the agent.

\item $\jointobservationspace_{\joint}=\Pi_{\agentcounter{}\in\agentspace{}}\jointobservationspace{}_{\agentcounter{}}$ is the set of joint observations. At each timestep, a joint observation $\jointobservation{}=\langle\observation{}_1,...,\observation{}_\numberofagents{}\rangle\in\jointobservationspace_{\joint}$ is provided by the environment to the $\numberofagents{}$ agents. Each agent $\agentcounter{}$ is only able to observe its own individual observation $\observation{}_{\agentcounter{}}\in\jointobservationspace{}_{\agentcounter{}}$.

\item $\jointobservationhistoryspace_{\joint}=\Pi_{\agentcounter{}\in\agentspace{}}\jointobservationhistoryspace{}_{\agentcounter{}}$ is the set of joint action-observation histories. The joint history $\jointobservationhistory{}=\langle\observationhistory{}_1,...,\observationhistory{}_\numberofagents{}\rangle\in\jointobservationhistoryspace_{\joint}$ concatenates all the perceived observations and the performed actions before a certain timestep, where $\observationhistory{}_{\agentcounter{}}\in\jointobservationhistoryspace{}_{\agentcounter{}}$ represents the action-observation history from agent $\agentcounter{}$.

\item $\jointactionspace_{\joint}=\Pi_{\agentcounter{}\in\agentspace{}}\jointactionspace{}_{\agentcounter{}}$ is the set of joint actions. At each timestep, the entire group of the agents take a joint action $\jointaction{}$, where $\jointaction{}=\langle\action{}_1,...,\action{}_{\numberofagents{}}\rangle\in\jointactionspace_{\joint}$. The individual action $\action{}_{\agentcounter{}}\in\jointactionspace_{\agentcounter{}}$ of each agent $\agentcounter{}$ is determined based on its stochastic policy $\policy{}_{\agentcounter{}}(\action{}_{\agentcounter{}}|\observationhistory{}_{\agentcounter{}}): \jointobservationhistoryspace{}_{\agentcounter{}}\times \jointactionspace{}_{\agentcounter{}}\rightarrow[0,1]$, expressed as $\action{}_{\agentcounter{}}\sim\policy{}_{\agentcounter{}}(\cdot|\observationhistory{}_{\agentcounter{}})$. Similarly, in single agent scenarios, we use $\action$ and $\action'$ to denote the actions of the agent at state $\state$ and $\state'$ under its policy $\policy$, respectively.

\item $\timespace{}=\{1,...,\horizon{}\}$ represents the set of timesteps with horizon $\horizon{}$, where the index of the current timestep is denoted as $\timestep{}\in\timespace{}$. Please note that $\state^{\timestep}$, $\jointobservation^{\timestep}$, $\jointobservationhistory^{\timestep}$, and $\jointaction^{\timestep}$ correspond to the state, joint observation, joint action-observation history, and joint action
at timestep $\timestep$, respectively.

\item The transition function $\statetransitionfunction{}(\state{}'|\state{},\jointaction{}):\statespace{}\times\jointactionspace_{\joint}\times \statespace{}\rightarrow[0,1]$ specifies the state transition probabilities. Given $\state$ and $\jointaction$, the next state is represented as $\state{}'\sim\statetransitionfunction{}(\cdot|\state{},\jointaction{})$.

\item The observation function $\observationfunction{}({\jointobservation{}}|\state{}):\jointobservationspace_{\joint}\times\statespace{}\rightarrow[0,1]$ specifies the joint observation probabilities. Given $\state{}$, the joint observation is represented as $\jointobservation{}\sim\observationfunction{}(\cdot|\state{})$.

\item $\rewardfunction{}(\reward|\state{},\jointaction{}):\statespace{}\times\jointactionspace_{\joint}\times\mathbb{R}\rightarrow[0,1]$ is the joint reward function shared among all the agents. Given $\state$, the team reward is expressed as $\reward\sim\rewardfunction{}(\cdot|\state{},\jointaction{})$.

\item $\discountfactor{}\in\mathbb{R}$ is the discount factor with its value within $[0, 1)$.
\end{itemize}
Under such an MARL formulation, this work concentrates on CTDE-based value function factorization methods, where the agents are trained in a centralized fashion and executed in a decentralized manner. In other words, the joint action-observation history $\jointobservationhistory{}$ is available during the learning processes of individual policies $[\policy{}_{\agentcounter{}}]_{\agentcounter{}\in\agentspace{}}$. During execution, each agent's policy $\policy{}_{\agentcounter{}}$ only conditions on its observation history $\observationhistory{}_{\agentcounter{}}$.

\subsection{IGM and Factorizable Tasks}
\label{subsec:background_igm_and_factorizable_task}

IGM is necessary for value function factorization~\citep{Son2019QTRAN}. For a joint action-value function $\utilityexp{}_{\joint{}}(\jointobservationhistory{},\jointaction{}): \jointobservationhistoryspace_{\joint} \times \jointactionspace_{\joint} \rightarrow\mathbb{R}$, if there exist $\numberofagents{}$ separate utility functions $[\utilityexp{}_\agentcounter{}(\observationhistory{}_\agentcounter{},\action{}_\agentcounter{}): \jointobservationhistoryspace{}_{\agentcounter{}} \times \jointactionspace{}_{\agentcounter{}} \rightarrow\mathbb{R}]_{\agentcounter{}\in\agentspace{}}$ such that the following condition holds:
\begin{equation}
\arg\max_\jointaction{} \utilityexp{}_{\joint{}}(\jointobservationhistory{},\jointaction{}) =
\begin{pmatrix}
\arg\max_{\action{}_1} \utilityexp{}_1(\observationhistory{}_1,\action{}_1)\\
\vdots \\
\arg\max_{\action{}_\numberofagents{}} \utilityexp{}_\numberofagents{}(\observationhistory{}_\numberofagents{},\action{}_\numberofagents{})
\end{pmatrix},
\label{eq:igm}
\end{equation}
then $[\utilityexp{}_\agentcounter{}]_{\agentcounter{}\in\agentspace{}}$ are said to satisfy IGM for $\utilityexp{}_{\joint{}}$ under $\jointobservationhistory{}$. 
Under this condition, we also say that $\utilityexp{}_{\joint{}}(\jointobservationhistory{},\jointaction{})$ is factorized by $[\utilityexp{}_\agentcounter{}(\observationhistory{}_\agentcounter{},\action{}_\agentcounter{})]_{\agentcounter{}\in\agentspace{}}$~\citep{Son2019QTRAN}. If $\utilityexp{}_{\joint{}}$ in a given task is factorizable under all $\jointobservationhistory{}\in \jointobservationhistoryspace_{\joint}$, we say that the task is factorizable. Intuitively, factorizable tasks indicate that there exists a factorization such that each agent can select the greedy action according to their individual utilities $[\utilityexp{}_\agentcounter{}]_{\agentcounter{}\in\agentspace{}}$ independently in a decentralized fashion. This enables the optimal individual actions to implicitly achieve the optimal joint action across the $\numberofagents{}$ agents. Since there is no individual reward, the factorized utilities do not estimate expected returns on their own \citep{Guestrin2001Utility} and are different from the value function definition commonly used in SARL.


\subsection{Advantage-based IGM}
\label{subsec:advantage_based_igm}

In addition to Q-value based IGM described in Section~\ref{subsec:background_igm_and_factorizable_task}, there exist a variant, called \textit{advantage-based IGM}~\citep{Wang2020QPLEX}, which is defined as the following:
\begin{equation}
\arg\max_\jointaction{} \utilityadv{}_{\joint{}}(\jointobservationhistory{},\jointaction{}) =
\begin{pmatrix}
\arg\max_{\action{}_1} \utilityadv{}_1(\observationhistory{}_1,\action{}_1)\\
\vdots \\
\arg\max_{\action{}_\numberofagents{}} \utilityadv{}_\numberofagents{}(\observationhistory{}_\numberofagents{},\action{}_\numberofagents{})
\end{pmatrix},
\label{eq:advantage-igm}
\end{equation}
where the advantage functions $[A_\agentcounter{}(\observationhistory{}_\agentcounter{},\action{}_\agentcounter{})]_{\agentcounter{}\in\agentspace{}}$ can be derived from $[Q_\agentcounter{}(\observationhistory{}_\agentcounter{},\action{}_\agentcounter{})]_{\agentcounter{}\in\agentspace{}}$ and the state-value functions $[V_\agentcounter{}(\observationhistory{}_\agentcounter{})]_{\agentcounter{}\in\agentspace{}}$ as:

\begin{equation}
A_\agentcounter{}(\observationhistory{}_\agentcounter{},\action{}_\agentcounter{})=Q_\agentcounter{}(\observationhistory{}_\agentcounter{},\action{}_\agentcounter{})-V_\agentcounter{}(\observationhistory{}_\agentcounter{}), \forall\agentcounter{}\in\agentspace{},
\end{equation}
\begin{equation}
V_\agentcounter{}(\observationhistory{}_\agentcounter{})=\max_{\action{}_\agentcounter{}}Q_\agentcounter{}(\observationhistory{}_\agentcounter{},\action{}_\agentcounter{}), \forall\agentcounter{}\in\agentspace{}.
\end{equation}
Eq.~(\ref{eq:advantage-igm}) has been proved to be equivalent to Eq.~(\ref{eq:igm})~\citep{Wang2020QPLEX}, since the argmax operations over the actions do not depend on $[V_\agentcounter{}(\observationhistory{}_\agentcounter{})]_{\agentcounter{}\in\agentspace{}}$, but only on $[A_\agentcounter{}(\observationhistory{}_\agentcounter{},\action{}_\agentcounter{})]_{\agentcounter{}\in\agentspace{}}$.


\subsection{Value Function Factorization Methods}
\label{subsec:background_value_factorization_methods}

Based on IGM, value function factorization methods enable centralized training for factorizable tasks, while maintaining the ability for decentralized execution. In this work, we consider three such methods: VDN, QMIX, and QPLEX, where their expressive power (or expressiveness) can be sorted as: QPLEX $>$ QMIX $>$ VDN. A factorization function being more expressive means that it is able to solve a larger subset of factorizable tasks.
VDN solves a subset of factorizable tasks that satisfies \additivity{} by defining the factorization function as the sum of individual utilities:
\begin{equation} \utilityexp{}_{\joint{}}(\jointobservationhistory{},\jointaction{}) = \sum^{\numberofagents{}}_{\agentcounter{}=1} \utilityexp{}_\agentcounter{}(\observationhistory{}_\agentcounter{},\action{}_\agentcounter{}).
\label{eq:additivity}
\end{equation}
QMIX solves a larger subset of factorizable tasks that satisfies \monotonicity{} by defining the factorization function as a monotonic combination of the individual utilities:
\begin{equation} \utilityexp{}_{\joint{}}(\jointobservationhistory{},\jointaction{}) = \monotonicfunction(\utilityexp{}_1(\observationhistory{}_1,\action{}_1),...,\utilityexp{}_\numberofagents{}(\observationhistory{}_\numberofagents{},\action{}_\numberofagents{})\vert\state),
\label{eq:monotonicity}
\end{equation}
where $\monotonicfunction$ is a learnable monotonic mixing network that satisfies $\frac{\partial \monotonicfunction}{\partial \utilityexp{}_\agentcounter{}}\ge 0, \forall \agentcounter{}\in\agentspace{}$, and can condition on the state $\state$ if the information is available during training.
QPLEX solves all factorizable tasks~\citep{Wang2020QPLEX} by defining the factorization function as follows:
\begin{equation}
\utilityexp{}_{\joint{}}(\jointobservationhistory{},\jointaction{}) = \sum^{\numberofagents{}}_{\agentcounter{}=1} \utilitystate_\agentcounter{}(\jointobservationhistory{}) + \sum^{\numberofagents{}}_{\agentcounter{}=1}\lambda_\agentcounter{}(\jointobservationhistory{},\jointaction{}) \utilityadv_\agentcounter{}(\jointobservationhistory{},\action{}_\agentcounter{}).
\label{eq:qplex}
\end{equation}
where $\utilitystate_\agentcounter{}(\jointobservationhistory{})=\monotonicfunction_\agentcounter{}(\utilitystate_\agentcounter{}(\observationhistory{}_\agentcounter{})\vert\state)$, $\utilityadv_\agentcounter{}(\jointobservationhistory{},\action{}_\agentcounter{})=\monotonicfunction_\agentcounter{}(\utilityadv_\agentcounter{}(\observationhistory{}_\agentcounter{},\action{}_\agentcounter{})\vert\state)$, and $\lambda_\agentcounter{}(\jointobservationhistory{},\jointaction{})>0$. In addition to the learnable monotonic networks $[\monotonicfunction_\agentcounter]_{\agentcounter{}\in\agentspace{}}$, such a factorization further learns a positive weight $\lambda_\agentcounter{}$ for each individual advantage. This flexibility enables QPLEX to correct the discrepancy between the joint Q-value and the individual utilities when a task does not satisfy \additivity{}.
All of these three factorization functions satisfy the IGM condition~\citep{Son2019QTRAN}, which indicates that the individual utilities $[\utilityexp{}^{(t)}_\agentcounter{}]_{\agentcounter{}\in\agentspace{}}$ satisfy IGM for $\utilityexp{}^{(t)}_{\joint{}}$ under all $\jointobservationhistory{}$ throughout the entire training process (i.e., for all $t$). The joint Q-value $\utilityexp{}_{\joint{}}$ is learned iteratively by updating the joint network with the Bellman optimality operator $\mathcal{T}^*$ based on the joint reward signal:
\begin{align}
    \utilityexp{}^{(\iter+1)}_{\joint{}}(\jointobservationhistory{},\jointaction{}) \leftarrow \mathcal{T}^*\utilityexp{}^{(\iter)}_{\joint{}}(\jointobservationhistory{},\jointaction{}),
\label{eq:joint_q_update}
\end{align}
where $\iter$ denotes the index of iterations, $\mathcal{T}^*\utilityexp{}^{(\iter)}_{\joint{}}(\jointobservationhistory{},\jointaction{}) = \mathbb{E}[\rewardfunction{}(\state{},\jointaction{})]+\gamma \utilityexp{}^{(\iter)}_{\joint{}}(\state{}',\jointaction{}'^*)$, and $\jointaction{}'^*=\arg\max_{\jointaction'}\utilityexp{}^{(\iter)}_{\joint{}}(\state{}',\jointaction')$ is the optimal action at state $\state{}'$. Eq.~(\ref{eq:joint_q_update}) enables the joint network to be trained in the same fashion as the single-agent DQN~\citep{Mnih2015DQN}, allowing $\utilityexp{}_{\joint{}}$ to approximate the optimal joint Q-value function $\utilityexp{}^*_{\joint{}}$ while satisfying the IGM condition. This update process of minimizing the Temporal Difference (TD) error implicitly realizes an effective counterfactual credit assignment by factorizing $\utilityexp{}_{\joint{}}$ into individual utilities $[\utilityexp{}_\agentcounter{}]_{\agentcounter{}\in\agentspace{}}$~\citep{wang2021towards}.
The factorized utilities are then utilized to determine each agent's policy during execution.

The representational capacity of the joint network is determined by two factors: (1) the network's (or the model's) capacity, which is limited by the number of learnable parameters, and (2) the expressiveness of a factorization function, which is dependent on the selected factorization method.
For example, QPLEX has been proved to be more expressive than QMIX~\citep{Wang2020QPLEX}. Due to the theoretical limitation of QMIX's expressiveness, there exist certain tasks that QMIX are unable to solve by simply minimizing the TD errors of the joint Q-values, even if QMIX is equipped with unlimited model capacity.


\subsection{Distributional RL}
\label{subsec:background_distributional_rl}

For notational simplicity, we consider a degenerated case with only a single agent, and the environment is fully observable until the end of Section~\ref{subsec:background_implicit_quantile_network}. Distributional RL generalizes classic expected RL methods by capturing the full return distribution $\utility{}(\state{},\action{})$ instead of the expected return $\utilityexp{}(\state{},\action{})$, allowing an agent to learn better state representations~\citep{BDR2022} that lead to better performance in various single-agent domains~\citep{Bellemare2017C51,Bellemare2019S51,Dabney2018QR-DQN,Dabney2018IQN,Rowland2019ER-DQN,Yang2019FQF,Nguyen2021MMDRL}. Modeling the full return distributions can be viewed as an auxiliary task that benefits representation learning during training, where the information other than the expected returns can be discarded afterward. In addition, distributional RL enables applications that require the information of the full return distributions, such as exploration~\citep{Nikolov2019IDS,Zhang2019QUOTA,Mavrin2019DLTV} and risk-aware policies~\citep{Xia2020risk}. Distributional RL defines the distributional Bellman operator $\mathcal{T}^\pi$ as follows:
\begin{equation}
\mathcal{T}^\pi \utility{}(\state{},\action{})\overset{D}{=}\rewardfunction{}(\state{},\action{})+\gamma \utility{}(\state{}',\action{}'),
\end{equation}
and the distributional Bellman optimality operator $\mathcal{T}^*$ as:
\begin{equation}
\mathcal{T}^* \utility{}(\state{},\action{})\overset{D}{=}\rewardfunction{}(\state{},\action{})+\gamma \utility{}(\state{}',\action{}'^*),
\end{equation}
where $\action{}'^*=\arg\max_{\action'}\mathbb{E}[\utility{}(\state{}',\action')]$ is the optimal action at state $\state{}'$, and the expression $X\overset{D}{=}Y$ denotes that random variables $X$ and $Y$ follow the same distribution. Given some initial distribution $Z_0$, applying different Bellman operators lead to different results.
If $\mathcal{T}^\pi$ is applied repeatedly, $\utility{}$ converges to $\utility{}^\pi$ in $p$-Wasserstein distance for all $p\in[1,\infty)$ under $\pi$.
On the other hand, if $\mathcal{T}^*$ is applied instead, $\utility{}$ alternates among the optimal return distributions in the set $\mathcal{\utility{}}^*:=\{\utility^{\pi^*}:\pi^*\in\Pi^*\}$, where $\Pi^*$ denotes the set of optimal policies~\citep{Bellemare2017C51}. The $p$-Wasserstein distance between the probability distributions of random variables $X$, $Y$ is defined as:
\begin{equation}
W_p(X,Y)=\left(\int_0^1|\inversecdf_X(\quantile)-\inversecdf_Y(\quantile)|^p \mathrm{d}\quantile\right)^{1/p},
\end{equation}
where $(\inversecdf_X,\inversecdf_Y)$ are quantile functions of $(X,Y)$.


\subsection{Categorical Distribution and Heuristic Projection}

The distributional Bellman optimality operator $\mathcal{T}^*$ lays the foundation for C51~\citep{Bellemare2017C51}, the first distributional RL algorithm. It models the return distribution as a categorical distribution, where the number of atoms $n$ (i.e., categories) within the range $[V_\mathrm{MIN}, V_\mathrm{MAX}]$ and the distance between them $\triangle z=\frac{V_\mathrm{MAX}-V_\mathrm{MIN}}{n-1}$ are hyperparameters.
Since C51 can only approximate the return distributions with a finite support (i.e., the set of atoms: $\{z_i = V_\mathrm{MIN} + i\triangle z : 0\le i < n\}$), a heuristic projection is required to project the Bellman target distribution onto the pre-defined support.
Specifically, C51 aims to minimize the KL divergence:
\begin{equation}
\loss{}(\state{},\action{},\reward{},\state{}') = D_{\mathrm{KL}}(\mathrm{proj}(\mathcal{T}^* \utility{}(\state{},\action{}))||\utility{}(\state{},\action)),
\end{equation}
where the heuristic projection function $\mathrm{proj}(\cdot)$ approximates $\mathcal{T}^* \utility{}(\state{},\action{})$ by distributing the probability of each atoms to their immediate neighbors on the support of C51:
\begin{equation}
[\mathrm{proj}(\mathcal{T}^* \utility{}(\state{},\action{}))]_i = \sum_{j=0}^{n-1} \left[1-\frac{|[\mathcal{T}^* z_j]^{V_\mathrm{MAX}}_{V_\mathrm{MIN}} - z_i|}{\triangle z}\right]^1_0 p_j(\state{}',\action{}'^*),
\label{eq:heuristic_projection}
\end{equation}
where $[\cdot]_a^b$ is a clipping function that bounds its argument in $[a,b]$, and $p_j(\state{}',\action{}'^*)$ is the probability mass of the $j^{\mathrm{th}}$ atom. Unfortunately, the heuristic projection used in Eq.~(\ref{eq:heuristic_projection}) introduces variance~\citep{Rowland2019ER-DQN} and does not guarantee convergence~\citep{Dabney2018QR-DQN}. 
In addition, C51 usually requires a large number of categories, and often necessitates re-adjusting the support range for tasks with different reward magnitudes, making it inappropriate for practical usages.
Therefore, its follow-up works turn to approximate return distributions with quantile functions instead of categorical distributions.


\subsection{Quantile Function and Quantile Regression}
\label{subsec:background_quantile_function_and_quantile_regression}

The relationship between the cumulative distribution function (CDF) $F_X$ and the quantile function $\quantilefunction_X$ (i.e., the generalized inverse CDF) of a random variable $X$ is formulated as:
\begin{equation}
\quantilefunction_X(\quantile{})=\inf\{x\in\mathbb{R}:\quantile{}\le \cdf{}_X(x)\}, \forall\quantile\in[0,1],
\label{eq:quantile_function_to_inverse_cdf}
\end{equation}
where $\quantile$ represents the quantile. The expectation of $X$ expressed in terms of $\quantilefunction_X(\quantile{})$ is:
\begin{equation}
\mathbb{E}[X]=\int_0^1\quantilefunction_X(\quantile{})\ \mathrm{d}\quantile.
\label{eq:expectation_of_quantile_function}
\end{equation}
\cite{Dabney2018IQN} model the value function as a quantile function $\quantilefunction{}(\state{},\action{}\vert\quantile{})$,
and use a pair-wise sampled temporal difference (TD) error $\tderror$ for two quantile samples $\quantile{}, \quantile{}'\sim U([0,1])$ to optimize the value function. The TD error $\tderror$
is defined as:
\begin{equation}
\label{eq:quantile_TD_error}
\tderror^{\quantile{},\quantile{}'}=\reward+\gamma \quantilefunction{}(\state{}',\action{}'^*\vert\quantile{}') - \quantilefunction{}(\state{},\action{}\vert\quantile{}).
\end{equation}
Based on Eq.~(\ref{eq:quantile_TD_error}), the pair-wise loss $\asymmetrichuberloss{}^\kappa_{\quantile{}}$ is formulated based on the Huber quantile regression loss $\huberloss{}$~\citep{Dabney2018QR-DQN} with threshold $\kappa=1$, and is expressed as follows:
\begin{equation}
\asymmetrichuberloss{}^\kappa_{\quantile{}}(\tderror^{\quantile{},\quantile{}'})=|\quantile{}-\mathbb{I}\{\tderror^{\quantile{},\quantile{}'}<0\}|\frac{\huberloss{}(\tderror^{\quantile{},\quantile{}'})}{\kappa} \text{, where}
\end{equation}
\begin{equation}
\huberloss{}(\tderror^{\quantile{},\quantile{}'})=
\begin{cases}
    \frac{1}{2}(\tderror^{\quantile{},\quantile{}'})^2, & \text{if }|\tderror^{\quantile{},\quantile{}'}|\le\kappa\\
    \kappa(|\tderror^{\quantile{},\quantile{}'}|-\frac{1}{2}\kappa), & \text{otherwise}
\end{cases}.
\end{equation}
Given $\numberofquantiles{}$ quantile samples $[\quantile{}_i]_{i=1}^{\numberofquantiles}$ to be optimized with regard to $\numberofquantilesamples$  target quantile samples $[\quantile{}_j]_{j=1}^{\numberofquantilesamples}$, the total loss $\loss{}(\state{},\action{},\reward{},\state{}')$  is defined as the sum of the pair-wise losses, and is expressed as the following:
\begin{equation}
\loss{}(\state{},\action{},\reward{},\state{}')=\frac{1}{\numberofquantilesamples{}}\sum_{i=1}^{\numberofquantiles{}}\sum_{j=1}^{\numberofquantilesamples{}}\asymmetrichuberloss{}^\kappa_{\quantile{}_i}(\tderror^{\quantile{}_i,\quantile{}_j'}).
\label{eq:total_qr_loss}
\end{equation}


\subsection{Implicit Quantile Network}
\label{subsec:background_implicit_quantile_network}

In order to realize the concepts discussed in the previous section, implicit quantile network (IQN)~\citep{Dabney2018IQN} proposes to approximate the return distribution $\utility{}(\state{},\action{})$ by an implicit quantile function $\quantilefunction{}(\state{},\action{}\vert\quantile{})=\implicitquantilefunction{}(\stateembeddingfunction{}(\state{}),\cosineembeddingfunction{}(\quantile{}))_\action{}$ for functions $\implicitquantilefunction{}$, $\psi$, and $\phi$. The subscript $\action{}$ denotes a certain action, which corresponds to the $\action{}^\mathrm{th}$ element in the output vector of $\implicitquantilefunction{}$. IQN is a type of universal value function approximator (UVFA)~\citep{Schaul2015UVFA}.  It employs a light-weight architecture which generalizes its predictions across states $\state{}\in\statespace{}$ and goals $\quantile{}\in [0,1]$, with the goals defined as different quantiles of the return distribution.
To mitigate \textit{spectral bias} \citep{rahaman2019spectral}, $\cosineembeddingfunction{}$ first maps the input scalar $\quantile$ to an $\cosinedimension$-dimensional vector by a high frequency function $[\cos(\pi i\quantile)]^{\cosinedimension{}-1}_{i=0}$, followed by a single hidden layer with weights $[w_{ij}]$ and biases $[b_j]$ to produce a quantile embedding $\cosineembeddingfunction{}(\quantile{})=[\cosineembeddingfunction(\quantile{})_j]^{\text{dim}(\cosineembeddingfunction{}(\quantile{}))-1}_{j=0}$. The expression of $\cosineembeddingfunction(\quantile{})_j$ can be represented as the following:
\begin{equation}
\cosineembeddingfunction(\quantile{})_j:=\text{ReLU}(\sum_{i=0}^{\cosinedimension{}-1}\cos(\pi i\quantile)w_{ij}+b_j),
\end{equation}
where $\cosinedimension{}=64$. Then, $\cosineembeddingfunction{}(\quantile{})$ is combined with the state embedding $\stateembeddingfunction{}(\state{})$ by the element-wise (Hadamard) product ($\odot$), expressed as $\implicitquantilefunction:=\stateembeddingfunction{}\odot\cosineembeddingfunction{}$, where $\text{dim}(\cosineembeddingfunction(\quantile{}))=\text{dim}(\stateembeddingfunction(\state{}))$. The loss of IQN is defined as Eq.~(\ref{eq:total_qr_loss}) by sampling a batch of $\numberofquantiles{}$ and $\numberofquantilesamples{}$ quantiles from the policy network and the target network respectively.
During execution, the action with the largest expected return $\utilityexp{}(\state{},\action{})$ is chosen. Since IQN does not model the expected return explicitly, $\utilityexp{}(\state{},\action{})$ is approximated by calculating the mean of the sampled returns through $\numberofquantiletestsamples{}$ quantile samples $\hat\quantile{}_i\sim U([0,1]), \forall i\in[1,\numberofquantiletestsamples{}]$ based on Eq.~(\ref{eq:expectation_of_quantile_function}), expressed as follows:
\begin{equation}
\utilityexp{}(\state{},\action{})=\int_0^1\quantilefunction{}(\state{},\action{}\vert\quantile{})\ \mathrm{d}\quantile\approx\frac{1}{\numberofquantiletestsamples{}}\sum_{i=1}^{\numberofquantiletestsamples{}}\quantilefunction{}(\state{},\action{}\vert\hat\quantile{}_i).
\end{equation}


\subsection{Quantile Mixture}
\label{subsec:background_quantile_mixture}

Multiple quantile functions (e.g., IQNs) sharing the same quantile $\quantile{}$ may be combined into a single quantile function $\quantilefunction{}(\quantile{})$, in a form of quantile mixture expressed as follows:
\begin{equation}
\quantilefunction{}(\quantile{})=\sum^{\numberofagents{}}_{k=1}\modelparameter{}_{\agentcounter{}} \quantilefunction_{\agentcounter{}}(\quantile{}),
\label{eq:quantile_mixture}
\end{equation}
where $[\quantilefunction_{\agentcounter{}}(\quantile{})]_{\agentcounter{}\in\agentspace{}}$ are quantile functions, and $[\modelparameter{}_{\agentcounter{}}]_{\agentcounter{}\in\agentspace{}}$ are model parameters~\citep{Karvanen2006QuantileMixture}. The condition for  $[\modelparameter{}_{\agentcounter{}}]_{\agentcounter{}\in\agentspace{}}$ is that $\quantilefunction{}(\quantile{})$ must satisfy the properties of a quantile function.
The concept of quantile mixture must not be confused with the mixture of multiple probability density functions (PDFs):
\begin{equation}
\pdf{}(x)=\sum^{\numberofagents{}}_{\agentcounter{}=1}\pdfmodelparameter{}_{\agentcounter{}} \pdf{}_{\agentcounter{}}(x),
\end{equation}
where $[\pdf{}_{\agentcounter{}}(x)]_{\agentcounter{}\in\agentspace{}}$ are PDFs, $\sum^{\numberofagents{}}_{\agentcounter{}=1}\pdfmodelparameter{}_{\agentcounter{}}=1$, and $\pdfmodelparameter{}_{\agentcounter{}}\ge 0$. Summing PDFs is equivalent to making a random choice from a collection of random variables, whereas adding quantile functions combines the random variables with a specific dependence structure (Eq.~(\ref{eq:quantile_mixture})).

\section{Methodology}
\label{sec:methodology}

In this section, we walk through the proposed DFAC framework and its derivation procedure. We first discuss a naive distributional factorization approach and its limitation in Section~\ref{subsec:methodology_distributional_igm_condition}.
Then, we introduce the DFAC framework to address the limitation, and show that DFAC is able to generalize distributional RL to all factorizable tasks in Section~\ref{subsec:methodology_the_proposed_dfac_framework}.
After that, several practical implementations of DFAC are presented in Section~\ref{subsec:methodology_a_practical_implementation_of_dfac}.
Finally, \ddn{}, \dmix{}, and \dplex{}, which are the DFAC variants of VDN, QMIX, and QPLEX, respectively, are discussed in Section~\ref{subsec:methodology_distributional_variant_of_vdn_and_qmix}.


\subsection{Distributional IGM}
\label{subsec:methodology_distributional_igm_condition}

Since the IGM condition is required for value function factorization to  work in fully cooperative tasks, a distributional factorization that satisfies IGM is essential for factorizing return distributions. We first discuss a naive distributional factorization approach that simply replaces deterministic utilities $\utilityexp$ with stochastic utilities $\utility$. Then, we provide a theorem to show that the naive approach is insufficient to guarantee the IGM condition for all factorization functions in general.
\begin{definition}[Distributional IGM]
\label{def:distributional_igm}

A finite number of individual stochastic utilities\\ $[\utility_{\agentcounter}(\observationhistory_{\agentcounter}, \action_{\agentcounter})]_{\agentcounter \in \agentspace}$ are said to satisfy Distributional IGM (\digm{}) for a stochastic joint action-value function $\utility_{\joint}(\jointobservationhistory, \jointaction{})$ under $\jointobservationhistory$, if their expectations $[\mathbb{E}[\utility_{\agentcounter}(\observationhistory_{\agentcounter}, u_{\agentcounter})]]_{\agentcounter \in \agentspace}$ satisfy IGM for $\mathbb{E}[\utility_{\joint}(\jointobservationhistory, \jointaction{})]$ under $\jointobservationhistory$, represented as follows:
\begin{equation}
\arg\max_\jointaction{} \mathbb{E}[\utility{}_{\joint{}}(\jointobservationhistory{},\jointaction{})] =
\begin{pmatrix}
\arg\max_{\action{}_1} \mathbb{E}[\utility{}_1(\observationhistory{}_1,\action{}_1)]\\
\vdots \\
\arg\max_{\action{}_\numberofagents{}} \mathbb{E}[\utility{}_\numberofagents{}(\observationhistory{}_\numberofagents{},\action{}_\numberofagents{})]
\end{pmatrix}.
\notag{}
\end{equation}
\end{definition}
By replacing $\utilityexp$ in Eq.~(\ref{eq:igm}) with $\mathbb{E}[\utility]$, IGM can be extended to Distributional IGM. The Advantage-based IGM can also be extended to its distributional version in a similar way. In the following, we use $\utilityexp$ and $\mathbb{E}[\utility]$ interchangeably, specifically, $\utilityexp_{\joint{}}(\jointobservationhistory{},\jointaction{}) = \mathbb{E}[\utility_{\joint{}}(\jointobservationhistory{},\jointaction{})]$, and $\utilityexp_{\agentcounter}(\observationhistory_{\agentcounter}, u_{\agentcounter}) = \mathbb{E}[\utility_{\agentcounter}(\observationhistory_{\agentcounter}, u_{\agentcounter})], \forall\agentcounter \in \agentspace$.
\begin{proposition}
\label{prop:digm_equals_igm}

\digm{} is equivalent to IGM (Eq.~(\ref{eq:igm})).
\end{proposition}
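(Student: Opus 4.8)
The plan is to prove the equivalence by simply unfolding the two definitions, since Definition~\ref{def:distributional_igm} was deliberately phrased so that \digm{} is, up to notation, an instance of Eq.~(\ref{eq:igm}). The key observation is the identification stated immediately above the proposition: $\utilityexp_{\joint{}}(\jointobservationhistory{},\jointaction{}) = \mathbb{E}[\utility_{\joint{}}(\jointobservationhistory{},\jointaction{})]$ and $\utilityexp_{\agentcounter}(\observationhistory_{\agentcounter},\action_{\agentcounter}) = \mathbb{E}[\utility_{\agentcounter}(\observationhistory_{\agentcounter},\action_{\agentcounter})]$ for all $\agentcounter\in\agentspace{}$. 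Under this identification, the displayed condition in Definition~\ref{def:distributional_igm} becomes verbatim the argmax condition of Eq.~(\ref{eq:igm}) with the scalar functions $\utilityexp_{\joint{}}$ and $[\utilityexp_{\agentcounter}]_{\agentcounter\in\agentspace{}}$ substituted for the corresponding symbols. So the whole content is to check that this reduction is well-posed in both directions.

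For the forward direction, I would assume that $[\utility_{\agentcounter}(\observationhistory_{\agentcounter},\action_{\agentcounter})]_{\agentcounter\in\agentspace{}}$ satisfy \digm{} for $\utility_{\joint{}}(\jointobservationhistory{},\jointaction{})$ under $\jointobservationhistory{}$; by definition the real-valued functions $[\mathbb{E}[\utility_{\agentcounter}]]_{\agentcounter\in\agentspace{}}$ satisfy the argmax condition of Eq.~(\ref{eq:igm}) for $\mathbb{E}[\utility_{\joint{}}]$, hence setting $\utilityexp_{\agentcounter}:=\mathbb{E}[\utility_{\agentcounter}]$ and $\utilityexp_{\joint{}}:=\mathbb{E}[\utility_{\joint{}}]$ gives utilities satisfying IGM. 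For the converse, given $[\utilityexp_{\agentcounter}]_{\agentcounter\in\agentspace{}}$ satisfying IGM for $\utilityexp_{\joint{}}$ under $\jointobservationhistory{}$, I would exhibit a stochastic realisation whose expectations equal these scalars — e.g. the degenerate (Dirac) utilities $\utility_{\agentcounter}:=\utilityexp_{\agentcounter}$ and $\utility_{\joint{}}:=\utilityexp_{\joint{}}$, for which $[\mathbb{E}[\utility_{\agentcounter}]]_{\agentcounter\in\agentspace{}}=[\utilityexp_{\agentcounter}]_{\agentcounter\in\agentspace{}}$ and $\mathbb{E}[\utility_{\joint{}}]=\utilityexp_{\joint{}}$ satisfy IGM by hypothesis — so by definition these stochastic utilities satisfy \digm{}. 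Combining the two directions yields the equivalence.

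There is essentially no technical obstacle here; the only thing worth spelling out is that the backward direction is non-vacuous, which the Dirac realisation handles. The real point of stating the proposition is interpretive rather than difficult: it shows that the naive approach of replacing $\utilityexp$ with $\utility$ imposes \emph{no} constraint on the shapes of the individual return distributions $[\utility_{\agentcounter}]_{\agentcounter\in\agentspace{}}$ beyond what IGM already requires of their means, which is precisely the motivation for the Mean-Shape Decomposition and Quantile Mixture introduced in the following sections.
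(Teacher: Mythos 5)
Your proposal is correct and follows essentially the same route as the paper, which simply rewrites the \digm{} condition as Eq.~(\ref{eq:igm}) under the identification $\utilityexp{}=\mathbb{E}[\utility{}]$ stated just before the proposition. Your additional remark that the converse direction is witnessed by degenerate (Dirac) utilities is a harmless elaboration of the same definitional unfolding.
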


\begin{proof}
\begin{equation}
\begin{aligned}
\arg\max_\jointaction{} \mathbb{E}[\utility{}_{\joint{}}(\jointobservationhistory{},\jointaction{})] &=
\begin{pmatrix}
\arg\max_{\action{}_1} \mathbb{E}[\utility{}_1(\observationhistory{}_1,\action{}_1)]\\
\vdots \\
\arg\max_{\action{}_\numberofagents{}} \mathbb{E}[\utility{}_\numberofagents{}(\observationhistory{}_\numberofagents{},\action{}_\numberofagents{})]
\end{pmatrix}\\
\Leftrightarrow\arg\max_\jointaction{} \utilityexp{}_{\joint{}}(\jointobservationhistory{},\jointaction{}) &=
\begin{pmatrix}
\arg\max_{\action{}_1} \utilityexp{}_1(\observationhistory{}_1,\action{}_1)\\
\vdots \\
\arg\max_{\action{}_\numberofagents{}} \utilityexp{}_\numberofagents{}(\observationhistory{}_\numberofagents{},\action{}_\numberofagents{})
\end{pmatrix}.
\end{aligned}
\notag{}
\end{equation}
\end{proof}
The Advantage-based IGM is also equivalent to its distributional version, which can be proved in the same manner.
Following the above definition of DIGM,
we next discuss the reason why simply replaces $\utilityexp$ with $\utility$ (i.e., the naive approach) is insufficient to guarantee the DIGM condition by considering \additivity{} (in Eq.~(\ref{eq:additivity})) and \monotonicity{} (in Eq.~(\ref{eq:monotonicity})) in two propositions.
\begin{proposition}
\label{prop:distributional_additivity}

\additivity{} for utility distributions:
\begin{equation}
\utility{}_{\joint{}}(\jointobservationhistory{},{\jointaction{}}) \eqd \meandecompositionfunction{}(\utility{}_1(\observationhistory{}_1,\action{}_1), ..., \utility{}_\numberofagents(\observationhistory{}_\numberofagents,\action{}_\numberofagents)) \eqd \sum_{\agentcounter{}\in\agentspace{}} \utility{}_{\agentcounter{}}(\observationhistory{}_{\agentcounter{}},\action{}_{\agentcounter{}})
\notag{}
\end{equation}
is a sufficient condition for DIGM.

\end{proposition}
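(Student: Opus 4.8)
The plan is to reduce this statement about return distributions to the already-established statement about expected value functions. First I would take expectations on both sides of the distributional \additivity{} identity $\utility{}_{\joint{}}(\jointobservationhistory{},\jointaction{}) \eqd \sum_{\agentcounter{}\in\agentspace{}}\utility{}_{\agentcounter{}}(\observationhistory{}_{\agentcounter{}},\action{}_{\agentcounter{}})$. Equality in distribution implies equality of expectations (the mean depends only on the law of the random variable, assuming the relevant expectations are finite), and by linearity of expectation the sum passes through, giving $\mathbb{E}[\utility{}_{\joint{}}(\jointobservationhistory{},\jointaction{})] = \sum_{\agentcounter{}\in\agentspace{}}\mathbb{E}[\utility{}_{\agentcounter{}}(\observationhistory{}_{\agentcounter{}},\action{}_{\agentcounter{}})]$. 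Using the identifications $\utilityexp{}_{\joint{}} = \mathbb{E}[\utility{}_{\joint{}}]$ and $\utilityexp{}_{\agentcounter{}} = \mathbb{E}[\utility{}_{\agentcounter{}}]$ introduced right after Definition~\ref{def:distributional_igm}, this is precisely the expected-value \additivity{} relation of Eq.~(\ref{eq:additivity}).

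Next I would invoke the standard argument that \additivity{} implies the IGM condition of Eq.~(\ref{eq:igm}) (Sunehag et al.; Son et al.). Since $\jointaction{} = \langle\action{}_1,\dots,\action{}_\numberofagents{}\rangle$ and each summand $\utilityexp{}_{\agentcounter{}}(\observationhistory{}_{\agentcounter{}},\action{}_{\agentcounter{}})$ depends only on agent $\agentcounter{}$'s own action, the maximization over $\jointaction{}$ separates across agents, so $\arg\max_{\jointaction{}}\utilityexp{}_{\joint{}}(\jointobservationhistory{},\jointaction{}) = \langle\arg\max_{\action{}_1}\utilityexp{}_1(\observationhistory{}_1,\action{}_1),\dots,\arg\max_{\action{}_\numberofagents{}}\utilityexp{}_\numberofagents{}(\observationhistory{}_\numberofagents{},\action{}_\numberofagents{})\rangle$, with ties (if any) resolved by a fixed tie-break exactly as in prior work and in the paper's vector-valued $\arg\max$ convention.

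Finally, by Definition~\ref{def:distributional_igm}, the fact that $[\mathbb{E}[\utility{}_{\agentcounter{}}]]_{\agentcounter{}\in\agentspace{}}$ satisfy IGM for $\mathbb{E}[\utility{}_{\joint{}}]$ under $\jointobservationhistory{}$ is by definition the statement that $[\utility{}_{\agentcounter{}}]_{\agentcounter{}\in\agentspace{}}$ satisfy \digm{} for $\utility{}_{\joint{}}$, which is what we want. This whole argument is essentially routine once the distribution-to-expectation reduction is carried out; there is no substantial obstacle. The only points requiring a little care are that we only need the forward implication (sufficiency), that the middle expression $\meandecompositionfunction{}(\cdot)$ is by definition in this proposition the summation itself (so it needs no separate treatment), and that the coordinatewise $\arg\max$ decomposition is phrased compatibly with the paper's notation — the mildest subtlety, handled just as in Proposition~\ref{prop:digm_equals_igm}.
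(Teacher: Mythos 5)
Your proposal is correct and follows essentially the same route as the paper's proof: reduce to expectations via linearity, invoke that \additivity{} of the expected utilities implies IGM, and then conclude \digm{} from Definition~\ref{def:distributional_igm} and Proposition~\ref{prop:digm_equals_igm}. The only difference is cosmetic --- you spell out why the $\arg\max$ separates across agents, whereas the paper simply cites the known fact that \additivity{} is sufficient for IGM.
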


\begin{proof}
By linearity of expectation, the following equation holds:
\begin{equation}
\begin{aligned}
\mathbb{E}[\utility{}_{\joint{}}(\jointobservationhistory{},{\jointaction{}})]
&= \mathbb{E}[\sum_{\agentcounter{}\in\agentspace{}} \utility{}_{\agentcounter{}}(\observationhistory{}_{\agentcounter{}},\action{}_{\agentcounter{}})] = \sum_{\agentcounter{}\in\agentspace{}} \mathbb{E}[\utility{}_{\agentcounter{}}(\observationhistory{}_{\agentcounter{}},\action{}_{\agentcounter{}})] \\
\Leftrightarrow \utilityexp{}_{\joint{}}(\jointobservationhistory{},{\jointaction{}}) &= \sum_{\agentcounter{}\in\agentspace{}} \utilityexp{}_{\agentcounter{}}(\observationhistory{}_{\agentcounter{}},\action{}_{\agentcounter{}}).
\end{aligned}
\notag{}
\end{equation}
Since \additivity{} is a sufficient condition for IGM:
\begin{equation}
\begin{split}
\mathbb{E}[\utility{}_{\joint{}}(\jointobservationhistory{},{\jointaction{}})] &= \meandecompositionfunction{}(\mathbb{E}[\utility{}_1(\observationhistory{}_1,\action{}_1)], ..., \mathbb{E}[\utility{}_\numberofagents(\observationhistory{}_\numberofagents,\action{}_\numberofagents)]) = \sum_{\agentcounter{}\in\agentspace{}} \mathbb{E}[\utility{}_{\agentcounter{}}(\observationhistory{}_{\agentcounter{}},\action{}_{\agentcounter{}})] \\
\Leftrightarrow \utilityexp{}_{\joint{}}(\jointobservationhistory{},{\jointaction{}}) &= \meandecompositionfunction{}(\utilityexp{}_1(\observationhistory{}_1,\action{}_1), ..., \utilityexp{}_\numberofagents(\observationhistory{}_\numberofagents,\action{}_\numberofagents)) = \sum_{\agentcounter{}\in\agentspace{}} \utilityexp{}_{\agentcounter{}}(\observationhistory{}_{\agentcounter{}},\action{}_{\agentcounter{}}),
\end{split}
\notag{}
\end{equation}
$[\mathbb{E}[\utility_{\agentcounter}(\observationhistory_{\agentcounter}, u_{\agentcounter})]]_{\agentcounter \in \agentspace}$ satisfy IGM for $\mathbb{E}[\utility_{\joint}(\jointobservationhistory, \jointaction{})]$ under $\jointobservationhistory$. According to Definition~\ref{def:distributional_igm} and Proposition~\ref{prop:digm_equals_igm}, $[\utility_{\agentcounter}(\observationhistory_{\agentcounter}, u_{\agentcounter})]_{\agentcounter \in \agentspace}$ satisfy \digm{} for $\utility_{\joint}(\jointobservationhistory, \jointaction{})$ under $\jointobservationhistory$.
As a result, \additivity{} for utility distributions is a sufficient condition for \digm{}.
\end{proof}
Simply replacing $\utilityexp$ with $\utility$ (i.e., the naive approach) satisfies \digm{} for \additivity{} based on linearity of expectations. However, linearity of expectations does not hold in general for monotonic transformations:
\begin{equation}
\begin{split}
\utilityexp{}_{\joint{}}(\jointobservationhistory{},{\jointaction{}}) &= \mathbb{E}[\utility{}_{\joint{}}(\jointobservationhistory{},{\jointaction{}})] \\
&= \mathbb{E}[\monotonicfunction{}(\utility{}_1(\observationhistory{}_1,\action{}_1), ..., \utility{}_{\numberofagents{}}(\observationhistory{}_{\numberofagents{}},\action{}_{\numberofagents{}})\vert\state)], \forall\monotonicfunction{} \\
&\ne \monotonicfunction{}(\mathbb{E}[\utility{}_1(\observationhistory{}_1,\action{}_1)], ..., \mathbb{E}[\utility{}_{\numberofagents{}}(\observationhistory{}_{\numberofagents{}},\action{}_{\numberofagents{}})]\vert\state), \forall\monotonicfunction{} \\
&= \monotonicfunction{}(\utilityexp{}_1(\observationhistory{}_1,\action{}_1), ..., \utilityexp{}_{\numberofagents{}}(\observationhistory{}_{\numberofagents{}},\action{}_{\numberofagents{}})\vert\state), \forall\monotonicfunction{}.
\end{split}
\notag{}
\end{equation}
Next, we provide a counterexample to demonstrate this property.
\begin{proposition}
\label{prop:distributional_monotonicity}

\monotonicity{} for utility distributions:
\begin{equation}
\begin{split}
\utility{}_{\joint{}}(\jointobservationhistory{},{\jointaction{}}) &\eqd \meandecompositionfunction{}(\utility{}_1(\observationhistory{}_1,\action{}_1), ..., \utility{}_\numberofagents(\observationhistory{}_\numberofagents,\action{}_\numberofagents)\vert\state)\\
&\eqd \monotonicfunction{}( \utility{}_1(\observationhistory{}_1,\action{}_1), ..., \utility{}_{\numberofagents{}}(\observationhistory{}_{\numberofagents{}},\action{}_{\numberofagents{}})\vert\state),
\end{split}
\notag{}
\end{equation}
where $\monotonicfunction$ is a monotonic transformation that satisfies $\frac{\partial \monotonicfunction}{\partial \utilityexp{}_\agentcounter{}}\ge 0, \forall \agentcounter{}\in\agentspace{}$, is not a sufficient condition for DIGM, even though the condition may satisfy \digm{} for special cases of $\monotonicfunction{}$ and $[\utility{}_{\agentcounter{}}(\observationhistory{}_{\agentcounter{}},\action{}_{\agentcounter{}})]_{\agentcounter{}\in\agentspace{}}$.

\end{proposition}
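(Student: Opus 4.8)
To disprove sufficiency it is enough to exhibit a single instance — a monotonic mixing network $\monotonicfunction$ and individual utility distributions $[\utility_\agentcounter(\observationhistory_\agentcounter,\action_\agentcounter)]_{\agentcounter\in\agentspace}$ — for which, upon \emph{defining} the joint return distribution by $\utility_\joint(\jointobservationhistory,\jointaction)\eqd\monotonicfunction(\utility_1(\observationhistory_1,\action_1),\ldots,\utility_\numberofagents(\observationhistory_\numberofagents,\action_\numberofagents)\vert\state)$, the induced expectations fail IGM. By Definition~\ref{def:distributional_igm} and Proposition~\ref{prop:digm_equals_igm}, \digm{} holds iff $\arg\max_{\jointaction}\mathbb{E}[\utility_\joint(\jointobservationhistory,\jointaction)]$ equals the vector of per-agent maximizers $\arg\max_{\action_\agentcounter}\mathbb{E}[\utility_\agentcounter(\observationhistory_\agentcounter,\action_\agentcounter)]$, so the plan is to engineer a disagreement between these two argmaxes while keeping $\monotonicfunction$ nondecreasing in every utility argument.

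First I would take the smallest nontrivial setting, $\numberofagents=2$ with agent $1$ restricted to a single forced action (so the example is driven entirely by agent $2$), let the $\utility_\agentcounter$ be independent, and pick a strictly increasing, strictly convex combiner such as $\monotonicfunction(\utility_1,\utility_2\vert\state)=\utility_1+e^{\utility_2}$, which clearly satisfies $\partial\monotonicfunction/\partial\utility_\agentcounter\ge 0$ for each $\agentcounter$ and hence is a legitimate monotonic mixing network in the sense of Eq.~(\ref{eq:monotonicity}). I would give agent $2$ two actions: a ``safe'' action $a$ with $\utility_2=\varepsilon$ deterministically for a small $\varepsilon>0$, and a ``risky'' action $b$ with $\utility_2=-L$ or $\utility_2=+L$, each with probability $\tfrac12$, for a large constant $L$. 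Then $\mathbb{E}[\utility_2(\cdot,a)]=\varepsilon>0=\mathbb{E}[\utility_2(\cdot,b)]$, so $a$ is the \emph{unique} per-agent optimum, whereas on the joint side $\mathbb{E}[e^{\utility_2(\cdot,a)}]=e^{\varepsilon}$ but $\mathbb{E}[e^{\utility_2(\cdot,b)}]=\cosh(L)$, which exceeds $e^{\varepsilon}$ once $L$ is large enough; since agent $1$'s term is a constant, $\arg\max_{\jointaction}\mathbb{E}[\utility_\joint]$ then selects $b$ for agent $2$, contradicting the IGM relation. I would note explicitly that all argmaxes are unique, so no tie-breaking convention can repair IGM, and that the joint law of $(\utility_1,\utility_2)$ is well defined because the $\utility_\agentcounter$ were chosen independent.

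To finish I would record the conceptual reason the chain of equivalences used in the proof of Proposition~\ref{prop:distributional_additivity} breaks down here: it relied on linearity of expectation, and for a strictly convex $\monotonicfunction$ Jensen's inequality gives $\mathbb{E}[\monotonicfunction(\utility_1,\ldots,\utility_\numberofagents)]\ne\monotonicfunction(\mathbb{E}[\utility_1],\ldots,\mathbb{E}[\utility_\numberofagents])$, so the per-agent expectations carry no information about which joint action maximizes $\mathbb{E}[\utility_\joint]$. The closing clause of the proposition is then immediate: \monotonicity{} for utility distributions \emph{does} yield \digm{} in special cases, e.g. when $\monotonicfunction$ is affine (reducing to \additivity{}, covered by Proposition~\ref{prop:distributional_additivity}) or when every $\utility_\agentcounter$ is a Dirac mass, since then $\mathbb{E}[\monotonicfunction(\cdot)]=\monotonicfunction(\mathbb{E}[\cdot])$. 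The main obstacle is not depth but presentation: the counterexample must be concrete and unambiguous while making visible that the obstruction is intrinsic to the nonlinearity of $\monotonicfunction$ and not an artifact of a pathological construction; care is needed mainly in checking that $\partial\monotonicfunction/\partial\utility_\agentcounter\ge 0$ holds globally and in specifying the joint distribution of the $\utility_\agentcounter$ so that ``$\monotonicfunction$ applied to random variables'' is meaningful. Casting the example as a small matrix-game payoff table, in the style of the other examples in the paper, together with the Jensen observation, should make the failure transparent.
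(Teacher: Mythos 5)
Your proposal is correct and takes essentially the same route as the paper: both disprove sufficiency by exhibiting a concrete counterexample built from an exponential (convex) monotonic transformation applied to a deterministic ``safe'' utility versus a symmetric two-point ``risky'' utility, so that Jensen's inequality reorders the expectations and breaks the argmax agreement required by \digm{}. The paper simply degenerates all the way to $\numberofagents=1$ with $\monotonicfunction=\exp$ and returns $2$ versus $\{0,3\}$, whereas you keep $\numberofagents=2$ with one agent trivialized; the difference is cosmetic.
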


\begin{proof}
We consider a degenerated case and prove the theorem by contradiction. Consider a case where there is only a single agent ($\numberofagents{}=1$), with a single fully observable state and an exponential transformation $\monotonicfunction{}(\utility{}_1(\observationhistory{}_1,\action{}_1)\vert\state)=\exp(\utility{}_1(\observationhistory{}_1,\action{}_1))$. The (joint) action space of this case consists of two (joint) actions: $\jointactionspace_{\joint}=\jointactionspace{}_1=\{\action{}^*_1, \action{}'_1\}$, where $\action{}^*_1$ is the optimal action (with expected return $2$) and $\action{}'_1$ is the suboptimal action (with expected return $1.5$). Given the probability mass function (PMF) of $\utility{}_1(\observationhistory{}_1,\action{}^*_1)$ as:
\begin{equation}
p_1(z)=\begin{cases}
1 &\text{if } z=2\\
0   &\text{otherwise},\\
\end{cases}
\notag{}
\end{equation}
and the PMF of $\utility{}_1(\observationhistory{}_1,\action{}'_1)$ as:
\begin{equation}
p_2(z)=\begin{cases}
0.5 &\text{if } z=0\\
0.5 &\text{if } z=3\\
0   &\text{otherwise}.\\
\end{cases}
\notag{}
\end{equation}
Based on this special case, we calculate the optimal actions before applying and after applying the monotonic transformation as follows:
\begin{equation}
\begin{split}
&\begin{cases}
&\mathbb{E}[\utility{}_1(\observationhistory{}_1,\action{}^*_1)]=1\cdot 2=2 \\
&\mathbb{E}[\utility{}_1(\observationhistory{}_1,\action{}'_1)]=0.5\cdot 0+0.5\cdot 3=1.5 \\
&\arg\max_{\action{}_1}\mathbb{E}[\utility{}_1(\observationhistory{}_1,\action{}_1)]=\action^*_1\\
\end{cases}\\
&\begin{cases}
&\mathbb{E}[\monotonicfunction{}(\utility{}_1(\observationhistory{}_1,\action{}^*_1)\vert\state)]=\mathbb{E}[\exp(\utility{}_1(\observationhistory{}_1,\action{}^*_1))]=e^2\approx 7.39 \\
&\mathbb{E}[\monotonicfunction{}(\utility{}_1(\observationhistory{}_1,\action{}'_1)\vert\state)]=\mathbb{E}[\exp(\utility{}_1(\observationhistory{}_1,\action{}'_1))]=0.5\cdot e^0+0.5\cdot e^3\approx 10.54 \\
&\arg\max_{\action_1}\mathbb{E}[\monotonicfunction{}(\utility{}_1(\observationhistory{}_1,\action{}_1)\vert\state)]=\action'_1\\
\end{cases}
\end{split}
\notag{}
\end{equation}
Assume, to the contrary, that \monotonicity{} for utility distributions is a sufficient condition for DIGM. By the definition of DIGM (Definition~\ref{def:distributional_igm}):
\begin{equation}
\begin{split}
&\arg\max_{\jointaction{}}\mathbb{E}[\utility{}_{\joint{}}(\jointobservationhistory{},{\jointaction{}})]=
\begin{pmatrix}
\arg\max_{\action{}_1}\mathbb{E}[\utility{}_1(\observationhistory{}_1,\action{}_1)]
\end{pmatrix}\\
\Rightarrow\ &\arg\max_{\action_1}\mathbb{E}[\monotonicfunction{}(\utility{}_1(\observationhistory{}_1,\action{}_1)\vert\state)]=\arg\max_{\action{}_1}\mathbb{E}[\utility{}_1(\observationhistory{}_1,\action{}_1)]\\
\Rightarrow\ &\action'_1=\action^*_1 \ (\Rightarrow\!\Leftarrow \text{contradiction}). \\
\end{split}
\notag{}
\end{equation}
A contradiction occurs since $\action'_1\ne\action^*_1$, showing that \monotonicity{} is not a sufficient condition for DIGM. Since there exists a case where DIGM does not hold for $\numberofagents=1$, it certainly does not hold for all $\numberofagents\in\mathbb{Z^{+}}$.
\end{proof}
The reason that the naive approach does not hold for \monotonicity{} is because the monotonic function (i.e., $\exp$) in the above example
reshapes the distributions in a non-linear manner. This may potentially cause reordering of the actions’ expected returns. For example, consider a case of two distributions, a left-skewed and a right-skewed distributions with the same expectation. After applying the exponential transformation, the expectation of the left-skewed distribution becomes larger than that of the right-skewed distribution. Such an effect only exists in non-linear transformations and does not exist in linear transformations.
\begin{theorem}
\label{thm:distributional_igm}

Given a deterministic joint action-value function $\utilityexp_{\joint}$, a stochastic joint action-value function $\utility_{\joint}$, and a factorization function $\meandecompositionfunction$ for deterministic utilities:
\begin{equation}
\utilityexp{}_{\joint{}}(\jointobservationhistory{},{\jointaction{}})=\meandecompositionfunction(\utilityexp{}_1(\observationhistory{}_1,\action{}_1), ..., \utilityexp{}_{\numberofagents{}}(\observationhistory{}_{\numberofagents{}},\action{}_{\numberofagents{}})\vert\state),
\notag{}
\end{equation}
such that $[\utilityexp{}_\agentcounter{}]_{\agentcounter{}\in\agentspace{}}$ satisfy IGM for $\utilityexp{}_{\joint{}}$ under $\jointobservationhistory{}$, the following distributional factorization:
\begin{equation}
\utility{}_{\joint{}}(\jointobservationhistory{},{\jointaction{}})\eqd\meandecompositionfunction(\utility{}_1(\observationhistory{}_1,\action{}_1), ..., \utility{}_{\numberofagents{}}(\observationhistory{}_{\numberofagents{}},\action{}_{\numberofagents{}})\vert\state).
\notag{}
\end{equation}
\end{theorem}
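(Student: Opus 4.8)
The plan is to reduce the distributional claim to the assumed deterministic IGM condition, exploiting the fact that \digm{} (Definition~\ref{def:distributional_igm}) is a statement purely about expectations. Concretely, by Definition~\ref{def:distributional_igm} and Proposition~\ref{prop:digm_equals_igm}, to show that $[\utility{}_\agentcounter{}(\observationhistory{}_\agentcounter{},\action{}_\agentcounter{})]_{\agentcounter{}\in\agentspace{}}$ satisfy \digm{} for $\utility{}_{\joint{}}(\jointobservationhistory{},\jointaction{})$ under $\jointobservationhistory{}$ it suffices to show that $[\mathbb{E}[\utility{}_\agentcounter{}]]_{\agentcounter{}\in\agentspace{}}=[\utilityexp{}_\agentcounter{}]_{\agentcounter{}\in\agentspace{}}$ satisfy IGM for $\mathbb{E}[\utility{}_{\joint{}}]=\utilityexp{}_{\joint{}}$ --- and the latter is exactly the hypothesis, once the means are known to be consistent. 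So everything hinges on verifying that taking expectations of the distributional factorization $\utility{}_{\joint{}}(\jointobservationhistory{},\jointaction{})\eqd\meandecompositionfunction(\utility{}_1, ..., \utility{}_{\numberofagents{}}\vert\state)$ reproduces the deterministic factorization $\utilityexp{}_{\joint{}}(\jointobservationhistory{},\jointaction{})=\meandecompositionfunction(\utilityexp{}_1, ..., \utilityexp{}_{\numberofagents{}}\vert\state)$.

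First I would take expectations of both sides: since $X\eqd Y$ implies $\mathbb{E}[X]=\mathbb{E}[Y]$, we obtain $\mathbb{E}[\utility{}_{\joint{}}(\jointobservationhistory{},\jointaction{})]=\mathbb{E}[\meandecompositionfunction(\utility{}_1, ..., \utility{}_{\numberofagents{}}\vert\state)]$. The crux --- and the step I expect to be the main obstacle --- is to push the expectation through $\meandecompositionfunction$, i.e.\ to establish $\mathbb{E}[\meandecompositionfunction(\utility{}_1, ..., \utility{}_{\numberofagents{}}\vert\state)]=\meandecompositionfunction(\mathbb{E}[\utility{}_1], ..., \mathbb{E}[\utility{}_{\numberofagents{}}]\vert\state)=\meandecompositionfunction(\utilityexp{}_1, ..., \utilityexp{}_{\numberofagents{}}\vert\state)$. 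Proposition~\ref{prop:distributional_monotonicity} already shows this is false when $\meandecompositionfunction$ is an arbitrary monotonic transformation, so the argument cannot treat $\meandecompositionfunction$ as a black box; instead I would exhibit $\meandecompositionfunction$ in its mean-shape form, with a mean component equal to the original VDN/QMIX/QPLEX factorization applied to the \emph{constants} $[\utilityexp{}_\agentcounter{}]_{\agentcounter{}\in\agentspace{}}$ and a shape component that is an additive (quantile-mixture) combination of the centered utilities $[\utility{}_\agentcounter{}-\utilityexp{}_\agentcounter{}]_{\agentcounter{}\in\agentspace{}}$. Taking expectations then annihilates the shape component (each centered term has mean zero, so linearity of expectation as in Proposition~\ref{prop:distributional_additivity} applies) and leaves precisely $\meandecompositionfunction(\utilityexp{}_1, ..., \utilityexp{}_{\numberofagents{}}\vert\state)$; for VDN this is just linearity of expectation, and for QMIX and QPLEX it holds because the nonlinear monotonic network is only ever evaluated at the deterministic means, never at the random returns.

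Assembling the pieces gives $\mathbb{E}[\utility{}_{\joint{}}(\jointobservationhistory{},\jointaction{})]=\meandecompositionfunction(\utilityexp{}_1, ..., \utilityexp{}_{\numberofagents{}}\vert\state)=\utilityexp{}_{\joint{}}(\jointobservationhistory{},\jointaction{})$, while the individual utilities retain $[\utilityexp{}_\agentcounter{}]_{\agentcounter{}\in\agentspace{}}$ as their means by the mean-shape construction. Since $[\utilityexp{}_\agentcounter{}]_{\agentcounter{}\in\agentspace{}}$ satisfy IGM for $\utilityexp{}_{\joint{}}$ under $\jointobservationhistory{}$ by hypothesis, Definition~\ref{def:distributional_igm} together with Proposition~\ref{prop:digm_equals_igm} yields that $[\utility{}_\agentcounter{}(\observationhistory{}_\agentcounter{},\action{}_\agentcounter{})]_{\agentcounter{}\in\agentspace{}}$ satisfy \digm{} for $\utility{}_{\joint{}}(\jointobservationhistory{},\jointaction{})$ under $\jointobservationhistory{}$; the advantage-based version follows by the identical argument applied to Eq.~(\ref{eq:advantage-igm}). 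Apart from the mean-preservation step, the remainder is routine bookkeeping with expectations and appeals to the earlier definitions and propositions.
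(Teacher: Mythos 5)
The statement you set out to prove is not the one the theorem asserts. Theorem~\ref{thm:distributional_igm} is a \emph{negative} result: its conclusion (the clause following the displayed factorization) is that the naive distributional factorization $\utility{}_{\joint{}}(\jointobservationhistory{},\jointaction{})\eqd\meandecompositionfunction(\utility{}_1,\dots,\utility{}_{\numberofagents{}}\vert\state)$ is \emph{insufficient} to guarantee that $[\utility{}_\agentcounter{}]_{\agentcounter{}\in\agentspace{}}$ satisfy \digm{} for $\utility{}_{\joint{}}$ under $\jointobservationhistory{}$. The paper's proof is a one-line appeal to the counterexample of Proposition~\ref{prop:distributional_monotonicity}: with $\numberofagents{}=1$ and $\monotonicfunction{}=\exp$, the transformation reorders the actions' expected returns, so IGM for $[\utilityexp{}_\agentcounter{}]$ does not transfer to the expectation of $\meandecompositionfunction(\utility{}_1,\dots\vert\state)$. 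You instead attempt to prove sufficiency, and your own second paragraph contains the reason this cannot succeed: you correctly observe that $\mathbb{E}[\meandecompositionfunction(\utility{}_1,\dots,\utility{}_{\numberofagents{}}\vert\state)]\ne\meandecompositionfunction(\mathbb{E}[\utility{}_1],\dots,\mathbb{E}[\utility{}_{\numberofagents{}}]\vert\state)$ in general, which is precisely the obstruction the theorem is recording.

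Your proposed repair---rewriting $\meandecompositionfunction$ ``in its mean-shape form,'' so that the nonlinear network is evaluated only at the deterministic means $[\utilityexp{}_\agentcounter{}]$ while an additive zero-mean shape term carries the randomness---silently changes the hypothesis. In the theorem as stated, $\meandecompositionfunction$ is applied directly to the random utilities $[\utility{}_\agentcounter{}]$, not to their expectations; once you substitute the mean-shape construction you are proving Theorem~\ref{thm:dfac} (the DFAC theorem), which is the paper's separate \emph{positive} result. So your argument establishes a different (and true) statement while leaving the actual claim---insufficiency of the naive factorization---unaddressed. The correct proof is simply to exhibit the counterexample of Proposition~\ref{prop:distributional_monotonicity}.
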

is insufficient to guarantee that $[\utility{}_\agentcounter{}]_{\agentcounter{}\in\agentspace{}}$ satisfy \digm{} for $\utility{}_{\joint{}}$ under $\jointobservationhistory{}$.

\begin{proof}
A contradiction is provided by Proposition~\ref{prop:distributional_monotonicity}.
\end{proof}
According to the observations in Theorem~\ref{thm:distributional_igm}, an alternative strategy (other than the naive approach) for modifying the factorization functions is necessary to meet the requirements of satisfying \digm{} for stochastic joint distributions.


\subsection{Mean-Shape Decomposition and the DFAC Framework}
\label{subsec:methodology_the_proposed_dfac_framework}

We propose Mean-Shape Decomposition and the DFAC framework to ensure that \digm{} is satisfied for utility distributions. We first define Mean-Shape Decomposition as follows:
\begin{definition}[Mean-Shape Decomposition]
\label{def:mean_shape_decomposition}

For any given random variable $Z$, there exists a unique decomposition defined as follows:
\begin{equation}
\begin{split}
Z &= \mathbb{E}[Z]+(Z-\mathbb{E}[Z]) \\
&= Z_{\mathrm{mean}}+Z_{\mathrm{shape}}\ ,
\end{split}
\notag{}
\end{equation}
where $\mathrm{Var}(Z_{\mathrm{mean}})=0$ and $\mathbb{E}[Z_{\mathrm{shape}}]=0$. In this paper, this is called the Mean-Shape Decomposition of $Z$.

\end{definition}
Based on Mean-Shape Decomposition, we propose DFAC to decompose a joint return distribution $\utility{}_{\joint{}}$ into its deterministic part $\utility{}_{\text{mean}}$ (i.e., the expected value) and stochastic part $\utility{}_{\text{shape}}$ (i.e., the higher moments). The two components $\utility{}_{\text{mean}}$ and $\utility{}_{\text{shape}}$ are approximated by two different functions $\meandecompositionfunction$ and $\shapedecompositionfunction$, respectively. The factorization function $\meandecompositionfunction$ is responsible for factorizing the expectation of $\utility{}_{\joint{}}$, while the shape function $\shapedecompositionfunction$ is utilized to factorize the shape of $\utility{}_{\joint{}}$. Since the main objective of modeling the return distribution is to assist non-linear approximation of the expectation of $\utility{}_{\joint{}}$ \citep{Lyle2019Comparative,BDR2022} rather than accurately model the shape of $\utility{}_{\joint{}}$, $\shapedecompositionfunction$ is allowed to roughly factorize the shape of $\utility{}_{\joint{}}$.
\begin{theorem}[DFAC Theorem]
\label{thm:dfac}

Consider a deterministic joint action-value function $\utilityexp_{\joint}$, a stochastic joint action-value function $\utility_{\joint}$, and a factorization function $\meandecompositionfunction$ for deterministic utilities:
\begin{equation}
\utilityexp{}_{\joint{}}(\jointobservationhistory{},{\jointaction{}})=\meandecompositionfunction(\utilityexp{}_1(\observationhistory{}_1,\action{}_1), ..., \utilityexp{}_{\numberofagents{}}(\observationhistory{}_{\numberofagents{}},\action{}_{\numberofagents{}})\vert\state),
\notag{}
\end{equation}
such that $[\utilityexp{}_\agentcounter{}]_{\agentcounter{}\in\agentspace{}}$ satisfy IGM for $\utilityexp{}_{\joint{}}$ under $\jointobservationhistory{}$. By Mean-Shape Decomposition, the following distributional factorization:
\begin{equation}
\begin{split}
\utility{}_{\joint{}}(\jointobservationhistory{},{\jointaction{}}) &= \mathbb{E}[\utility{}_{\joint{}}(\jointobservationhistory{},{\jointaction{}})]+(\utility{}_{\joint{}}(\jointobservationhistory{},{\jointaction{}})-\mathbb{E}[\utility{}_{\joint{}}(\jointobservationhistory{},{\jointaction{}})]) \\
&= \utility{}_{\mathrm{mean}}(\jointobservationhistory{},{\jointaction{}})+\utility{}_{\mathrm{shape}}(\jointobservationhistory{},{\jointaction{}})\\
&\eqd \meandecompositionfunction{}(\utilityexp_1(\observationhistory{}_1,\action{}_1),...,\utilityexp_{\numberofagents{}}(\observationhistory{}_{\numberofagents{}},\action{}_{\numberofagents{}})\vert\state) + \shapedecompositionfunction{}(\utility_1(\observationhistory{}_1,\action{}_1),...,\utility_{\numberofagents{}}(\observationhistory{}_{\numberofagents{}},\action{}_{\numberofagents{}})\vert\state),
\end{split}
\notag{}
\end{equation}
is sufficient to guarantee that $[\utility{}_\agentcounter{}]_{\agentcounter{}\in\agentspace{}}$ satisfy \digm{} for $\utility{}_{\joint{}}$ under $\jointobservationhistory{}$, where $\mathbb{E}[\shapedecompositionfunction{}(\utility_1,...,\utility_{\numberofagents{}}\vert\state)]=0$ for all $[\utility{}_\agentcounter{}]_{\agentcounter{}\in\agentspace{}}$.
\end{theorem}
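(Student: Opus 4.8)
The plan is to verify the Distributional IGM condition directly from its definition (Definition~\ref{def:distributional_igm}): I must show that the expectations $[\mathbb{E}[\utility_{\agentcounter}(\observationhistory_{\agentcounter},\action_{\agentcounter})]]_{\agentcounter\in\agentspace}$ satisfy IGM for $\mathbb{E}[\utility_{\joint}(\jointobservationhistory,\jointaction)]$, given the factorization $\utility_{\joint} \eqd \meandecompositionfunction(\utilityexp_1,\dots,\utilityexp_{\numberofagents}\vert\state) + \shapedecompositionfunction(\utility_1,\dots,\utility_{\numberofagents}\vert\state)$. By Proposition~\ref{prop:digm_equals_igm}, \digm{} is equivalent to IGM on the means, so it suffices to compute $\mathbb{E}[\utility_{\joint}]$ and show it equals $\meandecompositionfunction(\utilityexp_1,\dots,\utilityexp_{\numberofagents}\vert\state)$, and then invoke the hypothesis that $\meandecompositionfunction$ already factorizes $\utilityexp_{\joint}$ in the IGM sense.

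First I would take expectations on both sides of the factorization. Since $X\eqd Y$ implies $\mathbb{E}[X]=\mathbb{E}[Y]$, and by linearity of expectation,
\begin{equation}
\mathbb{E}[\utility_{\joint}(\jointobservationhistory,\jointaction)] = \mathbb{E}[\meandecompositionfunction(\utilityexp_1(\observationhistory_1,\action_1),\dots,\utilityexp_{\numberofagents}(\observationhistory_{\numberofagents},\action_{\numberofagents})\vert\state)] + \mathbb{E}[\shapedecompositionfunction(\utility_1(\observationhistory_1,\action_1),\dots,\utility_{\numberofagents}(\observationhistory_{\numberofagents},\action_{\numberofagents})\vert\state)].
\end{equation}
The first term is deterministic — its argument $\meandecompositionfunction(\utilityexp_1,\dots,\utilityexp_{\numberofagents}\vert\state)$ is a function of the deterministic scalars $[\utilityexp_{\agentcounter}]$ — so its expectation is itself. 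The second term vanishes by the stated property $\mathbb{E}[\shapedecompositionfunction(\utility_1,\dots,\utility_{\numberofagents}\vert\state)]=0$. Hence $\mathbb{E}[\utility_{\joint}(\jointobservationhistory,\jointaction)] = \meandecompositionfunction(\utilityexp_1(\observationhistory_1,\action_1),\dots,\utilityexp_{\numberofagents}(\observationhistory_{\numberofagents},\action_{\numberofagents})\vert\state)$, which by the theorem's hypothesis equals $\utilityexp_{\joint}(\jointobservationhistory,\jointaction)$ with $[\utilityexp_{\agentcounter}]$ satisfying IGM for it.

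Finally I would close the argument: since $\mathbb{E}[\utility_{\joint}]=\utilityexp_{\joint}$ and $\mathbb{E}[\utility_{\agentcounter}]=\utilityexp_{\agentcounter}$ (using the notational identification established after Proposition~\ref{prop:digm_equals_igm}), the $\arg\max$ decomposition
\begin{equation}
\arg\max_{\jointaction}\mathbb{E}[\utility_{\joint}(\jointobservationhistory,\jointaction)] = \bigl(\arg\max_{\action_1}\mathbb{E}[\utility_1(\observationhistory_1,\action_1)],\ \dots,\ \arg\max_{\action_{\numberofagents}}\mathbb{E}[\utility_{\numberofagents}(\observationhistory_{\numberofagents},\action_{\numberofagents})]\bigr)^{\!\top}
\end{equation}
holds, which is exactly \digm{} by Definition~\ref{def:distributional_igm}. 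The argument is essentially bookkeeping once the right decomposition is in place; there is no real obstacle, but the one point that deserves care is confirming that $\utility_{\mathrm{mean}}=\meandecompositionfunction(\utilityexp_1,\dots,\utilityexp_{\numberofagents}\vert\state)$ and $\utility_{\mathrm{shape}}=\shapedecompositionfunction(\utility_1,\dots,\utility_{\numberofagents}\vert\state)$ genuinely realize the Mean-Shape Decomposition of Definition~\ref{def:mean_shape_decomposition} — i.e., that the first summand has zero variance and the second has zero mean — so that the claimed equality in distribution is consistent with the uniqueness of that decomposition and does not secretly impose extra constraints. The zero-variance of the first summand is immediate since it is a constant, and the zero-mean of the second is the given hypothesis on $\shapedecompositionfunction$, so consistency holds.
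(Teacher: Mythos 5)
Your proposal is correct and follows essentially the same route as the paper's proof: take expectations of the factorization, use linearity together with the determinism of $\meandecompositionfunction(\utilityexp_1,\dots,\utilityexp_{\numberofagents}\vert\state)$ and the zero-mean hypothesis on $\shapedecompositionfunction$ to reduce $\arg\max_{\jointaction}\mathbb{E}[\utility_{\joint}]$ to $\arg\max_{\jointaction}\meandecompositionfunction(\utilityexp_1,\dots,\utilityexp_{\numberofagents}\vert\state)$, and then invoke the IGM hypothesis and the identification $\utilityexp_{\agentcounter}=\mathbb{E}[\utility_{\agentcounter}]$. The only difference is cosmetic (you take expectations before applying $\arg\max$, whereas the paper carries the $\arg\max$ through the whole chain), plus your added consistency check against Definition~\ref{def:mean_shape_decomposition}, which is a harmless extra.
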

\begin{proof}
Based on Mean-Shape decomposition:
\begin{equation}
\begin{aligned}
&\arg\max_\jointaction{}\{ \mathbb{E}[\utility{}_{\joint{}}(\jointobservationhistory{},\jointaction{})]\}\\
=\ &\arg\max_\jointaction{}\{ \mathbb{E}[\utility{}_{\mathrm{mean}}(\jointobservationhistory{},\jointaction{})+\utility{}_{\mathrm{shape}}(\jointobservationhistory{},\jointaction{})]\}\\
=\ &\arg\max_\jointaction{} \{\mathbb{E}[\utility{}_{\mathrm{mean}}(\jointobservationhistory{},\jointaction{})]+\mathbb{E}[\utility{}_{\mathrm{shape}}(\jointobservationhistory{},\jointaction{})]\}\\
=\ &\arg\max_\jointaction{} \{\mathbb{E}[\meandecompositionfunction(\utilityexp{}_1(\observationhistory{}_1,\action{}_1), ..., \utilityexp{}_{\numberofagents{}}(\observationhistory{}_{\numberofagents{}},\action{}_{\numberofagents{}})\vert\state)]+\mathbb{E}[\shapedecompositionfunction(\utility{}_1(\observationhistory{}_1,\action{}_1), ..., \utility{}_{\numberofagents{}}(\observationhistory{}_{\numberofagents{}},\action{}_{\numberofagents{}})\vert\state)]\}\\
=\ &\arg\max_\jointaction{} \{\meandecompositionfunction(\utilityexp{}_1(\observationhistory{}_1,\action{}_1), ..., \utilityexp{}_{\numberofagents{}}(\observationhistory{}_{\numberofagents{}},\action{}_{\numberofagents{}})\vert\state)+0\}\\
=\ &\arg\max_\jointaction{} \{\meandecompositionfunction(\utilityexp{}_1(\observationhistory{}_1,\action{}_1), ..., \utilityexp{}_{\numberofagents{}}(\observationhistory{}_{\numberofagents{}},\action{}_{\numberofagents{}})\vert\state)\}\\
=\ &\begin{pmatrix}
\arg\max_{\action{}_1} \utilityexp{}_1(\observationhistory{}_1,\action{}_1)\\
\vdots \\
\arg\max_{\action{}_\numberofagents{}} \utilityexp{}_\numberofagents{}(\observationhistory{}_\numberofagents{},\action{}_\numberofagents{})
\end{pmatrix}\\
\Rightarrow\ &\arg\max_\jointaction{} \mathbb{E}[\utility{}_{\joint{}}(\jointobservationhistory{},\jointaction{})]
=\ \begin{pmatrix}
\arg\max_{\action{}_1} \mathbb{E}[\utility{}_1(\observationhistory{}_1,\action{}_1)]\\
\vdots \\
\arg\max_{\action{}_\numberofagents{}} \mathbb{E}[\utility{}_\numberofagents{}(\observationhistory{}_\numberofagents{},\action{}_\numberofagents{})]
\end{pmatrix}.\\
\end{aligned}
\notag{}
\end{equation}
The above derivation demonstrates that $[\utility{}_\agentcounter{}]_{\agentcounter{}\in\agentspace{}}$ satisfy \digm{} for $\utility{}_{\joint{}}$ under $\jointobservationhistory{}$.
\end{proof}
Theorem~\ref{thm:dfac} reveals that the choice of $\meandecompositionfunction$ determines whether DIGM holds, regardless of the choice of $\shapedecompositionfunction$, as long as $\mathbb{E}[\shapedecompositionfunction{}(\utility_1,...,\utility_{\numberofagents{}}\vert\state)]=0$ for all $[\utility{}_\agentcounter{}]_{\agentcounter{}\in\agentspace{}}$. Under this setting, any factorization function of deterministic variables can be extended to a factorization function of stochastic variables. Such a decomposition enables approximation of joint distributions for all factorizable tasks under appropriate choices of $\meandecompositionfunction{}$ and $\shapedecompositionfunction{}$. The methods extended by DFAC is called the DFAC variants, which have the same expressiveness as the original unextended versions.


\subsection{Practical Implementation Choices of the Shape Function}
\label{subsec:methodology_a_practical_implementation_of_dfac}

In this section, we discuss the implementation choices of the shape function based on two representative distributional algorithms: C51 and IQN. Theoretically, the shape function can be arbitrarily complex as long as the output has a zero mean. In practice, however, such a complicated network is unnecessary, since learning the shape of distribution is only an auxiliary task to facilitate representation learning, as mentioned in Section~\ref{subsec:background_distributional_rl}. For the sake of simplicity, we choose the shape function to be additive and non-learnable: $Z_\mathrm{shape}=\shapedecompositionfunction(\utility{}_1(\observationhistory{}_1,\action{}_1), ..., \utility{}_{\numberofagents{}}(\observationhistory{}_{\numberofagents{}},\action{}_{\numberofagents{}})\vert\state)=\sum_{\agentcounter\in\agentspace}(Z_\agentcounter-\mathbb{E}[Z_\agentcounter])$.

\subsubsection{Shape Function for C51}
\label{subsec:shape_function_for_c51}
In this section, we demonstrate the implementation of the shape function for C51, which models the utilities as PMFs.
We denote the PMFs of the utilities as $[f_\agentcounter]_{\agentcounter\in\agentspace}$, and the PMF of the joint return as $f$.
For the sake of simplicity, we first consider the case for two agents ($\numberofagents{}=2$).
To exactly factorize the total return ($Z=Z_1+Z_2$), the PMF $f$ of the joint return $Z$ can be expressed as:
\begin{equation}
\begin{split}
f(x)&=\Pr(Z=x) \\
&=\Pr(Z_1+Z_2=x) \\
&=\sum_{y\in\mathbb{R}} \Pr(Z_1=y)\Pr(Z_2=x-y|Z_1=y).
\end{split}
\label{eq:exact_decomposition}
\end{equation}
Since the above convolution requires the calculation of conditional probability, it is computationally intractable. A common solution to it is to assume $Z_1$ and $Z_2$ to be mutually independent, which is a suitable assumption for decentralized utilities. This approach can be viewed as an approximation of the exact factorization method (Eq~(\ref{eq:exact_decomposition})).
The approximation can be expressed as the following:
\begin{equation}
\begin{split}
f(x)&=\sum_{y\in\mathbb{R}} \Pr(Z_1=y)\Pr(Z_2=x-y|Z_1=y) \\
&\approx\sum_{y\in\mathbb{R}} \Pr(Z_1=y)\Pr(Z_2=x-y) \\
&=(f_1*f_2)(x),
\label{eq:decomposition}
\end{split}
\end{equation}
where the star symbol ($*$) denotes the convolution operation. Eq~(\ref{eq:decomposition}) is tractable and allows C51 to model the PMF of each agent by a categorical distribution with $n$ atoms (i.e., categories). The time complexity of deriving $f$ for $\numberofagents{}$ agents is:
\begin{equation}
    n\cdot n+n^2\cdot n+...+n^{\numberofagents{}-1}\cdot n=\sum^{\numberofagents{}-1}_{k=1}n^{k+1}=O(n^\numberofagents{}).
    \label{eq:projection}
\end{equation}
However, the time complexity grows exponentially as the number of the agents $\numberofagents{}$ increases. This becomes the training bottleneck and remains computationally infeasible when $\numberofagents{}$ is scaled to a large number. The computation cost can be reduced by performing an additional $O(n)$ heuristic projection after each convolution operation~\citep{Lin2019DR-DRL}. The time complexity is then simplified to the following:
\begin{equation}
    (n\cdot n + O(n))\cdot(\numberofagents{}-1)=O(\numberofagents{}n^2).
\end{equation}
The additional heuristic projections enables the decomposition process to be computationally feasible. However, such projections may induce an increase in variance for $Z$~\citep{Rowland2019ER-DQN}. If the return distribution of each agent shares the same support, the above $O(\numberofagents{}n^2)$ complexity can be improved by Fast Fourier Transform (FFT) convolutions. The technique is based on the convolution theorem, and is able to further reduce the time complexity to:
\begin{equation}
    (n\log n + n + n\log n + O(n))\cdot(\numberofagents{}-1)=O(\numberofagents{}n\log n).
\end{equation}
\subsubsection{Shape Function for IQN}

Distributional RL methods such as IQN~\citep{Dabney2018IQN} do not explicitly model a distribution. They only implicitly model a distribution, and hence, the previously mentioned decomposition methods for C51 are not directly applicable. A potential solution is to add constraints on IQN to make the modeled distribution explicit~\citep{Yang2019FQF}, and iteratively apply convolution operations. To maintain the flexibility of IQN and to further reduce the time complexity, instead of imposing additional constraints, we propose to use quantile mixture to approximately factorize the distribution. In the following theorem, we prove that such a method is a valid factorization.
\begin{theorem}
\label{thm:sum_of_rv}
Given a quantile mixture:
\begin{equation}
\quantilefunction(\quantile{})=\sum^{\numberofagents{}}_{k=1}\modelparameter{}_{\agentcounter{}} \quantilefunction_{\agentcounter{}}(\quantile{}),
\notag{}
\end{equation}
with $\numberofagents{}$ components $[\quantilefunction_{\agentcounter{}}]_{\agentcounter{}\in\agentspace{}}$, non-negative model parameters $[\modelparameter{}_{\agentcounter{}}]_{\agentcounter{}\in\agentspace{}}$, and $\quantile\in[0,1]$. There exist random variables $\utility{}$ and $[\utility_{\agentcounter{}}]_{\agentcounter{}\in\agentspace{}}$ derived from the quantile functions $\quantilefunction$ and $[\quantilefunction_{\agentcounter{}}]_{\agentcounter{}\in\agentspace{}}$, respectively, with the following relationship:
\begin{equation*}
\utility{}=\sum_{\agentcounter\in\agentspace{}}\modelparameter{}_{\agentcounter{}} \utility{}_{\agentcounter{}}.
\end{equation*}

\end{theorem}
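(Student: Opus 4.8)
The plan is to use \emph{inverse transform sampling} with a single shared uniform random variable, which realizes precisely the ``specific dependence structure'' alluded to in Section~\ref{subsec:background_quantile_mixture}. First I would fix a probability space carrying $U \sim U([0,1])$ and, for each $\agentcounter \in \agentspace$, define $\utility_{\agentcounter} := \quantilefunction_{\agentcounter}(U)$. The defining property of the generalized inverse CDF in Eq.~(\ref{eq:quantile_function_to_inverse_cdf}) --- namely the equivalence $\quantilefunction_{\agentcounter}(\quantile) \le x \iff \quantile \le \cdf_{\utility_{\agentcounter}}(x)$ --- shows that $\utility_{\agentcounter}$ is a random variable whose quantile function is exactly $\quantilefunction_{\agentcounter}$, so each $\utility_{\agentcounter}$ is ``derived from'' $\quantilefunction_{\agentcounter}$ in the required sense.

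Next I would set $\utility := \sum_{\agentcounter \in \agentspace} \modelparameter_{\agentcounter} \utility_{\agentcounter}$. By construction $\utility = \sum_{\agentcounter \in \agentspace} \modelparameter_{\agentcounter} \quantilefunction_{\agentcounter}(U) = \quantilefunction(U)$, using the definition of the quantile mixture in Eq.~(\ref{eq:quantile_mixture}) evaluated at the random argument $U$. It then remains to verify that $\utility$ is ``derived from'' $\quantilefunction$, i.e.\ that $\quantilefunction$ is the quantile function of $\utility$. Here I would use that $\quantilefunction$ is a non-negative linear combination of nondecreasing, left-continuous functions, hence itself nondecreasing and left-continuous; together with the standing hypothesis attached to quantile mixtures (that $\quantilefunction$ satisfies the properties of a quantile function), the same inverse-transform equivalence used above yields that $\quantilefunction(U)$ has quantile function $\quantilefunction$. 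Consequently $\utility = \quantilefunction(U)$ is derived from $\quantilefunction$, the family $[\utility_{\agentcounter}]_{\agentcounter \in \agentspace}$ is derived from $[\quantilefunction_{\agentcounter}]_{\agentcounter \in \agentspace}$, and the asserted identity $\utility = \sum_{\agentcounter \in \agentspace} \modelparameter_{\agentcounter} \utility_{\agentcounter}$ holds by definition.

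The only delicate step --- and the one I would write out carefully --- is the equivalence $\quantilefunction(\quantile) \le x \iff \quantile \le \cdf(x)$ for generalized inverses: since $\quantilefunction$ is merely a nondecreasing left-continuous function and not literally an inverse, this is what legitimizes the claim ``$\quantilefunction(U) \eqd$ the variable having $\quantilefunction$ as its quantile function.'' Everything else is routine: nonnegativity of the $\modelparameter_{\agentcounter}$ preserves monotonicity under summation, and coupling all components through the \emph{same} $U$ is what makes the linear relation hold pointwise rather than merely in distribution. I would also remark that the construction only witnesses \emph{existence} --- any coupling realizing the comonotonic dependence of the marginals works equally well --- which is all the theorem asks for.
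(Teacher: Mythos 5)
Your construction is essentially the paper's own proof: the paper takes $\Omega=[0,1]$ and defines $\utility(\quantile)=\quantilefunction(\quantile)$ and $\utility_\agentcounter(\quantile)=\quantilefunction_\agentcounter(\quantile)$ pointwise, which is exactly your comonotonic coupling $\utility_\agentcounter=\quantilefunction_\agentcounter(U)$ with $U$ the identity on $[0,1]$, and the linear relation then holds pointwise just as you argue. Your explicit verification via the generalized-inverse equivalence $\quantilefunction(\quantile)\le x \Leftrightarrow \quantile\le\cdf(x)$ that the constructed variables really have the claimed quantile functions is a welcome tightening of a step the paper leaves implicit, but it is the same argument, not a different route.
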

\begin{proof}
The proof is outlined as follows. First, we provide the notational details of the random variables. Next, we define the explicit forms of $Z$ and $[Z_k]_{\agentcounter\in\agentspace}$. Finally, we prove the relationship $Z=\sum_{\agentcounter\in\agentspace}\modelparameter{}_{\agentcounter{}}\utility_{\agentcounter{}}$.

The domain of $Z$ and $[Z_k]_{\agentcounter\in\agentspace}$ is $\Omega$, where $\Omega$ is the sample space of the probability space $(\Omega,\mathcal{F}, P)$.
The codomain of $Z$ and $[Z_k]_{\agentcounter\in\agentspace}$ is $\mathbb{R}$, where $\mathbb{R}$ is the sample space of the probability spaces $(\mathbb{R}, \mathcal{B}, \mu)$ and $[(\mathbb{R}, \mathcal{B}, \mu_k)]_{\agentcounter\in\agentspace}$.
Based on the definition of random variables in the measure theory, $(\Omega,\mathcal{F}, P)$, $(\mathbb{R}, \mathcal{B}, \mu)$, and $[(\mathbb{R}, \mathcal{B}, \mu_k)]_{\agentcounter\in\agentspace}$ are defined as follows:
\begin{equation*}
\begin{split}
& \begin{cases}
\Omega=[0,1], \text{ is the sample space},\\
\mathcal{F}, \text{ is the Borel $\sigma$-algebra over $\Omega$},\\
P:\mathcal{F}\rightarrow [0,1], \text{ is a probability measure on the measurable space $(\Omega,\mathcal{F})$}.
\end{cases}\\
& \begin{cases}
\mathbb{R}=(-\infty,\infty), \text{ is the real line},\\
\mathcal{B}, \text{ is the Borel $\sigma$-algebra over $\mathbb{R}$},\\
\mu:\mathcal{B}\rightarrow [0,1], \text{ is a probability measure on the measurable space $(\mathbb{R}, \mathcal{B})$},\\
[\mu_k:\mathcal{B}\rightarrow [0,1]]_{\agentcounter\in\agentspace}, \text{ are probability measures on the measurable space $(\mathbb{R}, \mathcal{B})$}.
\end{cases}
\end{split}
\end{equation*}
The random variables $Z$ and $[Z_k]_{\agentcounter\in\agentspace}$ are defined as functions based on the probability spaces:
\begin{equation*}
\begin{cases}
Z:\Omega\rightarrow\mathbb{R}, \text{ with its CDF defined as $F(x)=\mu(-\infty,x]$},\\
[Z_k:\Omega\rightarrow\mathbb{R}]_{\agentcounter\in\agentspace}, \text{ with their CDFs defined as $F_k(x)=\mu_k(-\infty,x], \forall \agentcounter\in\agentspace$}.\\
\end{cases}
\end{equation*}
We can then construct the random variables $Z$ and $[\utility_{\agentcounter{}}]_{\agentcounter{}\in\agentspace{}}$ based on the components of the quantile mixture:
\begin{equation*}
\begin{cases}
\utility(\quantile)=\quantilefunction(\quantile), \forall \quantile\in\Omega, \\
\utility_\agentcounter{}(\quantile)=\quantilefunction_{\agentcounter{}}(\quantile), \forall \quantile\in\Omega, \forall \agentcounter\in\agentspace. \\
\end{cases}
\end{equation*}
\begin{equation*}
\end{equation*}
With the explicit forms of $Z$ and $[Z_k]_{\agentcounter\in\agentspace}$, their relationship can be proved as follows:
\begin{align*}
& \quantilefunction(\quantile{})=\sum^{\numberofagents{}}_{k=1}\modelparameter{}_{\agentcounter{}} \quantilefunction_{\agentcounter{}}(\quantile{}) \\
\Rightarrow\ & Z(\quantile)=\sum_{\agentcounter\in\agentspace}\modelparameter{}_{\agentcounter{}}\utility_{\agentcounter{}}(\quantile) \\
\Rightarrow\ & Z=\sum_{\agentcounter\in\agentspace}\modelparameter{}_{\agentcounter{}}\utility_{\agentcounter{}}
\end{align*}
\end{proof}
In the optimization process, assume that all components $[\quantilefunction_{\agentcounter{}}]_{\agentcounter{}\in\agentspace{}}$ in the quantile mixture sample $n$ quantiles, the total time complexity of summing the components is $O(\numberofagents{}n)$, which grows linearly with respect to $\numberofagents{}$ and $n$. The implementation is much easier and efficient when compared with the other methods described in the previous section. This suggests that quantile mixture is unlikely to become the training bottleneck, even if the values of $\numberofagents{}$ and $n$ are both scaled to large numbers. It is worth noting that these two shape functions (i.e., convolution and quantile mixture) make different assumptions for modeling the joint return distribution.

This approach, however, trades off the exact factorization of the shape of the distribution for linear time complexity. Therefore, such a technique should only be used when the errors of capturing the shapes of the factorized distributions are acceptable. This technique is especially suitable for value function factorization methods since capturing the shapes of the factorized distributions is an auxiliary task.

\begin{figure*}[t]
\includegraphics[width=\linewidth]{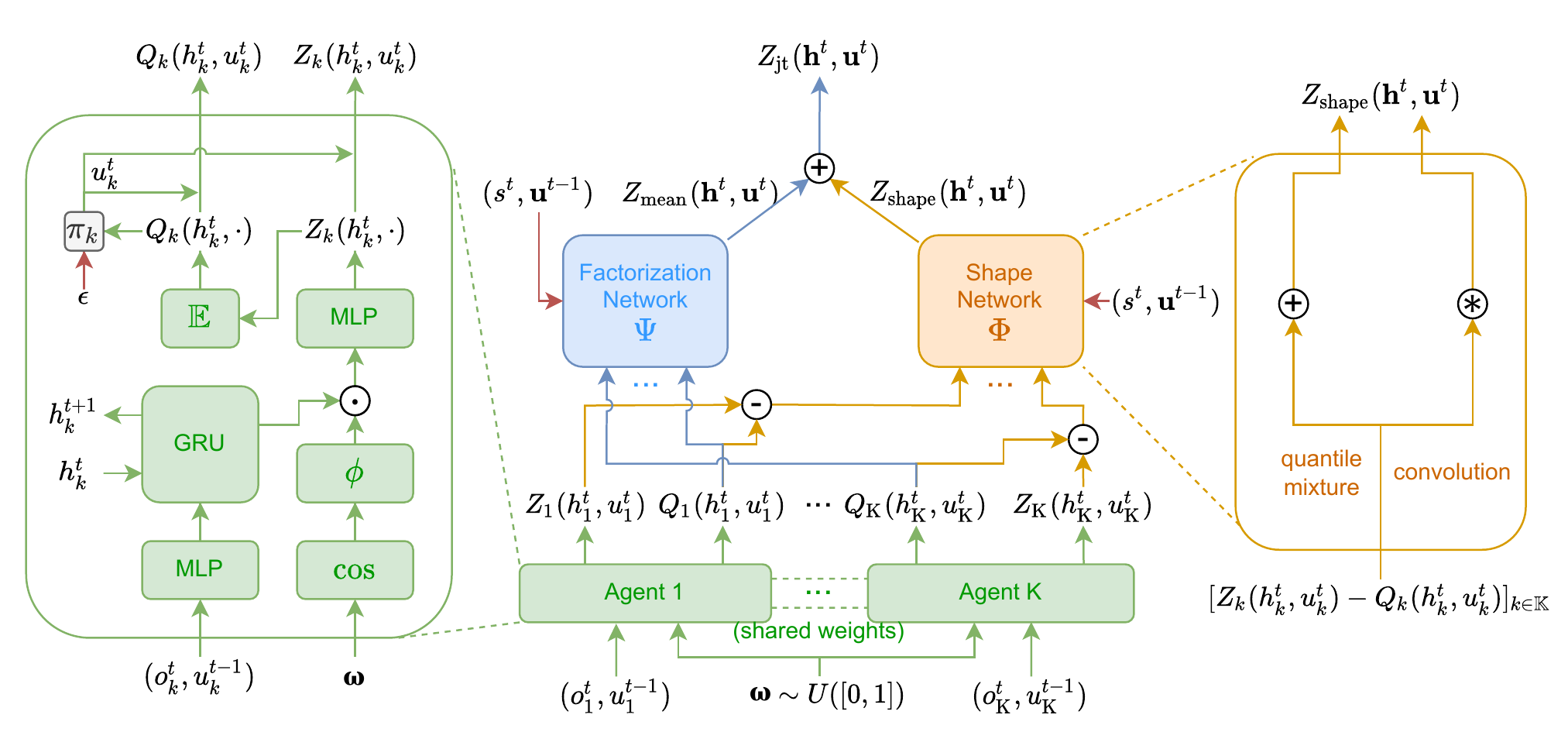}
\caption{The DFAC framework consists of a factorization function $\meandecompositionfunction{}$ and a shape function $\shapedecompositionfunction$ for decomposing the deterministic part $\utility{}_{\text{mean}}$ (i.e., $\utilityexp{}_{\joint{}}$) and the stochastic part $\utility{}_{\text{shape}}$ of the total return distribution $\utility{}_{\joint{}}$, as described in Theorem~\ref{thm:dfac}. If the shape function is a convolution, the network of the agents is implemented as C51. If the shape function is a quantile mixture, the network of the agents is implemented as IQN.
}
\label{fig:dfac}
\end{figure*}


\subsection{DFAC Variant of VDN, QMIX, and QPLEX}
\label{subsec:methodology_distributional_variant_of_vdn_and_qmix}

In order to validate the proposed DFAC framework, we next discuss the DFAC variants of three representative factorization methods: VDN, QMIX, and QPLEX.
For simplicity, these DFAC variants assume the shape function to be additive and non-learnable, and uses the quantile mixture technique instead of convolution.
The DFAC variant of VDN is named \ddn{}, which is expressed as follows:
\begin{equation}
\utility_\joint=\sum_{\agentcounter\in\agentspace{}}\utilityexp_{\agentcounter}+\sum_{\agentcounter\in\agentspace{}}(\utility_{\agentcounter}-\utilityexp_{\agentcounter}).
\label{eq:ddn}
\end{equation}
Eq~(\ref{eq:ddn}) combines the factorization function of VDN (i.e., $\meandecompositionfunction=\sum_{\agentcounter\in\agentspace{}}\utilityexp_{\agentcounter}$) and the additive shape function (i.e.,  $\shapedecompositionfunction=\sum_{\agentcounter\in\agentspace{}}(\utility_{\agentcounter}-\utilityexp_{\agentcounter})$).
In addition, the DFAC variant of QMIX is named DMIX, which is defined as the following:
\begin{equation}
\utility_\joint=\monotonicfunction{}(\utilityexp_1,...,\utilityexp_\numberofagents\vert\state)+\sum_{\agentcounter\in\agentspace{}}(\utility_{\agentcounter}-\utilityexp_{\agentcounter}).
\label{eq:dmix}
\end{equation}
Eq~(\ref{eq:dmix}) combines the factorization function of QMIX (i.e., $\meandecompositionfunction=\monotonicfunction{}(\utilityexp_1,...,\utilityexp_\numberofagents\vert\state)$), and the additive shape function (i.e., $\shapedecompositionfunction=\sum_{\agentcounter\in\agentspace{}}(\utility_{\agentcounter}-\utilityexp_{\agentcounter})$). Lastly, the DFAC variant of QPLEX is named DPLEX, which is formulated as follows:
\begin{equation}
\utilityexp{}_{\joint{}}(\jointobservationhistory{},\jointaction{}) = \sum^{\numberofagents{}}_{\agentcounter{}=1} \utilitystate_\agentcounter{}(\jointobservationhistory{}) + \sum^{\numberofagents{}}_{\agentcounter{}=1}\lambda_\agentcounter{}(\jointobservationhistory{},\jointaction{}) \utilityadv_\agentcounter{}(\jointobservationhistory{},\action{}_\agentcounter{})+\sum_{\agentcounter\in\agentspace{}}(\utility_{\agentcounter}-\utilityexp_{\agentcounter}).
\label{eq:dplex}
\end{equation}
Eq~(\ref{eq:dplex})  is equivalent to summing the factorization function of QPLEX (i.e., $\meandecompositionfunction=\sum^{\numberofagents{}}_{\agentcounter{}=1} \utilitystate_\agentcounter{}(\jointobservationhistory{}) + \sum^{\numberofagents{}}_{\agentcounter{}=1}\lambda_\agentcounter{}(\jointobservationhistory{},\jointaction{}) \utilityadv_\agentcounter{}(\jointobservationhistory{},\action{}_\agentcounter{})$), and the additive shape function (i.e., $\shapedecompositionfunction=\sum_{\agentcounter\in\agentspace{}}(\utility_{\agentcounter}-\utilityexp_{\agentcounter})$).
These DFAC variants can be illustrated in the form of Fig.~\ref{fig:dfac} with different factorization functions.

\section{A Stochastic Matrix Game}
\label{sec:distributional_2_step_game}

In the previous expected value function factorization methods (e.g., VDN, QMIX, QPLEX, etc.), the factorization is achieved by modeling $\utilityexp{}_{\joint{}}$ and $[\utilityexp{}_{\agentcounter{}}]_{\agentcounter{}\in\agentspace{}}$ as deterministic variables, overlooking the information of higher moments in the full return distributions $\utility{}_{\joint{}}$ and $[\utility{}_{\agentcounter{}}]_{\agentcounter{}\in\agentspace{}}$.
In order to demonstrate DFAC's ability of factorization, we begin with a toy example modified from~\cite{Wang2020QPLEX} to show that DFAC is able to approximate the true return distributions, and factorize the mean and variance of the approximated total return $\utility{}_{\joint{}}$ into utilities $[\utility{}_{\agentcounter{}}]_{\agentcounter{}\in\agentspace{}}$.


\subsection{Experimental Setup}
\label{subsec:matrix_game_setup}

In this section, we describe the environment and methods, the architecture of the policy network, the architecture of the factorization network, and finally the training and execution details.

\textbf{Environment and Methods.}
Table~\ref{table:matrix-game} illustrates a fully cooperative matrix game with stochastic rewards for two agents. The game only contains a single state, and each episode contains only a single step. In each episode, both agents choose an action and receive a global reward sampled from a normal distribution $\mathcal{N}(\mu,\sigma^2)$ with mean $\mu$ and standard deviation $\sigma$.
In this matrix game, we perform experiments on nine value function factorization methods: VDN, QMIX, and QPLEX are the baselines; DDN, DMIX, and DPLEX are the DFAC variants using IQN and quantile mixture; DDN-C51, DMIX-C51, and DPLEX-C51 correspond to the DFAC variants that use C51 and convolution. These methods are tuned independently and may have different hyperparameters.

\textbf{Architecture of the Policy Network.}
Each agent's policy network is implemented as an artificial neural network (ANN). QPLEX is implemented as a single-layered ANN with its hidden layer comprised of $32$ units. DPLEX and DPLEX-C51 are implemented as two-layered ANNs comprised of $64$ units and $32$ units, with a ReLU nonlinearity at the end of the first layer. The other methods follow a similar architecture used by DPLEX and DPLEX-C51, where the second layer is changed to $512$ units.

For DDN and DMIX, we optimize the IQNs of the agents with $\numberofquantiles{}=\numberofquantilesamples{}=32$ quantile samples, where each of them is first encoded into a $64$-dimensional intermediate embedding and then projected to a $512$-dimensional quantile embedding by a single hidden layer. As for DPLEX, we optimize the IQNs with $\numberofquantiles{}=\numberofquantilesamples{}=512$ quantile samples, where each of them is first encoded  into a $256$-dimensional intermediate embedding and then projected to a $32$-dimensional quantile embedding by a single hidden layer.

For DDN-C51 and DMIX-C51, we optimize the C51 networks of the agents with $51$ atoms uniformly distributed in $[-20,20]$.

\textbf{Architecture of the Factorization Network.}
For \dmix{} and its variants, we use a single-layered mixing network with eight units. As for QPLEX and its variants, we use a single-layered monotonic network with $16$ units, and an attention network $\lambda_\agentcounter{}$ (i.e., in Eq~(\ref{eq:qplex})) with ten attention heads. Each of the attention heads utilizes a three-layered network with $64$ units in the intermediate embeddings for its key, value, and query extractors.

\textbf{Training and Execution.}
During training, each agent performs independent $\epsilon$-greedy action selection, with full exploration (i.e., $\epsilon=1$). The replay buffer contains experiences of the latest $2,000$ episodes, from which we uniformly sample a batch of $2,048$ samples when training QPLEX and its variants, and a batch of $512$ samples for training the other six methods. The target network is updated every $100$ episodes. The optimizer is set to Adam, in which its learning rate is set to $1 \times 10^{-3}$ for QPLEX and QPLEX-C51, and $1 \times 10^{-4}$ for the rest of the methods. We train each of the methods for $20,000$ episodes. All of the agent networks share the same parameters. The agents are differentiated by a one-hot encoded agent index (i.e., $[1\ 0]^T$ for agent $1$ and $[0\ 1]^T$ for agent $2$) concatenated to their observations.

\begin{figure*}[t]
\begin{tabular}{cc}
\begin{minipage}{0.47\textwidth}
\small
\setlength{\extrarowheight}{3pt}
\captionof{table}{The payoff matrix of the stochastic matrix game. Each agent performs an action from $\{A,B,C\}$, with a subscript denoting the agent's index. The global rewards are sampled from the normal distributions in the payoff matrix below.}\vspace{-0.3cm}\hspace{-0.5cm}
\begin{tabular}{cc|*{3}{>{\centering\arraybackslash}p{.22\linewidth}|}}
	& \multicolumn{1}{c}{} & \multicolumn{3}{c}{\bb{Agent $2$}} \\
	& \multicolumn{1}{c}{} & \multicolumn{1}{c}{\bb{$A_2$}} & \multicolumn{1}{c}{\bb{$B_2$}} & \multicolumn{1}{c}{\bb{$C_2$}} \\ 
	\cline{3-5}
    \multirow{2}{*}{\rotatebox[origin=c]{90}{\cc{Agent $1$}}} & \cc{$A_1$} & $\mathcal{N}(8,8)$ & $\mathcal{N}(-12,6)$ & $\mathcal{N}(-12,4)$ \\ \cline{3-5}
    & \cc{$B_1$} & $\mathcal{N}(-12,6)$ & $\mathcal{N}(6,4)$ & $\mathcal{N}(0,2)$  \\\cline{3-5}
    & \cc{$C_1$} & $\mathcal{N}(-12,4)$ & $\mathcal{N}(0,2)$ & $\mathcal{N}(6,0)$  \\\cline{3-5}
\end{tabular}
\label{table:matrix-game}
\end{minipage}
\begin{tabular}{cc}
\begin{minipage}{0.45\textwidth}
\footnotesize
\captionof{table}{The value approximation errors and the returns of different methods.}
\begin{tabular}{l|c|c|c}
\toprule
Method & Q-dist & W-dist & Return \\
\midrule
VDN & 6.91 & 7.01 & 6 \\
DDN & 6.50 & 6.50 & 6 \\
DDN-C51 & 6.01 & 6.17 & 6 \\
\hline
QMIX & 4.86 & 4.92 & 6 \\
DMIX & 3.76 & 4.58 & 6 \\
DMIX-C51 & 4.96 & 5.29 & 6 \\
\hline
QPLEX & 0.10 & 1.47 & \textbf{8} \\
DPLEX & \textbf{0.06} & \textbf{0.24} & \textbf{8} \\
DPLEX-C51 & 0.33 & 0.47 & 6 \\
\bottomrule
\end{tabular}
\label{table:matrix-game-results}
\end{minipage}
\end{tabular}
\end{tabular}
\end{figure*}


\subsection{Experimental Results}

Each column of Table~\ref{table:matrix-game-results} presents a key metric of the learned policy after convergence. \textit{Q-dist} is defined as $\mathbb{E}[\vert\utilityexp_\joint - \utilityexp^*_\joint\vert]$, which is the averaged absolute distance between the approximated Q-value $\utilityexp_\joint$ and the true Q-value $\utilityexp^*_\joint$ under the optimal policy across all joint actions. \textit{W-dist} represents $\mathbb{E}[W_1(\utility_\joint, \utility^*_\joint)]$, which is the averaged 1-Wasserstein distance between the approximated return $\utility_\joint$ and the true return $\utility^*_\joint$ under the optimal policy. \textit{Return} reports the expectation of the returns following the learned policy.
Each row of Table~\ref{table:matrix-game-results} corresponds to one of the methods mentioned in Section~\ref{subsec:matrix_game_setup}.

Based on the results, we observed that the methods with more expressiveness tend to have lower values of \textit{Q-dist}. More specifically, QPLEX and its variants are able to model the expected return more precisely when compared to VDN, QMIX, and their variants.
Moreover, the DFAC variants tend to have lower values of \textit{W-dist}, reflecting DFAC's capability of modeling the shapes of return distributions better than the baselines.
Furthermore, QPLEX and DPLEX can learn the optimal return for this game (i.e., \textit{Return} $=8$), while VDN, QMIX, and their variants are unable to, which aligns with the results presented in~\cite{Wang2020QPLEX}.
Unfortunately, DPLEX-C51 fails to learn the optimal return, even though it has the same theoretical expressiveness as QPLEX. We believe that this result is potentially due to the combination of C51's shortcomings described in Section~\ref{subsec:background_distributional_rl} and the convolutional operations used in the shape function for C51 mentioned in Section~\ref{subsec:shape_function_for_c51}.
\begin{figure*}[t]
\begin{tabular}{cc}
\begin{minipage}{0.7\textwidth} 
\includegraphics[width=\linewidth]{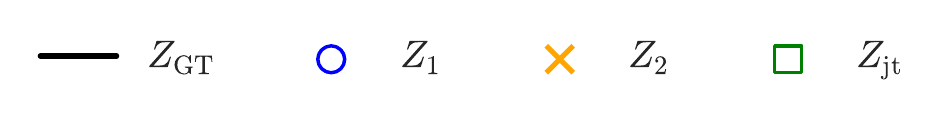}\vspace{-2em}
\label{fig:matrix-game-results-legend}
\end{minipage} \\
\begin{minipage}{0.3\textwidth} 
\includegraphics[width=\linewidth]{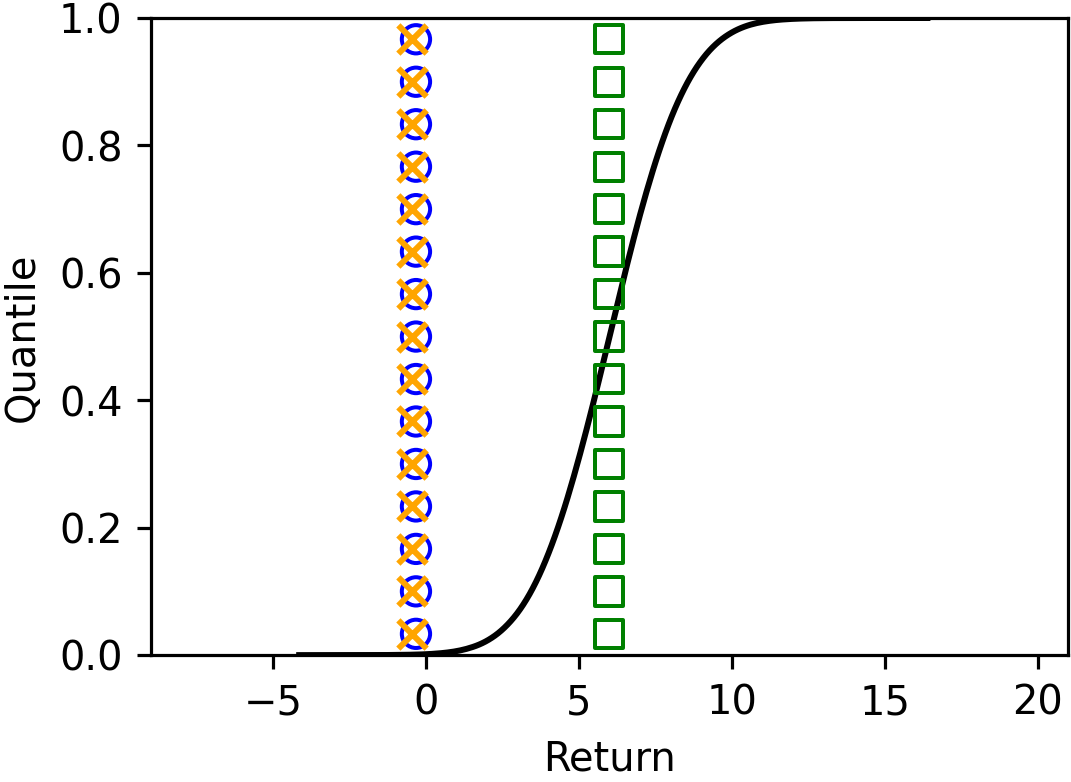}
\label{fig:matrix-game-results-qplex-1}
\vspace{-1.5em}
\subcaption{QPLEX}
\end{minipage}
\begin{minipage}{0.3\textwidth} 
\includegraphics[width=\linewidth]{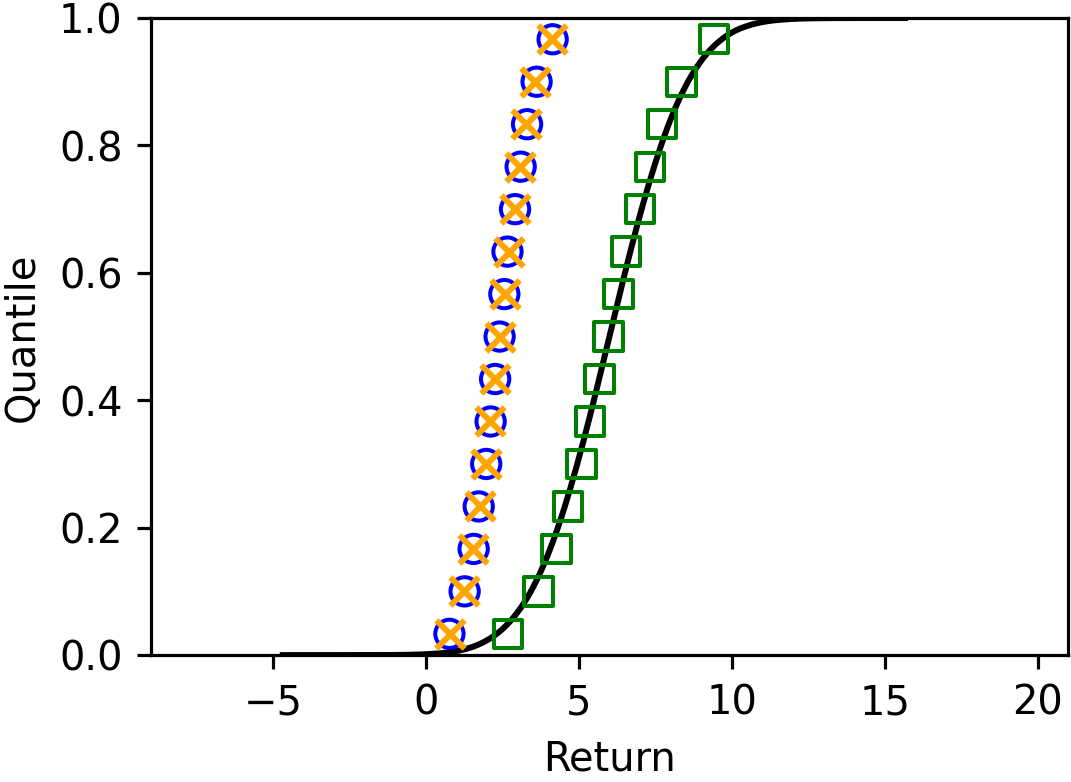}
\label{fig:matrix-game-results-dplex-1}
\vspace{-1.5em}
\subcaption{DPLEX}
\end{minipage}
\begin{minipage}{0.3\textwidth} 
\includegraphics[width=\linewidth]{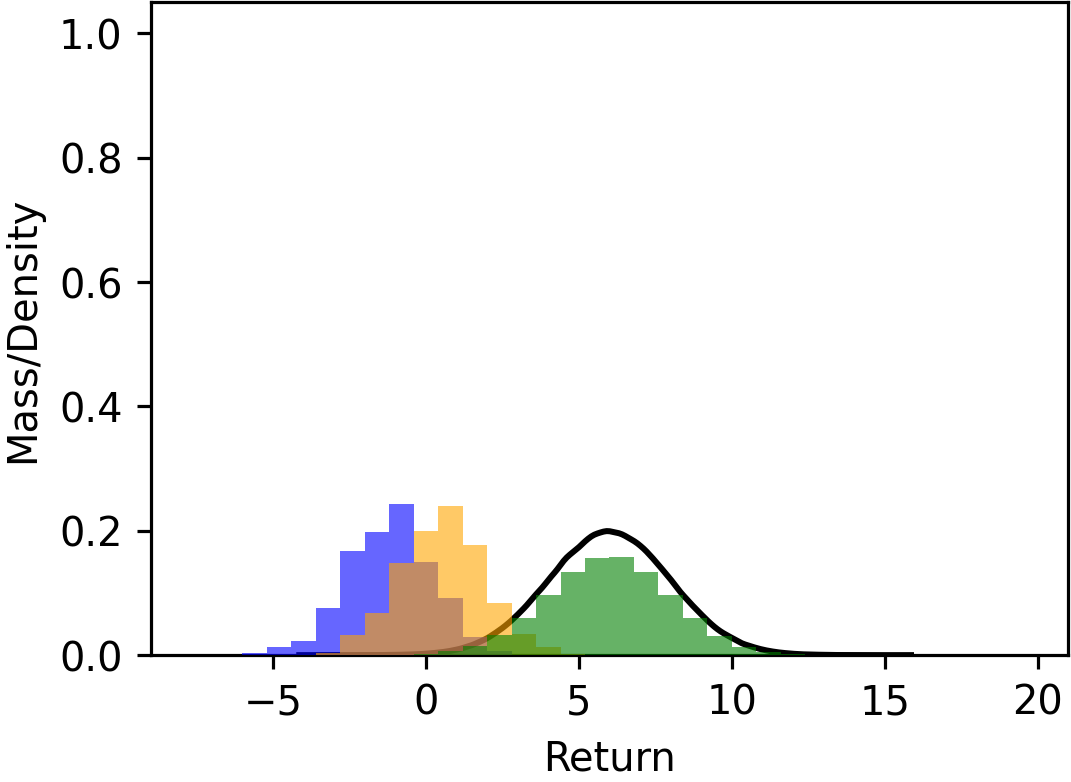}
\label{fig:matrix-game-results-dplex-c51-1}
\vspace{-1.5em}
\subcaption{DPLEX-C51}
\end{minipage}
\end{tabular}
\caption{The learned factorization of the joint values when the joint action $\langle B_1, B_2\rangle$ is selected. The ground truth ($Z_\mathrm{GT}$) of the joint value is the stochastic return $\mathcal{N}(6,4)$.}
\label{fig:matrix-game-results-1}

\vspace{1em}

\begin{tabular}{cc}
\begin{minipage}{0.3\textwidth} 
\includegraphics[width=\linewidth]{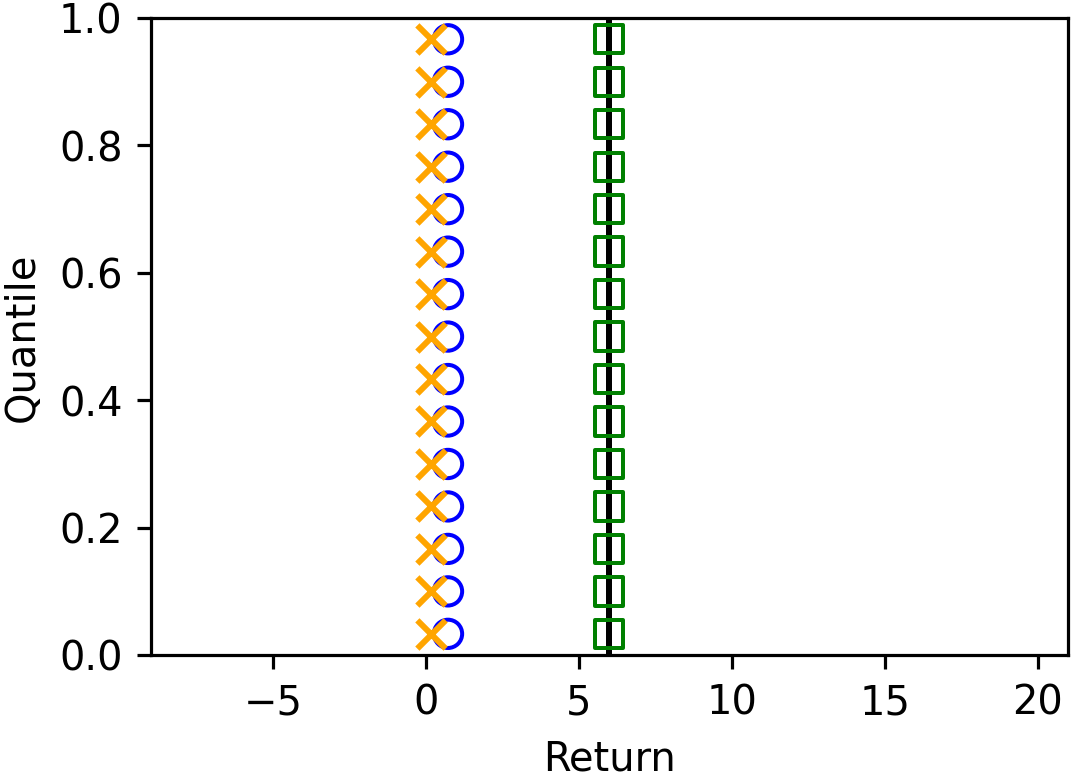}
\label{fig:matrix-game-results-qplex-2}
\vspace{-1.5em}
\subcaption{QPLEX}
\end{minipage}
\begin{minipage}{0.3\textwidth} 
\includegraphics[width=\linewidth]{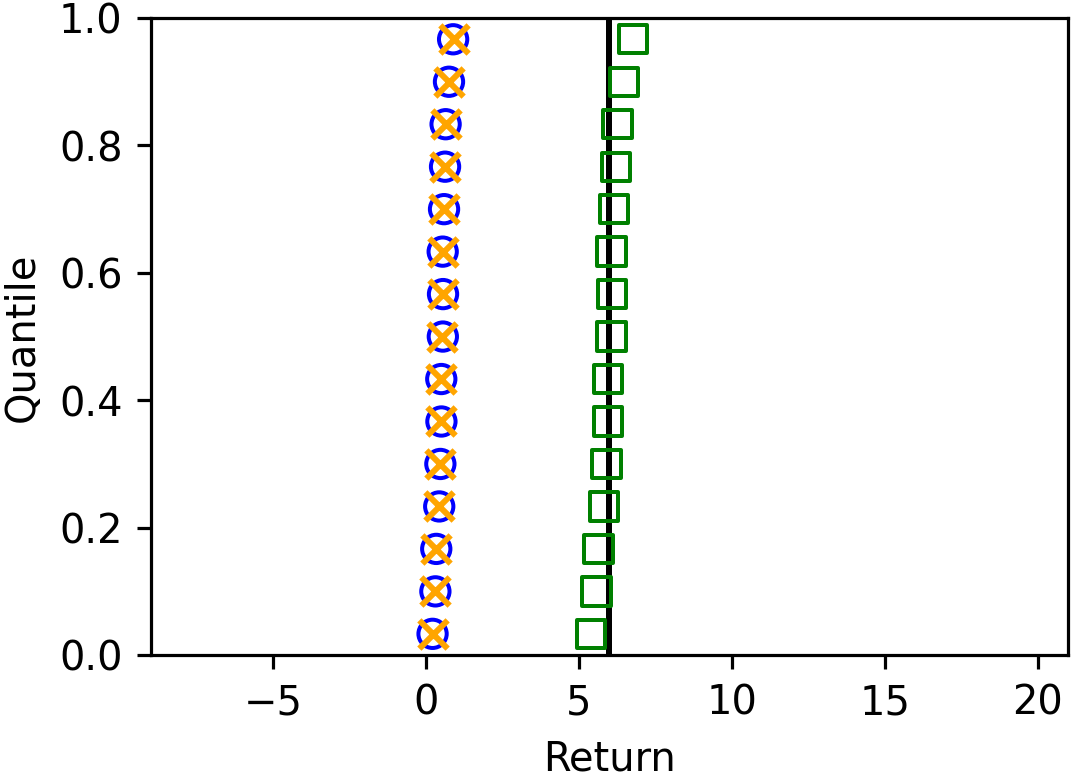}
\label{fig:matrix-game-results-dplex-2}
\vspace{-1.5em}
\subcaption{DPLEX}
\end{minipage}
\begin{minipage}{0.3\textwidth} 
\includegraphics[width=\linewidth]{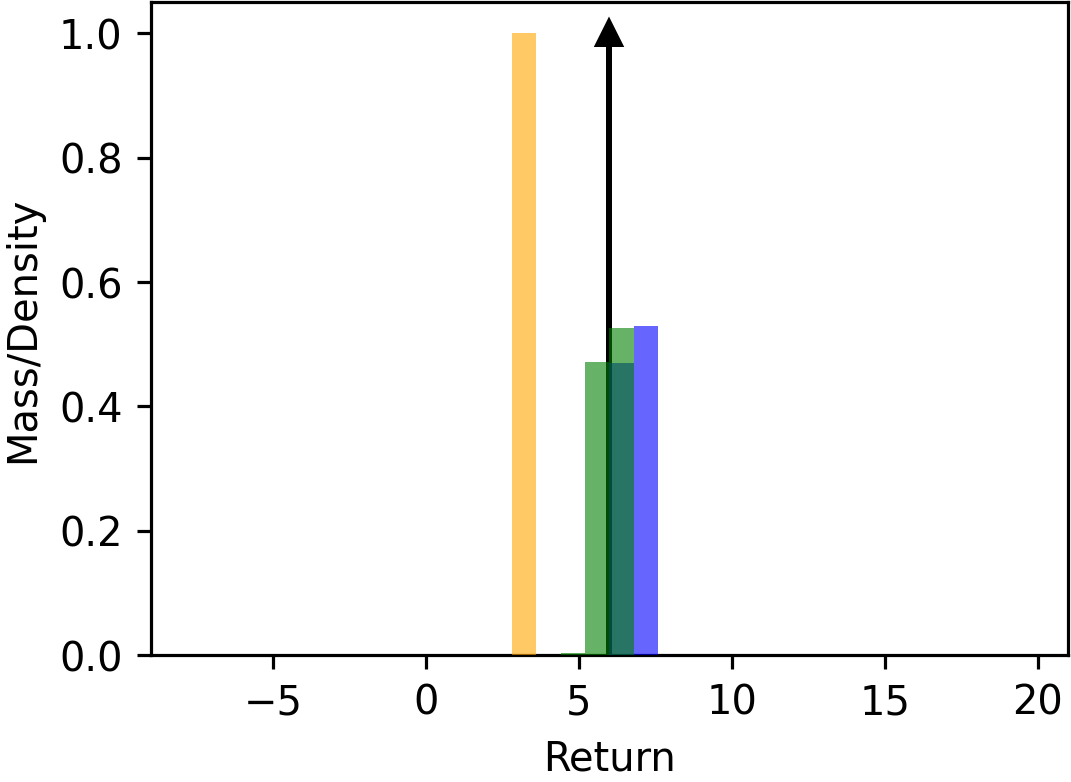}
\label{fig:matrix-game-results-dplex-c51-2}
\vspace{-1.5em}
\subcaption{DPLEX-C51}
\end{minipage}
\end{tabular}
\caption{The learned factorization of the joint values when the joint action $\langle C_1, C_2\rangle$ is selected. The ground truth ($Z_\mathrm{GT}$) of the joint value is the deterministic return $\mathcal{N}(6,0)$).}
\label{fig:matrix-game-results-2}
\end{figure*}

To further illustrate DFAC's capability of factorization, we visualize the learned factorization of the joint values for QPLEX, DPLEX, and DPLEX-C51, respectively, in Figs.~\ref{fig:matrix-game-results-1} and \ref{fig:matrix-game-results-2}. These results demonstrate that QPLEX can only model the shapes of deterministic distributions, while its DFAC variants can model the shapes of both stochastic and deterministic distributions.

Since DPLEX-C51 does not perform satisfactorily in the matrix game (as indicated in Table~\ref{table:matrix-game-results}), we only validate the IQN variants (DDN, DMIX, DPLEX) on the SMAC benchmark in the next section.

\section{Experiment Results on SMAC}
\label{sec:experiment_results}

In this section, we present the experimental results and discuss their implications. We start with a brief introduction to our experimental setup in Section~\ref{subsec:experiment_results_setup_of_smac}. Then, we demonstrate that modeling a full distribution is beneficial to the performance of independent learners in Section~\ref{subsec:experiment_results_independent_learners}. Finally, we compare the performances of the CTDE baseline methods and their DFAC variants in Section~\ref{subsec:experiment_results_super_hard} and Section~\ref{subsec:experiment_results_ultra_hard}.

\begin{figure*}[t]
\begin{tabular}{cc}
\begin{minipage}{1.0\textwidth} 
\includegraphics[width=\linewidth]{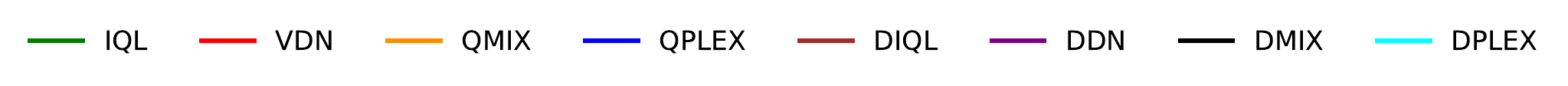}
\vspace{-1.5em}
\label{fig:win_rate_legend}
\end{minipage} \\
\begin{minipage}{0.3\textwidth} 
\includegraphics[width=\linewidth]{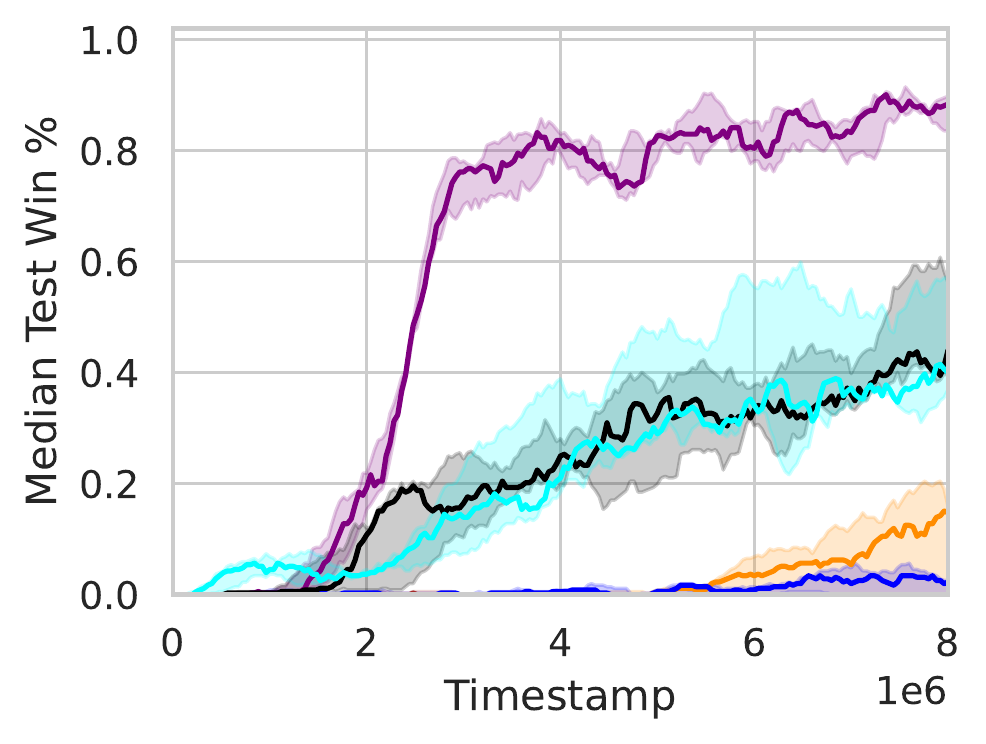}
\label{fig:win_rate_6h_vs_8z}
\vspace{-1.5em}
\subcaption{\texttt{6h\_vs\_8z}}
\end{minipage}
\begin{minipage}{0.3\textwidth} 
\includegraphics[width=\linewidth]{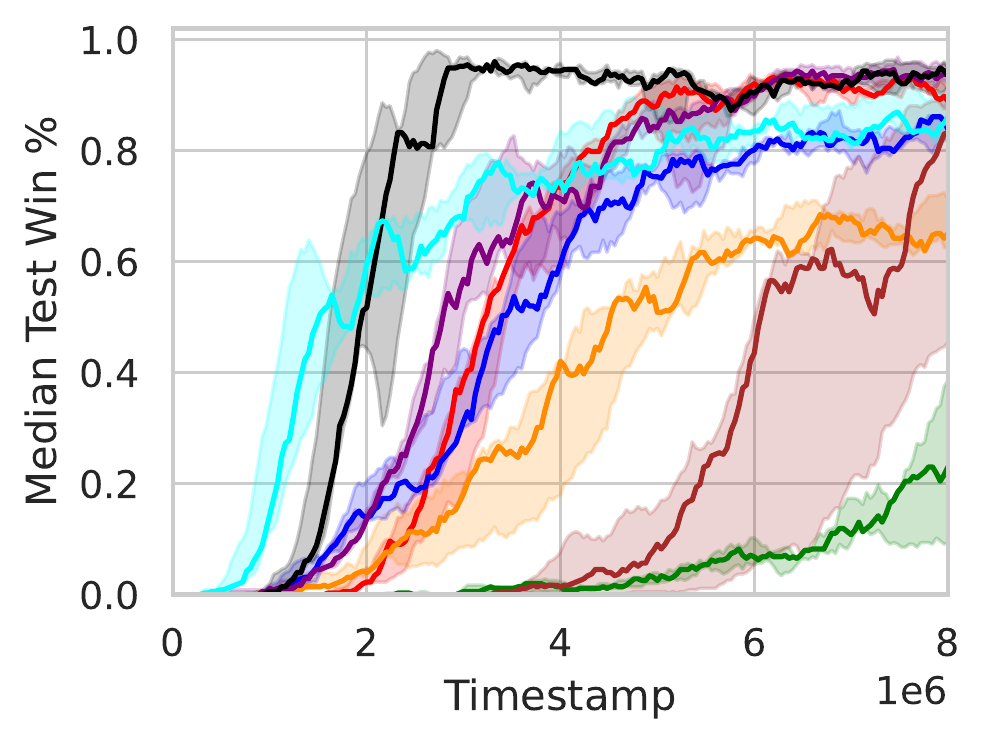}
\label{fig:win_rate_3s5z_vs_3s6z}
\vspace{-1.5em}
\subcaption{\texttt{3s5z\_vs\_3s6z}}
\end{minipage}
\begin{minipage}{0.3\textwidth} 
\includegraphics[width=\linewidth]{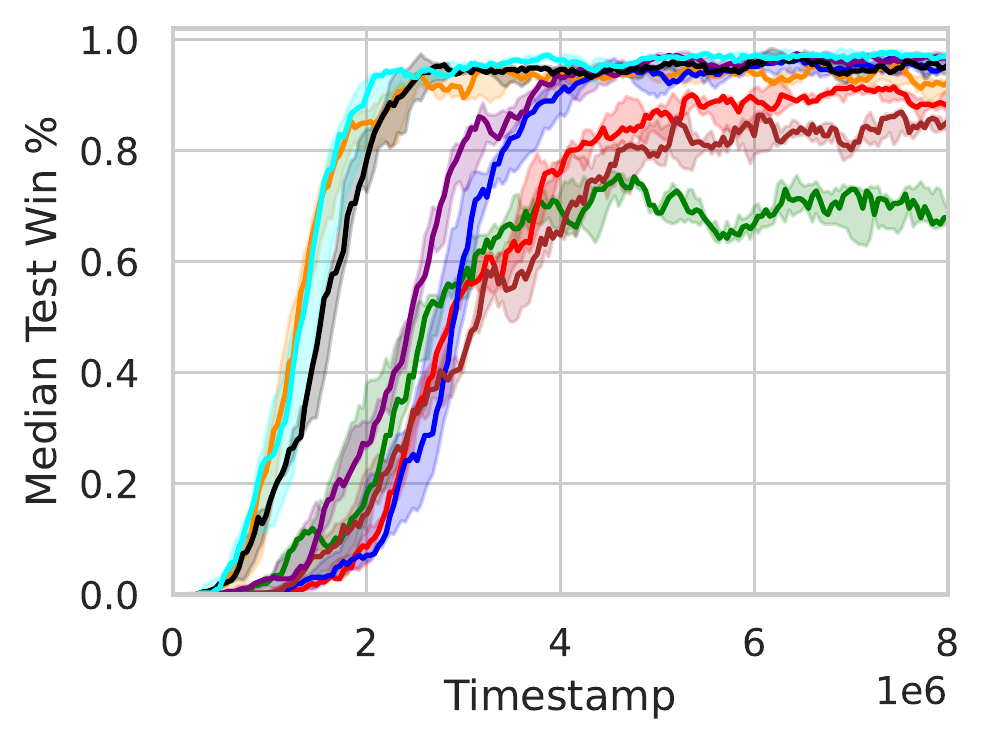}
\label{fig:win_rate_MMM2}
\vspace{-1.5em}
\subcaption{\texttt{MMM2}}
\end{minipage} \\
\begin{minipage}{0.3\textwidth} 
\includegraphics[width=\linewidth]{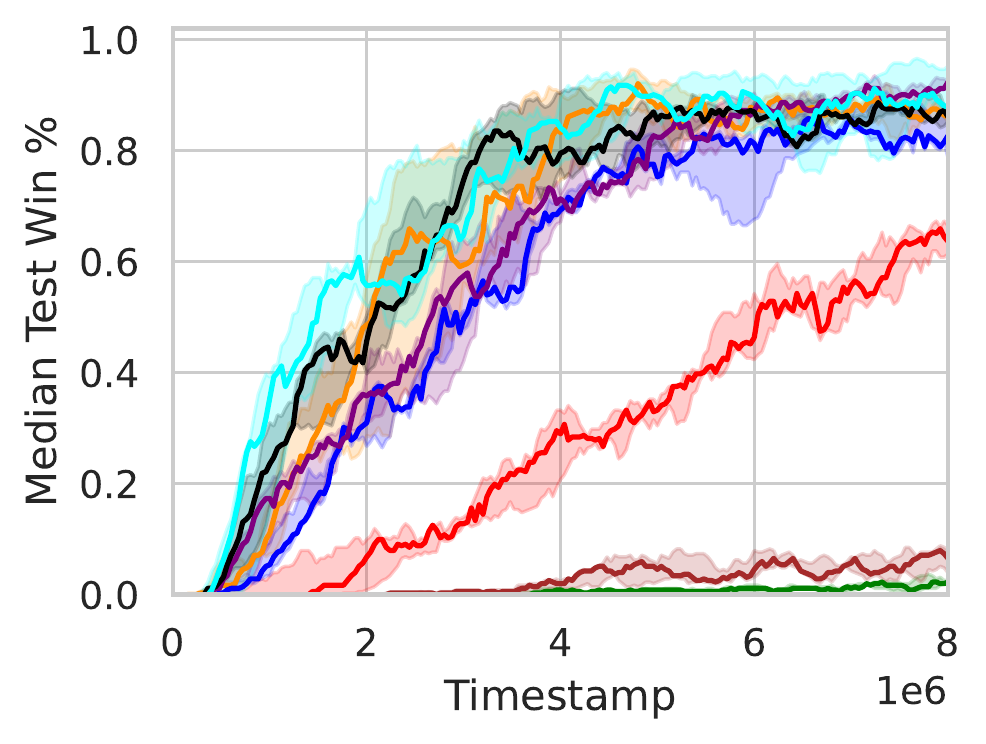}
\label{fig:win_rate_27m_vs_30m}
\vspace{-1.5em}
\subcaption{\texttt{27m\_vs\_30m}}
\end{minipage}
\begin{minipage}{0.3\textwidth} 
\includegraphics[width=\linewidth]{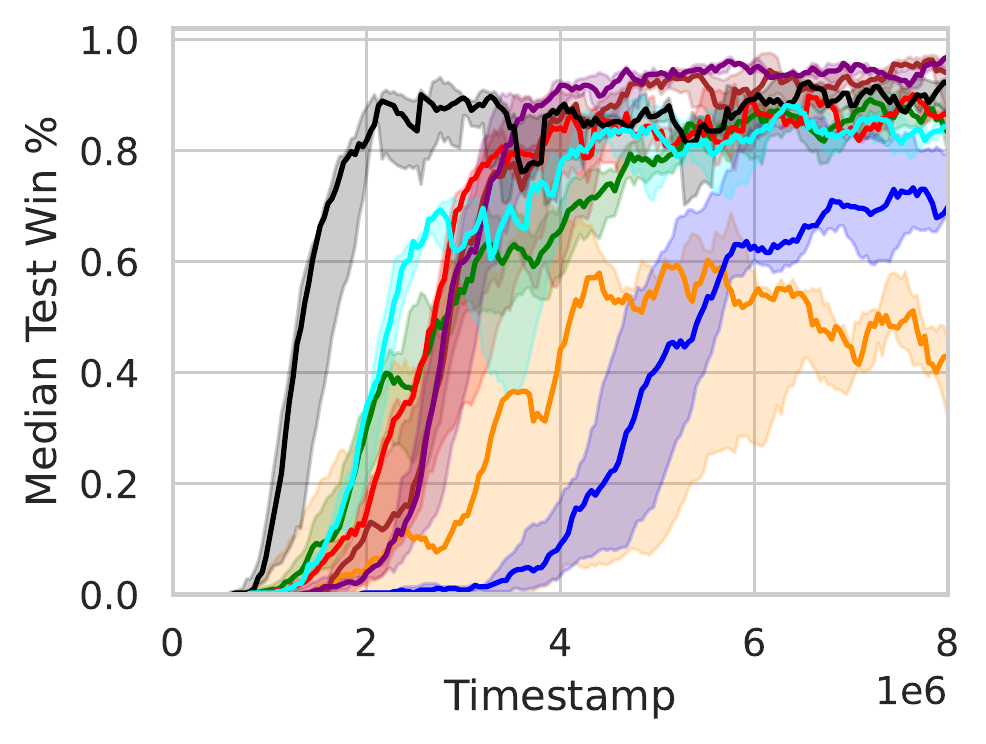}
\label{fig:win_rate_corridor}
\vspace{-1.5em}
\subcaption{\texttt{corridor}}
\end{minipage}
\end{tabular}
\caption{The win rate curves evaluated on the five \superhard{} maps in SMAC.}
\label{fig:smac_results_win_rate}
\end{figure*}

\begin{figure*}[t]
\centering
\begin{tabular}{cc}
\begin{minipage}{0.8\textwidth} 
\includegraphics[width=\linewidth]{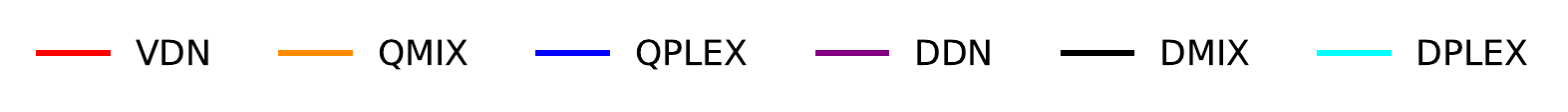}
\vspace{-2em}
\label{fig:win_rate_legend_ultra_hard}
\end{minipage} \\
\begin{minipage}{0.3\textwidth} 
\includegraphics[width=\linewidth]{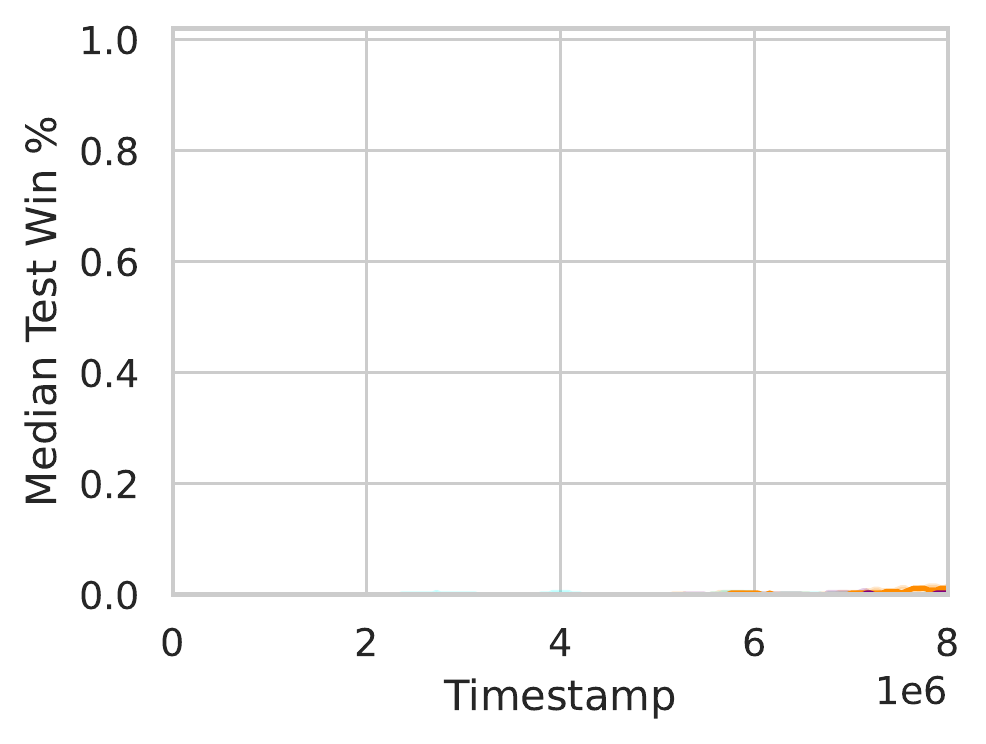}
\label{fig:win_rate_6h_vs_9z}
\vspace{-2em}
\subcaption{\texttt{6h\_vs\_9z}}
\end{minipage}
\begin{minipage}{0.3\textwidth} 
\includegraphics[width=\linewidth]{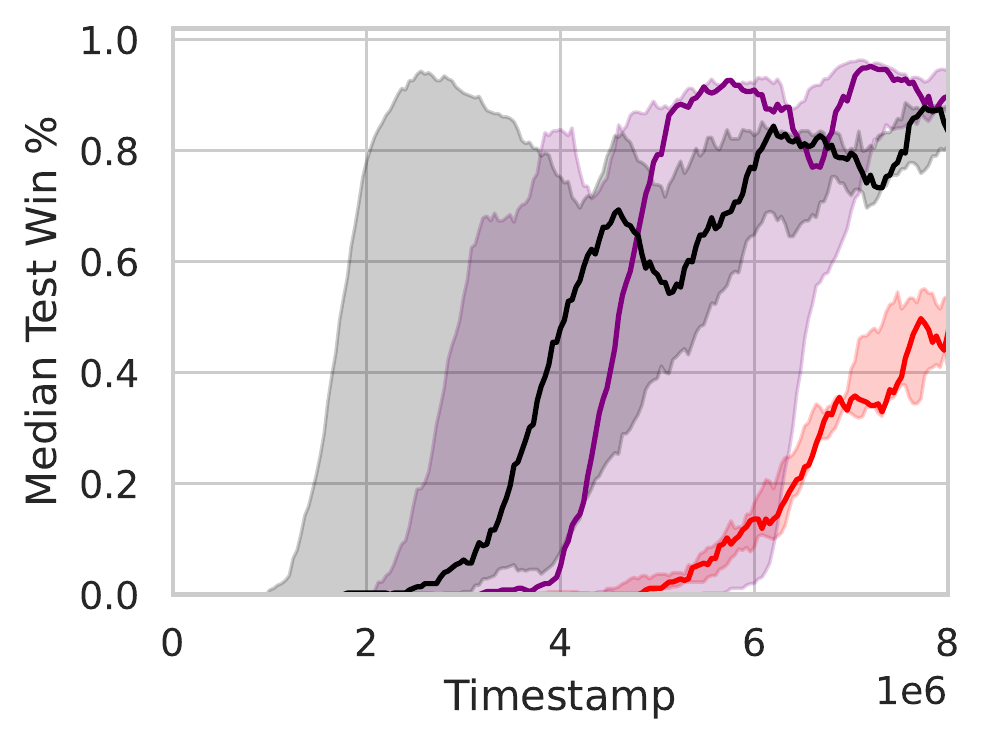}
\label{fig:win_rate_3s5z_vs_4s6z}
\vspace{-2em}
\subcaption{\texttt{3s5z\_vs\_4s6z}}
\end{minipage}
\begin{minipage}{0.3\textwidth} 
\includegraphics[width=\linewidth]{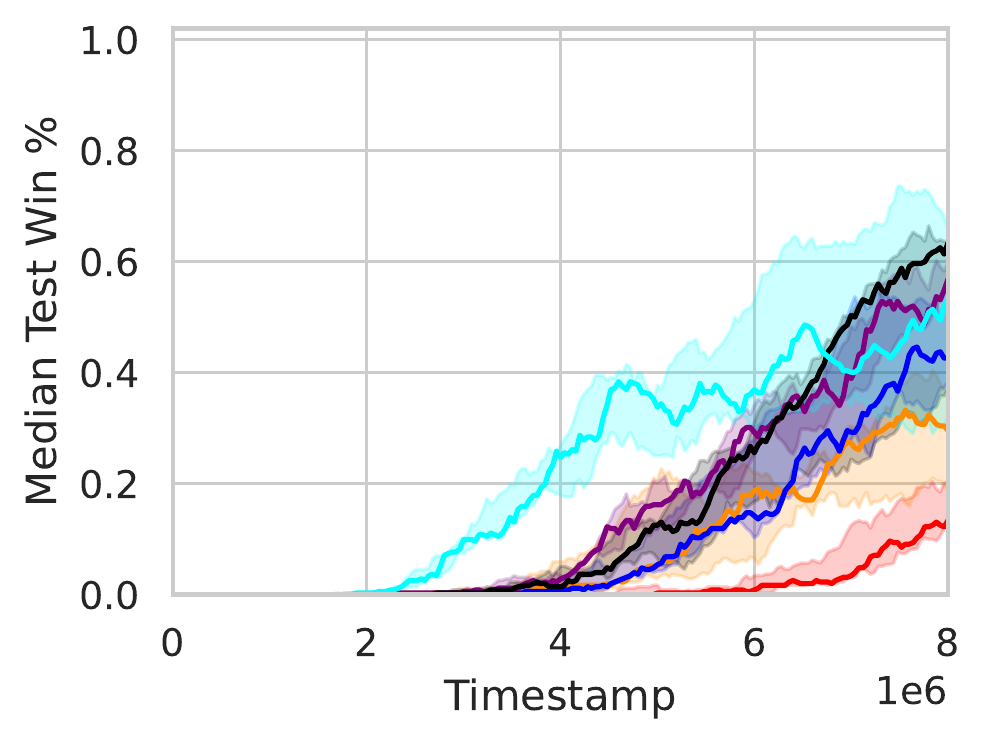}
\label{fig:win_rate_MMM2_7m2M1M_vs_8m4M1M}
\vspace{-2em}
\subcaption{\texttt{MMM2\_7m2M1M\_vs\_8m4M1M}}
\end{minipage} \\
\begin{minipage}{0.3\textwidth}
\includegraphics[width=\linewidth]{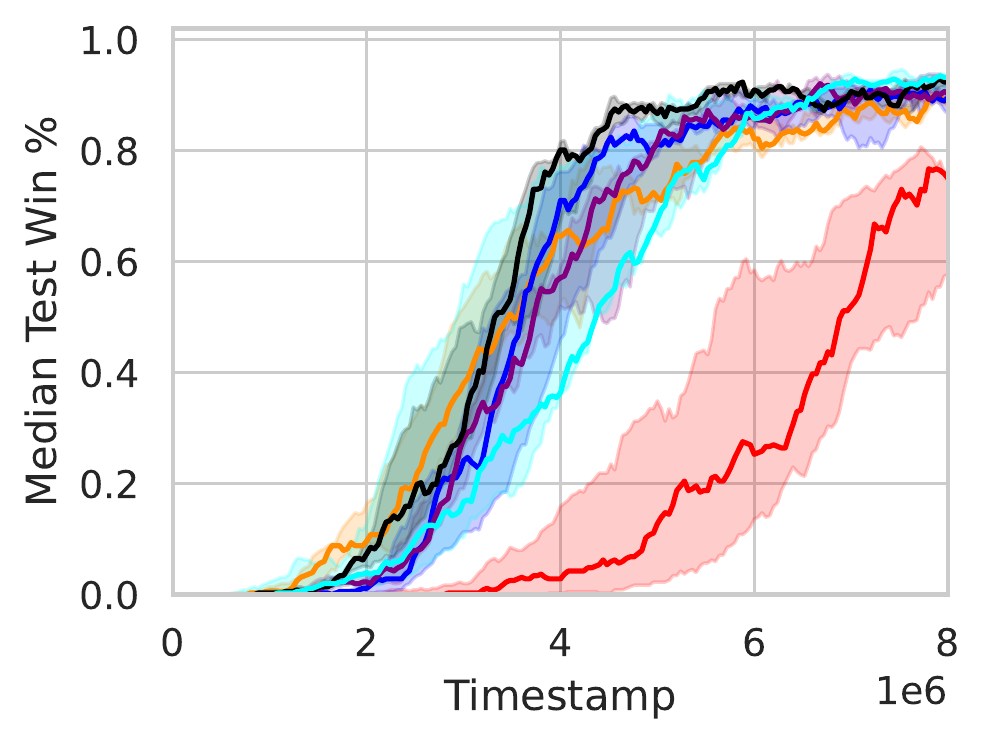}
\label{fig:win_rate_MMM2_7m2M1M_vs_9m3M1M}
\vspace{-2em}
\subcaption{\texttt{MMM2\_7m2M1M\_vs\_9m3M1M}}
\end{minipage}
\begin{minipage}{0.3\textwidth} 
\includegraphics[width=\linewidth]{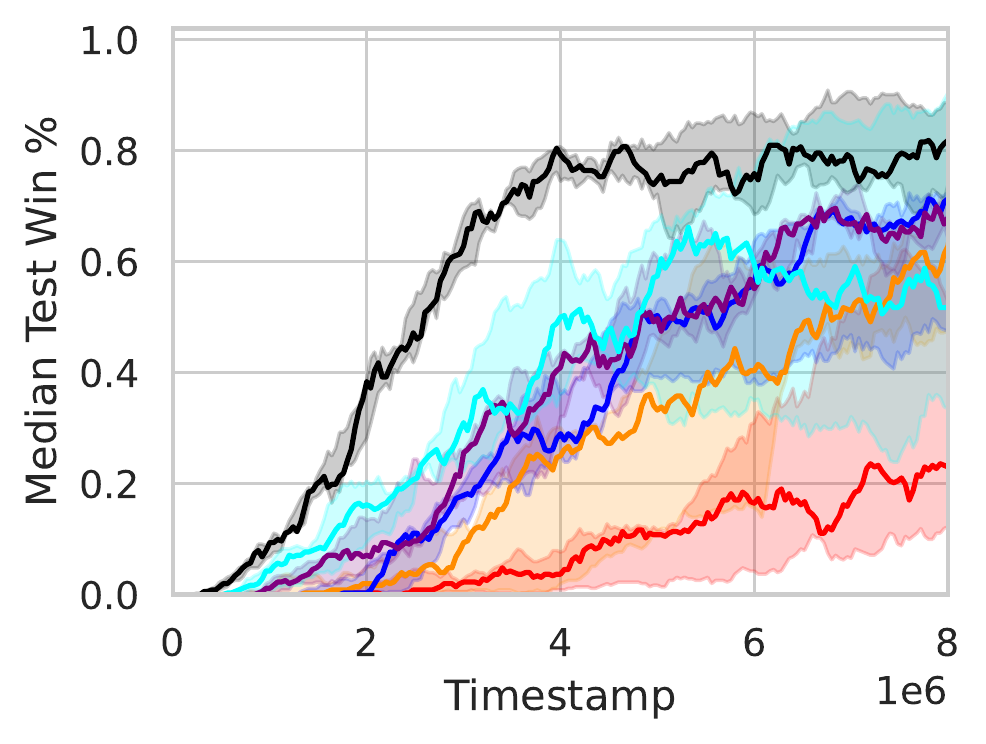}
\label{fig:win_rate_26m_vs_30m}
\vspace{-2em}
\subcaption{\texttt{26m\_vs\_30m}}
\end{minipage}
\begin{minipage}{0.3\textwidth} 
\includegraphics[width=\linewidth]{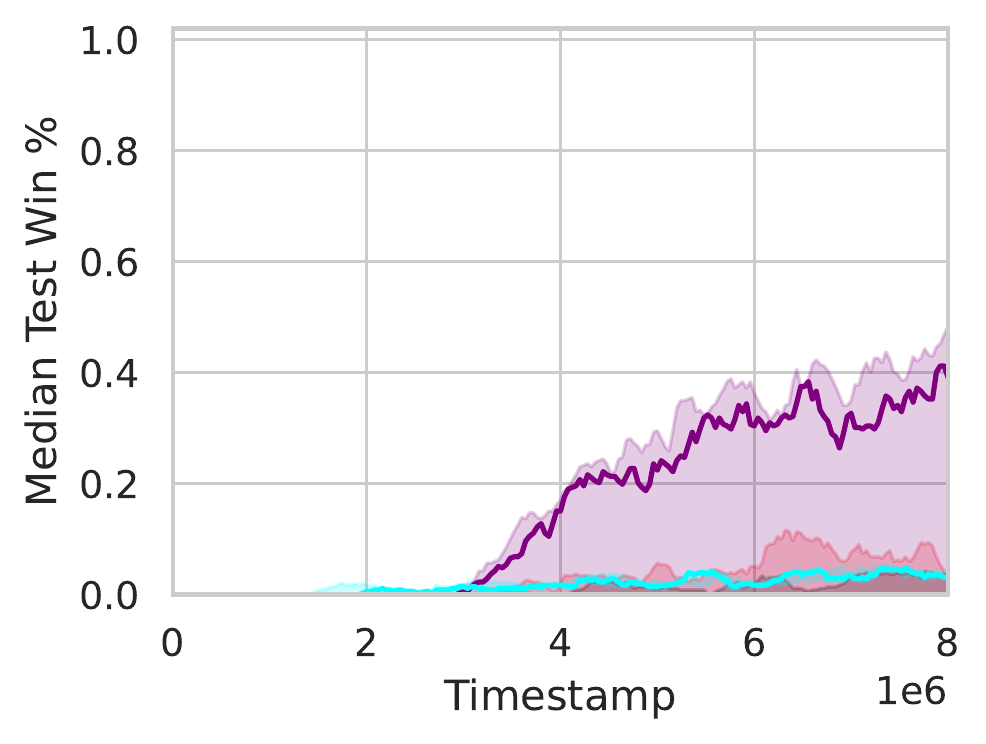}
\label{fig:win_rate_corridor_2z_vs_24zg}
\vspace{-2em}
\subcaption{\texttt{corridor\_2z\_vs\_24zg}}
\end{minipage}
\end{tabular}
\caption{The win rate curves evaluated on the six \ultrahard{} maps.}\vspace{-0.5em}
\label{fig:smac_results_win_rate_ultra_hard}
\end{figure*}


\subsection{Experimental Setup}
\label{subsec:experiment_results_setup_of_smac}

In this section, we describe the environmental setups, the hyperparameter tuning process, as well as the baselines in details.

\textbf{Environment.}
We verify the DFAC framework in the SMAC benchmark environments \citep{Samvelyan2019SMAC} built on the popular real-time strategy game StarCraft II. Instead of playing the full game, SMAC is developed for evaluating the effectiveness of MARL micro-management algorithms. Each environment in SMAC contains two teams. One team is controlled by a decentralized MARL algorithm, with the policies of the agents conditioned on their local observation histories. The other team consists of enemy units controlled by the built-in game artificial intelligence based on carefully handcrafted heuristics, which is set to its highest difficulty equal to seven. The overall objective is to maximize the win rate for each battle scenario, where the rewards employed in our experiments follow the default settings of SMAC. The default settings use \textit{shaped rewards} based on the damage dealt, enemy units killed, and whether the RL agents win the battle. If there is no healing unit in the enemy team, the maximum return of an episode (i.e., the score) is $20$; otherwise, it may exceed $20$, since enemies may receive more damages after healing or being healed.

The environments in SMAC are categorized into three different levels of difficulties: \textit{Easy}, \textit{Hard}, and \textit{Super Hard} scenarios~\citep{Samvelyan2019SMAC}. In this paper, we focus on all \textit{Super Hard} scenarios including (a) \texttt{6h\_vs\_8z},  (b) \texttt{3s5z\_vs\_3s6z}, (c) \texttt{MMM2}, (d) \texttt{27m\_vs\_30m}, and (e) \texttt{corridor}, since these scenarios have not been properly addressed in the previous literature without the use of additional assumptions such as intrinsic reward signals~\citep{Du2019LIIR}, explicit communication channels~\citep{Zhang2019VBN,Wang2019NDQ}, common knowledge shared among the agents~\citep{De2019MACKRL,Wang2020ROMA}, and so on. Four of these scenarios have their maximum scores higher than $20$. In \texttt{3s5z\_vs\_3s6z}, the enemy \textit{Stalkers} have the ability to regenerate shields; in \texttt{MMM2}, the enemy \textit{Medivacs} can heal other units; in \texttt{6h\_vs\_8z} and \texttt{corridor}, the enemy \textit{Zerglings} slowly regenerate their own health. The StarCraft version we used is 4.10.

\textbf{Hyperparameters.}
For all of our experimental results, the training length is set to 8M timesteps, where the agents are evaluated every 40k timesteps with 32 independent runs. The curves presented in this section are generated based on five different random seeds. The solid lines represent the median win rate, while the shaded areas correspond to the $25^{\text{th}}$ to $75^{\text{th}}$ percentiles. For a better visualization, the presented curves are smoothed by a moving average filter with its window size set to 11. 

We tuned the hyperparameters of both the baselines and their distributional variants by selecting their hidden layer sizes from $\{32,64,128,256,512\}$ and choose the best ones. The quantile samples of \diql{} (i.e., a distributional variant of IQL) and \ddn{} are simply set to $\numberofquantiles{}=\numberofquantilesamples{}=1$, since they do not require the calculation of the expected value during the optimization process. As for \dmix{} and DPLEX, the numbers of quantile samples are set to $\numberofquantiles{}=\numberofquantilesamples{}=8$ as in~\cite{Dabney2018IQN}. The optimizers follow those used in DQN and IQN. All of the other hyperparameters follow those used in SMAC. Table~\ref{tab:hyperparameter} lists the hyperparameters adopted for the baselines and their distributional variants.

\textbf{Baselines.}
We select IQL, VDN, QMIX, and QPLEX as our baseline methods, and compare them with their distributional variants in our experiments. The configurations are optimized so as to provide the best performance for each of the methods considered.
Since we tuned the hyperparameters of the baselines, their performances are better than those reported in~\citep{Samvelyan2019SMAC}. The hyperparameter searching process is detailed in the supplementary material.


\subsection{Independent Learners}
\label{subsec:experiment_results_independent_learners}

In order to validate our assumption that distributional RL is beneficial to the MARL domain, we first employ the simplest training algorithm, IQL, and extend it to its distributional variant, called \diql{}. \diql{} is simply a modified IQL that uses IQN as its underlying RL algorithm without any additional modification or enhancements~\citep{Matignon2007Hysteretic,Lyu2020LikelihoodQuantile}.

From Figs.~\ref{fig:smac_results_win_rate}(a)-\ref{fig:smac_results_win_rate}(e) and Tables~\ref{table:smac_results_win_rate_all} and \ref{table:smac_results_score_all}, it is observed that \diql{} is superior to IQL even without utilizing any value function factorization methods. This validates that distributional RL has beneficial influences on MARL, when it is compared to RL approaches based only on expected values.


\subsection{Value Function Factorization Methods}
\label{subsec:experiment_results_super_hard}

In order to inspect the effectiveness and impacts of DFAC on learning curves, win rates, and scores, we next summarize the results of the baselines as well as their DFAC variants on the \textit{Super Hard} scenarios in Fig.~\ref{fig:smac_results_win_rate}(a)-(e) and Tables~\ref{table:smac_results_win_rate_all} and \ref{table:smac_results_score_all}.

Fig.~\ref{fig:smac_results_win_rate}(a)-(e) plot the learning curves of the baselines and their DFAC variants, with the final win rates presented in Table~\ref{table:smac_results_win_rate_all}, and their final scores reported in Table~\ref{table:smac_results_score_all}. The win rates indicate how often do the player's team wins, while the scores represent how well do the player's team performs. Despite the fact that SMAC's objective is to maximize the win rate, the true optimization goal of MARL algorithms is the averaged score. In fact, these two metrics are not always positively correlated (e.g., VDN and QMIX in \texttt{6h\_vs\_8z} and \texttt{3s5z\_vs\_3s6z}, and QMIX and \dmix{} in \texttt{3s5z\_vs\_3s6z}).

It can be observed that the learning curves of \ddn{}, \dmix{}, and \dplex{} grow faster and achieve higher final win rates than their corresponding baselines. In the most difficult map: \texttt{6h\_vs\_8z}, most of the methods fail to learn an effective policy except for the DFAC variants. The evaluation results also show that the DFAC variants are capable of performing consistently well across all \superhard{} maps with high win rates. In addition to the win rates, Table~\ref{table:smac_results_score_all} further presents the final averaged scores achieved by each method, and provides deeper insights into the advantages of the DFAC framework by quantifying the performances of the learned policies of different methods.

The improvements in win rates and scores are due to the benefits offered by distributional RL~\citep{Lyle2019Comparative}, which enables the distributional variants to work more effectively in MARL environments. Moreover, the evaluation results reveal that \ddn{} performs especially well in most environments despite its simplicity.


\subsection{Evaluation Results on the Ultra Hard Maps of SMAC}
\label{subsec:experiment_results_ultra_hard}

Based on the five \superhard{} maps, we further designed six \ultrahard{} maps for evaluating the capability of DFAC. The \ultrahard{} maps are guaranteed to be harder than the \superhard{} maps, since they contain either increased number of units in the enemy team or decreased number of controllable units in the player's team. The detailed comparison between the two sets of maps is presented in Table~\ref{table:smac_maps}.

The evaluation results on the \ultrahard{} maps are reported in Fig.~\ref{fig:smac_results_win_rate_ultra_hard} and Tables~\ref{table:smac_results_win_rate_all} and \ref{table:smac_results_score_all} with five independent runs. It is observed that \ddn{} achieves outstanding performance in most of the maps despite the simplicity of its factorization function. For reproducibility, all \ultrahard{} maps can be found in our GitHub repository (\href{https://github.com/j3soon/dfac-extended}{https://github.com/j3soon/dfac-extended}), along with the gameplay recording videos.

\begin{table*}[t]
\small
\centering
\caption{The median win rate percentage ($\%$) of five independent test runs.}
\begin{tabular}{l|l|rrrr|rrrr}
\toprule
 & Map                     & IQL   & VDN   & QMIX  & QPLEX  & DIQL  & DDN            & DMIX & DPLEX \\
\midrule
\multirow{5}{*}{\rotatebox[origin=c]{90}{\superhard{}}}
& \texttt{6h\_vs\_8z}     &  0.00 &  0.00 & 12.78 &  0.00 &  0.00 & \textbf{83.92} & 49.43 & 43.75 \\
& \texttt{3s5z\_vs\_3s6z} & 29.83 & 89.20 & 67.22 & 84.38 & 62.22 & \textbf{94.03} & 91.08 & 90.62 \\
& \texttt{MMM2}           & 68.92 & 89.20 & 92.44 & 96.88 & 85.23 & \textbf{97.22} & 95.11 & 96.88 \\
& \texttt{27m\_vs\_30m}   &  2.27 & 63.12 & 84.77 & 78.12 &  6.02 & \textbf{91.48} & 85.45 & 90.62 \\
& \texttt{corridor}       & 84.87 & 85.34 & 37.61 & 75.00 & 91.62 & \textbf{95.40} & 90.45 & 81.25 \\
\midrule
\multirow{6}{*}{\rotatebox[origin=c]{90}{\ultrahard{}}}
& \texttt{6h\_vs\_9z}               & - &  0.00 &  \textbf{1.14} & 0.00 & - & 0.28  &  0.00 & 0.00 \\
& \texttt{3s5z\_vs\_4s6z}           & - & 47.16 &  0.00 & 0.00 & - & \textbf{89.77} & 83.52 & 0.00 \\
& \texttt{MMM2\_7m2M1M\_vs\_8m4M1M} & - & 13.35 & 29.55 & 46.88 & - & 56.82 & \textbf{63.35} & 50.00 \\
& \texttt{MMM2\_7m2M1M\_vs\_9m3M1M} & - & 75.00 & 88.64 & 90.62 & - & 90.34 & \textbf{92.33} & 90.62 \\
& \texttt{26m\_vs\_30m}             & - & 23.01 & 62.78 & 78.12 & - & 67.90 & \textbf{81.82} & 59.38 \\
& \texttt{corridor\_2z\_vs\_24zg}   & - &  0.00 &  0.00 & 0.00 & - & \textbf{41.19} &  0.00 & 3.12 \\
\bottomrule
\end{tabular}
\label{table:smac_results_win_rate_all}
\end{table*}

\begin{table*}[t]
\small
\centering
\caption{The averaged scores of five independent test runs.}
\begin{tabular}{l|l|rrrr|rrrr}
\toprule
 & Map                     & IQL   & VDN   & QMIX  & QPLEX  & DIQL  & DDN            & DMIX & DPLEX \\
\midrule
\multirow{5}{*}{\rotatebox[origin=c]{90}{\superhard{}}}
& \texttt{6h\_vs\_8z}     & 13.78 & 15.41 & 14.37 & 15.95 & 14.94 & \textbf{19.40} & 17.14 & 17.88 \\
& \texttt{3s5z\_vs\_3s6z} & 16.54 & 19.75 & 20.16 & 20.42 & 17.52 & \textbf{20.94} & 19.70 & 20.27 \\
& \texttt{MMM2}           & 17.50 & 19.36 & 19.42 & 19.60 & 19.21 & \textbf{20.90} & 19.87 & 19.93 \\
& \texttt{27m\_vs\_30m}   & 14.01 & 18.45 & 19.41 & 19.33 & 14.45 & \textbf{19.71} & 19.43 & 19.62 \\
& \texttt{corridor}       & 19.42 & 19.47 & 15.07 & 18.73 & 19.68 & \textbf{20.00} & 19.66 & 19.08 \\
\midrule
\multirow{6}{*}{\rotatebox[origin=c]{90}{\ultrahard{}}}
& \texttt{6h\_vs\_9z}               & - & 13.57 & 12.37 & 13.86 & - & \textbf{16.00} & 13.73 & 14.84 \\
& \texttt{3s5z\_vs\_4s6z}           & - & 17.16 & 13.09 & 13.60 & - & \textbf{19.65} & 18.61 & 14.99 \\
& \texttt{MMM2\_7m2M1M\_vs\_8m4M1M} & - & 13.13 & 14.40 & 15.52 & - & \textbf{16.50} & 16.24 & 15.89 \\
& \texttt{MMM2\_7m2M1M\_vs\_9m3M1M} & - & 17.30 & 19.01 & 19.06 & - & \textbf{19.45} & 19.33 & 19.40 \\
& \texttt{26m\_vs\_30m}             & - & 16.69 & 18.23 & 18.66 & - & 18.49 & \textbf{19.17} & 18.49 \\
& \texttt{corridor\_2z\_vs\_24zg}   & - &  7.78 &  4.80 & 6.44 & - & \textbf{11.10} &  7.41 & 10.71 \\
\bottomrule
\end{tabular}
\label{table:smac_results_score_all}
\end{table*}

\begin{table*}[t]
\footnotesize
\centering
\caption{A comparison between the \superhard{} maps and the \ultrahard{} maps.}
\begin{tabular}{l|l|l|l}
\toprule
 & Map                     & Player's Team & Enemy's Team \\
\midrule
\multirow{5}{*}{\rotatebox[origin=c]{90}{\superhard{}}}
& \texttt{6h\_vs\_8z}     & 6 Hydralisks & 8 Zealots \\
& \texttt{3s5z\_vs\_3s6z} & 3 Stalkers \& 5 Zealots & 3 Stalkers \& 6 Zealots \\
& \texttt{MMM2}           & 7 Marines, 2 Marauders \& 1 Medivac & 8 Marines, 3 Marauders \& 1 Medivac \\
& \texttt{27m\_vs\_30m}   & 27 Marines & 30 Marines \\
& \texttt{corridor}       & 6 Zealots & 24 Zerglings \\
\midrule
\multirow{6}{*}{\rotatebox[origin=c]{90}{\ultrahard{}}}
& \texttt{6h\_vs\_9z}     & 6 Hydralisks & 9 Zealots \\
& \texttt{3s5z\_vs\_4s6z} & 3 Stalkers \& 5 Zealots & 4 Stalkers \& 6 Zealots \\
& \texttt{MMM2\_7m2M1M\_vs\_8m4M1M} & 7 Marines, 2 Marauders \& 1 Medivac & 8 Marines, 4 Marauders \& 1 Medivac \\
& \texttt{MMM2\_7m2M1M\_vs\_9m3M1M} & 7 Marines, 2 Marauders \& 1 Medivac & 9 Marines, 3 Marauders \& 1 Medivac \\
& \texttt{26m\_vs\_30m}   & 26 Marines & 30 Marines \\
& \texttt{corridor\_2z\_vs\_24zg} & 2 Zealots & 24 Zerglings \\
\bottomrule
\end{tabular}
\label{table:smac_maps}
\end{table*}


\subsection{Computational Infrastructure}

In our experiments, NVIDIA DGX-1 clusters were used as the fundamental computing infrastructure, where each DGX-1 contains eight V100 GPUs, two 20-core CPUs, and 512 GB memory. Our experiments were performed on docker instances, where each instance is allocated with 50 GB DDR4 memory, an eight-core Xeon E5-2698 CPU, 1 TB SSD storage, and either a V100-16G or a V100-32G GPU. On an average, the experiment for each map takes around forty hours to train for eight million timesteps, depending on the type of the SMAC map and the chosen algorithm.

\section{Discussions and Outlook}

In this section, we provide discussions on the SMAC benchmark and propose some future directions to extend DFAC.

\textbf{The Performance Metric.} In the original SMAC paper, the authors proposed to use the median test win rates as the main performance metric. However, their optimization goal is actually the discounted scores instead of the win rates. Therefore, the averaged scores of multiple test runs are better metrics for comparing the performance of different methods.

\textbf{Enemy Sight Range.} By inspecting the learned policies of \ddn{}, we observed that the agents can exploit enemies' limited sight ranges by luring out and eliminating a few enemies while keeping distances from the remainings in \texttt{corridor\_2z\_vs\_24zg}. To make the map more challenging, the enemies can be configured to have infinite sight ranges when designing the map.

\textbf{The Reward Function.} The reward function is defined as the accumulated damages dealt to the enemies. However, the enemies may regenerate their health and shields. This may lead to a scenario that the agents intentionally stop attacking the enemies with low health and attack them later after their health is regenerated, allowing the agents to collect more rewards. Such a behavior can be prevented by providing no reward signal for regenerated health.

\textbf{Expressiveness and Performance.} 
According to the results presented in Section~\ref{sec:experiment_results}, the \superhard{} maps can be solved under the assumption of \additivity{}, since \ddn{} is able to achieve high win rates on all of these maps.
These results indicate that the other methods with additional expressiveness have no theoretical advantage over \ddn{} on these \superhard{} maps, and may perform worse due to the instability caused by their factorization functions. This further verifies the fact that the expressiveness is not necessarily correlated with the performance in SMAC, which is often misinterpreted in the current literature.

\textbf{Network Size and Training Steps.} 
In the original SMAC paper, the authors did not tune the size of the policy networks and only trained them for two million timesteps. According to our experiments, we found that using larger policy networks and training them for more timesteps does improve the performance by a noticeable margin.

\textbf{Future Extensions.} Based on the DFAC framework, many SARL techniques that require the shapes of the return distributions may be incorporated into the MARL domain, such as exploration~\citep{Nikolov2019IDS,Zhang2019QUOTA,Mavrin2019DLTV} and risk-aware policies~\citep{Xia2020risk}. Aside from the simple shape function used in this work, future endeavors may include more complex shape functions with learnable weights. It would also be interesting to incorporate other distributional RL algorithms into the DFAC framework aside from C51 and IQN, such as quantile regression DQN (QR-DQN)~\citep{Dabney2018QR-DQN}, expectile distributional RL (EDRL)~\citep{Rowland2019ER-DQN}, fully parameterized quantile function (FQF)~\citep{Yang2019FQF}, and maximum mean discrepancy RL (MMDRL)~\citep{Nguyen2021MMDRL}.

\begin{table*}[t]
\small
\centering
\caption{A summary of the optimal hidden state sizes of the baseline methods and their distributional variants.}
\begin{tabular}{l|l|cccc|cccc}
\toprule
 & Map & IQL & VDN & QMIX & QPLEX & DIQL & DDN & DMIX & DPLEX \\
\midrule
\multirow{5}{*}{\rotatebox[origin=c]{90}{\superhard{}}}
& \texttt{6h\_vs\_8z}     & 128 & 128 & 256 & 256 & 512 & 512 & 256 & 512 \\
& \texttt{3s5z\_vs\_3s6z} & 512 & 128 & 128 & 128 & 256 & 512 & 256 & 512 \\
& \texttt{MMM2}           & 256 & 64  & 64  & 64  & 512 & 512 & 256 & 256 \\
& \texttt{27m\_vs\_30m}   & 256 & 64  & 64  & 64  & 512 & 128 & 128 & 128 \\
& \texttt{corridor}       & 256 & 128 & 256 & 64  & 512 & 128 & 64  & 512 \\
\midrule
\multirow{6}{*}{\rotatebox[origin=c]{90}{\ultrahard{}}}
& \texttt{6h\_vs\_9z}               & - & 256 & 64  & 256 & - & 128 & 512 & 256 \\
& \texttt{3s5z\_vs\_4s6z}           & - & 64  & 64  & 128 & - & 512 & 512 & 256 \\
& \texttt{MMM2\_7m2M1M\_vs\_8m4M1M} & - & 128 & 128 & 128 & - & 256 & 128 & 256 \\
& \texttt{MMM2\_7m2M1M\_vs\_9m3M1M} & - & 64  & 64  & 128 & - & 256 & 128 & 128 \\
& \texttt{26m\_vs\_30m}             & - & 32  & 32  & 64 & - & 128 & 64  & 128 \\
& \texttt{corridor\_2z\_vs\_24zg}   & - & 128 & 512 & 256 & - & 256 & 512 & 512 \\
\bottomrule
\end{tabular}
\label{tab:hyperparameter}
\end{table*}

\section{Conclusion}
\label{sec:conclusion}

In this paper, we introduced a unified framework for cooperative MARL, called DFAC, for integrating distributional RL with value function factorization. DFAC is based on mean-shape decomposition to meet the Distributional IGM condition.
To realize DFAC in a computationally friendly manner, its shape function is implemented as convolution for C51, and quantile mixture for IQN.
DFAC's ability to factorize a joint return distribution into individual utility distributions was demonstrated in a matrix game. In order to validate the effectiveness of DFAC, we presented experimental results performed on all \superhard{} scenarios in SMAC for a number of MARL baseline methods as well as their DFAC variants. Moreover, we performed experiments on a number of self-designed \ultrahard{} maps to further validate the effectiveness of DFAC. The results showed that in most of the scenarios, \ddn{}, \dmix{}, and \dplex{} outperform VDN, QMIX, and QPLEX, respectively. DFAC can be extended to other value function factorization methods and offers an interesting research direction for future endeavors.

\acks{The authors gratefully acknowledge the support from the National Science and Technology Council (NSTC) in Taiwan under grant number MOST 111-2223-E-007-004-MY3. The authors would also like to express their appreciation for the donation of the GPUs from NVIDIA Corporation and NVIDIA AI Technology Center (NVAITC) used in this work. In addition, the authors extend their gratitude to the National Center for High-Performance Computing (NCHC) for providing the necessary computational and storage resources. Furthermore, the authors thank Kuan-Yu Chang for his helpful critiques of this research work.}

\newpage

\vskip 0.2in
\bibliography{references}

\begin{thebibliography}{47}
\providecommand{\natexlab}[1]{#1}
\providecommand{\url}[1]{\texttt{#1}}
\expandafter\ifx\csname urlstyle\endcsname\relax
  \providecommand{\doi}[1]{doi: #1}\else
  \providecommand{\doi}{doi: \begingroup \urlstyle{rm}\Url}\fi

\bibitem[Bellemare et~al.(2017)Bellemare, Dabney, and Munos]{Bellemare2017C51}
M.~G. Bellemare, W.~Dabney, and R.~Munos.
\newblock A distributional perspective on reinforcement learning.
\newblock In \emph{Proc. Int. Conf. on Machine Learning (ICML)}, pages
  449--458, Jul. 2017.

\bibitem[Bellemare et~al.(2019)Bellemare, Le~Roux, Castro, and
  Moitra]{Bellemare2019S51}
M.~G. Bellemare, N.~Le~Roux, P.~S. Castro, and S.~Moitra.
\newblock Distributional reinforcement learning with linear function
  approximation.
\newblock In \emph{Proc. Int. Conf. on Artificial Intelligence and Statistics
  (AISTATS)}, pages 2203--2211, Apr. 2019.

\bibitem[Bellemare et~al.(2022)Bellemare, Dabney, and Rowland]{BDR2022}
M.~G. Bellemare, W.~Dabney, and M.~Rowland.
\newblock \emph{Distributional Reinforcement Learning}.
\newblock MIT Press, 2022.
\newblock \url{http://www.distributional-rl.org}.

\bibitem[Da~Silva et~al.(2019)Da~Silva, Costa, and Stone]{Da2019IQLC51}
F.~L. Da~Silva, A.~H.~R. Costa, and P.~Stone.
\newblock Distributional reinforcement learning applied to robot soccer
  simulation.
\newblock In \emph{Workshop on Adaptive Learning Agents (ALA) at AAMAS}, Jun.
  2019.

\bibitem[Dabney et~al.(2018{\natexlab{a}})Dabney, Ostrovski, Silver, and
  Munos]{Dabney2018IQN}
W.~Dabney, G.~Ostrovski, D.~Silver, and R.~Munos.
\newblock Implicit quantile networks for distributional reinforcement learning.
\newblock In \emph{Proc. Int. Conf. on Machine Learning (ICML)}, pages
  1096--1105, Jul. 2018{\natexlab{a}}.

\bibitem[Dabney et~al.(2018{\natexlab{b}})Dabney, Rowland, Bellemare, and
  Munos]{Dabney2018QR-DQN}
W.~Dabney, M.~Rowland, M.~G. Bellemare, and R.~Munos.
\newblock Distributional reinforcement learning with quantile regression.
\newblock In \emph{Proc. AAAI Conf. on Artificial Intelligence (AAAI)}, pages
  2892--2901, Feb. 2018{\natexlab{b}}.

\bibitem[de~Witt et~al.(2019)]{De2019MACKRL}
C.~S. de~Witt et~al.
\newblock Multi-agent common knowledge reinforcement learning.
\newblock In \emph{Advances in Neural Information Processing Systems (NIPS)},
  pages 9924--9935, Dec. 2019.

\bibitem[Du et~al.(2019)]{Du2019LIIR}
Y.~Du et~al.
\newblock {LIIR}: Learning individual intrinsic reward in multi-agent
  reinforcement learning.
\newblock In \emph{Advances in Neural Information Processing Systems (NIPS)},
  pages 4405--4416, Dec. 2019.

\bibitem[Foerster et~al.(2018)Foerster, Farquhar, Afouras, Nardelli, and
  Whiteson]{Foerster2018COMA}
J.~N. Foerster, G.~Farquhar, T.~Afouras, N.~Nardelli, and S.~Whiteson.
\newblock Counterfactual multi-agent policy gradients.
\newblock In \emph{Proc. AAAI Conf. on Artificial Intelligence (AAAI)}, Feb.
  2018.

\bibitem[Guestrin et~al.(2001)Guestrin, Koller, and Parr]{Guestrin2001Utility}
C.~Guestrin, D.~Koller, and R.~Parr.
\newblock Multiagent planning with factored {MDPs}.
\newblock In \emph{Advances in Neural Information Processing Systems (NIPS)},
  2001.

\bibitem[Hu et~al.(2020)Hu, Harding, Wu, and Liao]{Hu2020QRMIX}
J.~Hu, S.~A. Harding, H.~Wu, and S.~W. Liao.
\newblock {QR-MIX}: Distributional value function factorisation for cooperative
  multi-agent reinforcement learning.
\newblock \emph{arXiv preprint arXiv:2009.04197}, 2020.

\bibitem[Iqbal and Sha(2019)]{Iqbal2019MAAC}
S.~Iqbal and F.~Sha.
\newblock Actor-attention-critic for multi-agent reinforcement learning.
\newblock In \emph{Proc. Int. Conf. on Machine Learning (ICML)}, pages
  2961--2970, Jul. 2019.

\bibitem[Karvanen(2006)]{Karvanen2006QuantileMixture}
J.~Karvanen.
\newblock Estimation of quantile mixtures via {L}-moments and trimmed
  {L}-moments.
\newblock \emph{Computational Statistics \& Data Analysis}, pages 947--959,
  Nov. 2006.

\bibitem[Lin et~al.(2019)]{Lin2019DR-DRL}
Z.~Lin et~al.
\newblock Distributional reward decomposition for reinforcement learning.
\newblock In \emph{Advances in Neural Information Processing Systems (NIPS)},
  pages 6212--6221, Dec. 2019.

\bibitem[Lowe et~al.(2017)]{Lowe2017MADDPG}
R.~Lowe et~al.
\newblock Multi-agent actor-critic for mixed cooperative-competitive
  environments.
\newblock In \emph{Advances in Neural Information Processing Systems (NIPS)},
  pages 6379--6390, Dec. 2017.

\bibitem[Lyle et~al.(2019)Lyle, Bellemare, and Castro]{Lyle2019Comparative}
C.~Lyle, M.~G. Bellemare, and P.~S. Castro.
\newblock A comparative analysis of expected and distributional reinforcement
  learning.
\newblock In \emph{Proc. AAAI Conf. on Artificial Intelligence (AAAI)}, pages
  4504--4511, Feb. 2019.

\bibitem[Lyu and Amato(2020)]{Lyu2020LikelihoodQuantile}
X.~Lyu and C.~Amato.
\newblock Likelihood quantile networks for coordinating multi-agent
  reinforcement learning.
\newblock In \emph{Proc. Int. Conf. on Autonomous Agents and MultiAgent Systems
  (AAMAS)}, pages 798--806, May 2020.

\bibitem[Matignon et~al.(2007)Matignon, Laurent, and
  Fort-Piat]{Matignon2007Hysteretic}
L.~Matignon, G.~Laurent, and N.~Fort-Piat.
\newblock Hysteretic {Q-Learning}: an algorithm for decentralized reinforcement
  learning in cooperative multi-agent teams.
\newblock In \emph{Int. Conf. on Intelligent Robots and Systems (IROS)}, pages
  64--69, Dec. 2007.

\bibitem[Mavrin et~al.(2019)Mavrin, Yao, Kong, Wu, and Yu]{Mavrin2019DLTV}
B.~Mavrin, H.~Yao, L.~Kong, K.~Wu, and Y.~Yu.
\newblock Distributional reinforcement learning for efficient exploration.
\newblock In \emph{Proc. Int. Conf. on Machine Learning (ICML)}, pages
  4424--4434, Jul. 2019.

\bibitem[Mnih et~al.(2015)]{Mnih2015DQN}
V.~Mnih et~al.
\newblock Human-level control through deep reinforcement learning.
\newblock \emph{Nature}, 518\penalty0 (7540):\penalty0 529--533, Feb. 2015.

\bibitem[Nguyen et~al.(2021)Nguyen, Gupta, and Venkatesh]{Nguyen2021MMDRL}
T.~T. Nguyen, S.~Gupta, and S.~Venkatesh.
\newblock Distributional reinforcement learning via moment matching.
\newblock In \emph{Proceedings of the AAAI Conference on Artificial
  Intelligence (AAAI)}, Feb. 2021.

\bibitem[Nikolov et~al.(2019)Nikolov, Kirschner, Berkenkamp, and
  Krause]{Nikolov2019IDS}
N.~Nikolov, J.~Kirschner, F.~Berkenkamp, and A.~Krause.
\newblock Information-directed exploration for deep reinforcement learning.
\newblock In \emph{Proc. Int. Conf. on Learning Representations (ICLR)}, May
  2019.

\bibitem[Oliehoek and Amato(2016)]{Oliehoek2016CTDE}
F.~A. Oliehoek and C.~Amato.
\newblock \emph{A Concise Introduction to Decentralized POMDPs}.
\newblock Springer, 2016.

\bibitem[Omidshafiei et~al.(2017)Omidshafiei, Pazis, Amato, How, and
  Vian]{omidshafiei2017deep}
S.~Omidshafiei, J.~Pazis, C.~Amato, J.~P. How, and J.~Vian.
\newblock Deep decentralized multi-task multi-agent reinforcement learning
  under partial observability.
\newblock In \emph{Proc. Int. Conf. on Machine Learning (ICML)}, pages
  2681--2690, Aug. 2017.

\bibitem[Peng et~al.(2021)]{Peng2021FACMAC}
B.~Peng et~al.
\newblock {FACMAC}: Factored multi-agent centralised policy gradients.
\newblock In \emph{Advances in Neural Information Processing Systems (NIPS)},
  pages 12208--12221, Dec. 2021.

\bibitem[Qiu et~al.(2021)]{Anonymous2020RMIX}
W.~Qiu et~al.
\newblock {RMIX}: Learning risk-sensitive policies for cooperative
  reinforcement learning agents.
\newblock \emph{Advances in Neural Information Processing Systems (NIPS)},
  pages 23049--23062, Dec. 2021.

\bibitem[Rahaman et~al.(2019)]{rahaman2019spectral}
N.~Rahaman et~al.
\newblock On the spectral bias of neural networks.
\newblock In \emph{Proc. Int. Conf. on Machine Learning (ICML)}, pages
  5301--5310, Jul. 2019.

\bibitem[Rashid et~al.(2020)Rashid, Farquhar, Peng, and
  Whiteson]{Rashid2020WQMiX}
T.~Rashid, G.~Farquhar, B.~Peng, and S.~Whiteson.
\newblock Weighted {QMIX}: Expanding monotonic value function factorisation.
\newblock \emph{Advances in Neural Information Processing Systems (NIPS)},
  pages 10199--10210, Dec. 2020.

\bibitem[Rashid et~al.(2018)]{Rashid2018QMIX}
T.~Rashid et~al.
\newblock {QMIX}: Monotonic value function factorisation for deep multi-agent
  reinforcement learning.
\newblock In \emph{Proc. Int. Conf. on Machine Learning (ICML)}, pages
  4295--4304, Jul. 2018.

\bibitem[Rowland et~al.(2019)]{Rowland2019ER-DQN}
M.~Rowland et~al.
\newblock Statistics and samples in distributional reinforcement learning.
\newblock In \emph{Proc. Int. Conf. on Machine Learning (ICML)}, pages
  5528--5536, Jul. 2019.

\bibitem[Rowland et~al.(2021)]{Rowland2021TemporalDA}
M.~Rowland et~al.
\newblock Temporal difference and return optimism in cooperative multi-agent
  reinforcement learning.
\newblock In \emph{Workshop on Adaptive Learning Agents (ALA) at AAMAS}, May
  2021.

\bibitem[Samvelyan et~al.(2019)]{Samvelyan2019SMAC}
M.~Samvelyan et~al.
\newblock The {StarCraft} multi-agent challenge.
\newblock In \emph{Proc. Int. Conf. on Autonomous Agents and MultiAgent Systems
  (AAMAS)}, pages 2186--2188, May 2019.

\bibitem[Schaul et~al.(2015)Schaul, Horgan, Gregor, and Silver]{Schaul2015UVFA}
T.~Schaul, D.~Horgan, K.~Gregor, and D.~Silver.
\newblock Universal value function approximators.
\newblock In \emph{Proc. Int. Conf. on Machine Learning (ICML)}, pages
  1312--1320, Jul. 2015.

\bibitem[Son et~al.(2019)Son, Kim, Kang, Hostallero, and Yi]{Son2019QTRAN}
K.~Son, D.~Kim, W.~J. Kang, D.~E. Hostallero, and Y.~Yi.
\newblock {QTRAN}: Learning to factorize with transformation for cooperative
  multi-agent reinforcement learning.
\newblock In \emph{Proc. Int. Conf. on Machine Learning (ICML)}, pages
  5887--5896, Jul. 2019.

\bibitem[Sun et~al.(2021)Sun, Lee, and Lee]{sun21dfac}
W.~F. Sun, C.~K. Lee, and C.~Y. Lee.
\newblock {DFAC} framework: Factorizing the value function via quantile mixture
  for multi-agent distributional {Q-Learning}.
\newblock In \emph{Proc. Int. Conf. on Machine Learning (ICML)}, pages
  9945--9954, Jul. 2021.

\bibitem[Sunehag et~al.(2018)]{Sunehag2018VDN}
P.~Sunehag et~al.
\newblock Value-decomposition networks for cooperative multi-agent learning
  based on team reward.
\newblock In \emph{Proc. Int. Conf. on Autonomous Agents and MultiAgent Systems
  (AAMAS)}, pages 2085--2087, May 2018.

\bibitem[Tan(1993)]{Tan1993IQL}
M.~Tan.
\newblock Multi-agent reinforcement learning: Independent versus cooperative
  agents.
\newblock In \emph{Proc. Int. Conf. on Machine Learning (ICML)}, page
  330–337, Jun. 1993.

\bibitem[Wang et~al.(2021{\natexlab{a}})Wang, Ren, Han, Ye, and
  Zhang]{wang2021towards}
J.~Wang, Z.~Ren, B.~Han, J.~Ye, and C.~Zhang.
\newblock Towards understanding cooperative multi-agent {Q-Learning} with value
  factorization.
\newblock \emph{Advances in Neural Information Processing Systems (NIPS)}, Dec.
  2021{\natexlab{a}}.

\bibitem[Wang et~al.(2021{\natexlab{b}})Wang, Ren, Liu, Yu, and
  Zhang]{Wang2020QPLEX}
J.~Wang, Z.~Ren, T.~Liu, Y.~Yu, and C.~Zhang.
\newblock {QPLEX}: Duplex dueling multi-agent {Q-Learning}.
\newblock In \emph{Proc. Int. Conf. on Learning Representations (ICLR)}, May
  2021{\natexlab{b}}.

\bibitem[Wang et~al.(2020{\natexlab{a}})Wang, Dong, Lesser, and
  Zhang]{Wang2020ROMA}
T.~Wang, H.~Dong, V.~Lesser, and C.~Zhang.
\newblock Multi-agent reinforcement learning with emergent roles.
\newblock In \emph{Proc. Int. Conf. on Machine Learning (ICML)}, pages
  9876--9886, Jul. 2020{\natexlab{a}}.

\bibitem[Wang et~al.(2020{\natexlab{b}})Wang, Wang, Zheng, and
  Zhang]{Wang2019NDQ}
T.~Wang, J.~Wang, C.~Zheng, and C.~Zhang.
\newblock Learning nearly decomposable value functions via communication
  minimization.
\newblock In \emph{Proc. Int. Conf. on Learning Representations (ICLR)}, Apr.
  2020{\natexlab{b}}.

\bibitem[Wang et~al.(2020{\natexlab{c}})]{Wang2020ASN}
W.~Wang et~al.
\newblock Action semantics network: Considering the effects of actions in
  multiagent systems.
\newblock In \emph{Proc. Int. Conf. on Learning Representations (ICLR)}, Apr.
  2020{\natexlab{c}}.

\bibitem[Xia(2020)]{Xia2020risk}
L.~Xia.
\newblock Risk-sensitive {Markov} decision processes with combined metrics of
  mean and variance.
\newblock \emph{Production and Operations Management}, 29\penalty0
  (12):\penalty0 2808--2827, 2020.

\bibitem[Yang et~al.(2019)]{Yang2019FQF}
D.~Yang et~al.
\newblock Fully parameterized quantile function for distributional
  reinforcement learning.
\newblock In \emph{Proc. Conf. Advances in Neural Information Processing
  Systems (NeurIPS)}, pages 6190--6199, Dec. 2019.

\bibitem[Yang et~al.(2020)Yang, Hao, Liao, Shao, Chen, Liu, and
  Tang]{Yang2020Qatten}
Y.~Yang, J.~Hao, B.~Liao, K.~Shao, G.~Chen, W.~Liu, and H.~Tang.
\newblock Qatten: A general framework for cooperative multiagent reinforcement
  learning.
\newblock \emph{arXiv preprint arXiv:2002.03939}, 2020.

\bibitem[Zhang and Yao(2019)]{Zhang2019QUOTA}
S.~Zhang and H.~Yao.
\newblock {QUOTA}: The quantile option architecture for reinforcement learning.
\newblock In \emph{Proc. AAAI Conf. on Artificial Intelligence (AAAI)}, pages
  5797--5804, Feb. 2019.

\bibitem[Zhang et~al.(2019)Zhang, Zhang, and Lin]{Zhang2019VBN}
S.~Q. Zhang, Q.~Zhang, and J.~Lin.
\newblock Efficient communication in multi-agent reinforcement learning via
  variance based control.
\newblock In \emph{Advances in Neural Information Processing Systems (NIPS)},
  pages 3230--3239, Dec. 2019.

\end{thebibliography}

\end{document}